\definecolor{BlueRome}{HTML}{4287f5}
\definecolor{C1}{RGB}{52, 89, 149}
\definecolor{C2}{RGB}{251, 77, 61}
\definecolor{C3}{RGB}{3, 206, 164}
\definecolor{C4}{RGB}{202, 21, 81}
\definecolor{cyanRed}{RGB}{252, 5, 120}
\newtheorem{thm}{Theorem}
\newtheorem{lemma}[thm]{Lemma}
\newtheorem{crllr}[thm]{Corollary}
\theoremstyle{remark}
\newcommand*{\nn}{\nonumber}
\newcommand*{\id}{\mathds{1}}
\newcommand*{\mc}{\mathcal}
\newcommand*{\dg}{\dagger}
\newcommand*{\ex}{\mathrm{e}}
\DeclareMathOperator{\tr}{tr}
\newcommand*{\drho}{\psi}
\newcommand{\x}{\boldsymbol{x}}
\newcommand{\w}{\boldsymbol{w}}
\newcommand{\WU}[1]{W_{#1}}
\newcommand{\Wopt}{W_{\mathrm{univ}}}
\newcommand{\mbu}{\mathbb{U}}
\newcommand*{\mce}{\mathcal{E}}
\begin{document}
\title[]{Adversarial Robustness Guarantees for Quantum Classifiers}

\author{Neil Dowling}
\thanks{Equal contribution authors, listed in alphabetical order.}
\affiliation{Institut f\"ur Theoretische Physik, Universit\"at zu K\"oln, Z\"ulpicher Strasse 77, 50937 K\"oln, Germany}
\affiliation{School of Physics \& Astronomy, Monash University, Clayton, VIC 3800, Australia}

\author{Maxwell T. West}
\thanks{Equal contribution authors, listed in alphabetical order.}
\affiliation{School of Physics, The University of Melbourne, Parkville, VIC 3010, Australia}

\author{Angus Southwell}
\affiliation{School of Physics \& Astronomy, Monash University, Clayton, VIC 3800, Australia}

\author{Azar C. Nakhl} 
\affiliation{School of Physics, The University of Melbourne, Parkville, VIC 3010, Australia}

\author{Martin Sevior} 
\affiliation{School of Physics, The University of Melbourne, Parkville, VIC 3010, Australia}

\author{Muhammad Usman}
\affiliation{School of Physics, The University of Melbourne, Parkville, VIC 3010, Australia} \affiliation{Data61, CSIRO, Clayton, 3168, VIC, Australia}

\author{Kavan Modi}
\affiliation{Science, Mathematics and Technology Cluster, Singapore University of Technology and Design, \\8 Somapah Road, 487372 Singapore}
\address{School of Physics \& Astronomy, Monash University, Clayton, VIC 3800, Australia}

\pacs{}

\begin{abstract}
Despite their ever more widespread deployment throughout society, machine learning algorithms remain critically vulnerable to being spoofed by subtle adversarial tampering with their input data. The prospect of near-term quantum computers being capable of running {quantum machine learning} (QML) algorithms has therefore generated intense interest in their adversarial vulnerability. Here we show that quantum properties of QML algorithms can confer fundamental protections against such attacks, in certain scenarios guaranteeing robustness against classically-armed adversaries. We leverage tools from many-body physics to identify the quantum sources of this protection. Our results offer a theoretical underpinning of recent evidence which suggest quantum advantages in the search for adversarial robustness.
In particular, we prove that quantum classifiers are: (i) protected against weak perturbations of data drawn from the trained distribution, (ii) protected against local attacks if they are insufficiently scrambling, and (iii) show evidence that they are protected against universal adversarial attacks if they are sufficiently chaotic. Our analytic results are supported by numerical evidence demonstrating the applicability of our theorems and the resulting robustness of a quantum classifier in practice. This line of inquiry constitutes a concrete pathway to advantage in QML, orthogonal to the usually sought improvements in model speed or accuracy.
\end{abstract}

\keywords{Quantum chaos, Many-body quantum physics, OTOC}

\maketitle

\section{Introduction}
Ten years on from their initial discovery~\cite{szegedy2013intriguing,biggio2013evasion,goodfellow2014explaining}, adversarial attacks remain a potent weapon for deceiving even highly sophisticated machine learning (ML) models~\cite{chakraborty2021survey}. 
Remarkably, for example, powerful image classifiers can be fooled by carefully chosen perturbations which are almost invisible to a human eye~\cite{ilyas2019adversarial}, or even by changing the value of a single pixel~\cite{su2019one}.
Due to the accelerating delegation of important tasks to ML, and the tendency of empirical defense strategies to be later bypassed~\cite{athalye2018obfuscated}, the need for provable guarantees against such spoofing attempts is only growing~\cite{cohen2019certified,lecuyer2019certified}.

Concurrently, the increasing capabilities of quantum computers have generated significant research to determine whether quantum advantage may be expected in machine learning~\cite{biamonte2017quantum,abbas2021power,liu2021rigorous,holmes2022connecting}, but the extent to which they can be expected to deliver direct speed-ups remains unclear~\cite{mcclean2018barren,wang2021noise,holmes2022connecting,cerezo2021cost,larocca2022diagnosing,patti2021entanglement,ragone2023unified,diaz2023showcasing,fontana2023adjoint,larocca2024review,cerezo2023does}. 
It is therefore an opportune moment to search for a different kind of advantage in QML~\cite{west2023towards,NMI2023}. In fact, 
the field of quantum adversarial machine learning has generated considerable interest~\cite{lu2020quantum,liu2020vulnerability,du2021quantum,guan2021robustness,weber2021optimal,liao2021robust,kehoe2021defence,ren2022experimental,west2023towards,wu2023radio,west2023benchmarking,west2023drastic,khatun2024quantum,winderl2024constructing,berberich2023training}. Notably, in a series of recent papers, QML models were studied that indicated significantly increased adversarial robustness against classical adversaries~\cite{west2023benchmarking,west2023drastic,wu2023radio,khatun2024quantum} (Fig.~\ref{fig:1}(a)). However, these results are empirical, lacking a foundational understanding of the source of the advantage.

In this work we address this by supplying a sequence of provable quantum adversarial robustness guarantees for QML, 
in extremely broad yet practically relevant scenarios.
These rely on distinct properties of the encoding scheme, as well as on the dynamical complexity of the constituent quantum circuit. Our results include analytic theorems relying on the genuinely quantum properties of a QML architecture, offering robustness guarantees not applicable to classical ML.
These are further supported with probabilistic bounds and numerical results for a realistic quantum classifier model. 
These guarantees circumvent previous existence proofs of adversarial examples in QML~\cite{liu2020vulnerability,liao2021robust}, by restricting to the physically relevant case of a classical adversary whose allowable perturbations are constrained by the data encoding strategy employed by the model. More specifically, we study the robustness of QML models under three distinct attack scenarios: a weak perturbation designed to induce a misclassification for a target input classical state (data), a strong \textit{universal} perturbation~\cite{hendrik2017universal,gong2022universal} designed to induce misclassification on all states, and a strong local perturbation targeting a specific input state. 
Our results are summarised in Table~\ref{tab:results}.

\begin{figure}
  \includegraphics[width=0.48\textwidth]{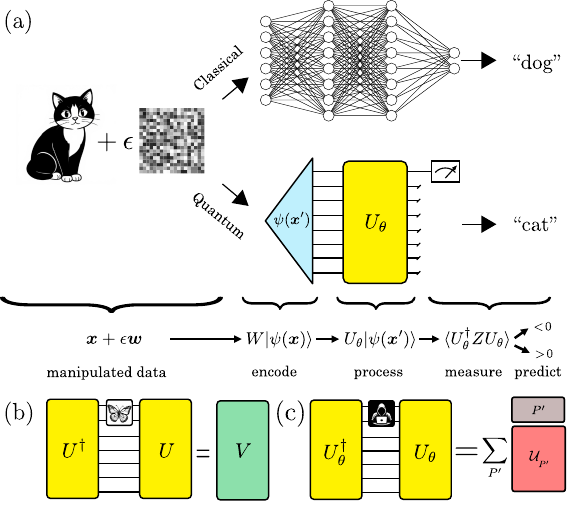}
  \caption{\textbf{Schematic of adversarial machine learning setting.} (a) Machine learning models are generally highly susceptible to extremely subtle adversarial tampering with their input data, but quantum models have been empirically found to be robust to attacks by classical adversaries~\cite{west2023benchmarking}. In the general quantum machine learning setting, a classical data string $\x$ is encoded in a state $\ket{\psi(\x)}$, a (trained) quantum algorithm $U_{\theta}$ is applied before measurement of some few-qubit operator $Z$. An adversarial attack can then be modeled by some change to the initial bit string $\x \to \x + \epsilon \w$, which is equivalent to the action of a unitary $W$ on the encoded state, $\ket{\x'} =  W\ket{\x} $. (b) Chaotic unitaries scramble information throughout quantum degrees of freedom in a many-body system. (c) It is difficult for an adversary to carefully manipulate a chaotic circuit in the precise way needed to induce misclassification. Here, $P'$ is some (spoofed) Pauli string which flips the measurement outcome, and $\mathcal{U}_{P'}$ is some unitary on the subsystem which is not measured. }
\label{fig:1}
\end{figure}

Our results are split into two categories. The first of these leverage the unitarity of quantum circuits, together with the atypical nature of training-set data in ML.

For the second set of results, techniques from the theory of many-body quantum chaos play a key role~\cite{Prosen2007,Prosen2007a,Prosen2009,Shenker_Stanford_2014,Maldacena_Shenker_Stanford_2016,Swingle2016,Roberts2016,Dubail_2017,Swingle2018,Jonay2018,Alba2019,Kos2020,Alba2021, Anand2021-yi,Rosa2022,dowling_scrambling_2023,dowling_operational_2024}. Viewing a variational algorithm as a many-body quantum system,
a trained circuit can be viewed as dynamics, with its depth serving as a proxy for a time parametrisation. In light of this, a variational circuit can be ascribed relevant properties from the field of many-body theory, such as whether it can effectively scramble information, whether certain observables from it can be classically simulatable, or whether it can be classed as non-integrable. In this work we derive a series of theorems dictating how these many-body properties can supply protection against adversaries. 
 
Our first such result is that a variational circuit needs to be scrambling for a local adversarial attack to be possible. This is measured by a quantity called the out-of-time-ordered correlator (OTOC), which dictates how information from an initially local operator spreads when it evolves according to the Heisenberg picture~\cite{Sekino_Susskind_2008,Shenker_Stanford_2014, Maldacena_Shenker_Stanford_2016, Swingle2016, Roberts2016,Swingle2018, Foini2019, PhysRevLett.124.140602, dowling_scrambling_2023}. The scrambling nature of explicit (backwards-applied) QML architectures has previously been observed~\cite{Shen2020,We2021}, suggesting that such circuits could be vulnerable to attack. However, in stark contrast to the necessity of scrambling, we will argue that a circuit which exhibits quantum chaos is, in fact, protected against universal adversaries. By `chaos' we mean the linear scaling of local-operator entanglement (LOE)~\cite{Prosen2007,Prosen2007a}. This operator quantity measures the complexity of simulating a Heisenberg operator as a matrix product operator, and in locally-interacting many-body systems, it has been observed to grow maximally (linearly) with time only for non-integrable dynamics~\cite{Prosen2009,Dubail_2017,Jonay2018,Alba2019,Kos2020,Alba2021,dowling_scrambling_2023}. This is a strictly stronger condition on the circuit compared to scrambling~\cite{dowling_scrambling_2023}. Therefore, in the sense of operator entanglement, a high quantum complexity of a trained variational algorithm has a fundamental robustness to adversarial attacks. 
Intuitively, a chaotic unitary will uncontrollably distribute a perturbation throughout the system (Fig.~\ref{fig:1}(b)), making it impossible to apply the specific adversarial manipulation needed to spoof the model (Fig.~\ref{fig:1}(c)).
Before detailing these results, we first describe the general setup and necessary background knowledge in adversarial QML.

The general QML models we consider in this work follow a standard three-step architecture consisting respectively of data encoding, data processing and measurement. 

In the first step, a vector $\x$ of classical data is loaded into the quantum computer by means of some encoding method,
\begin{equation}\label{eq:encodingArb}
 \mathcal{E}(\x) \ket{\textbf{0}} = \ket{\psi(\x)},
\end{equation}
where $\ket{\textbf{0}}:=\ket{0}^{\otimes n}$, with the number of qubits $n = n_{\mc{E}}(N)$ being some encoding-dependent function of the size of the classical data, $N:=\mathrm{len}(\x)$. It will often be necessary to consider the corresponding density matrix, $\psi(\x):=\ket{\psi(\x)}\bra{\psi(\x)}$.
We will explore three of the most natural and popular encoding methods~\cite{larose2020robust}, amplitude encoding, angle encoding, and dense encoding:
\begin{align}
&\mathcal{E}_{\mathrm{amp}}(\x)\ket{\textbf{0}} = \frac{1}{\|\x\|_2}\sum_{j=1}^N x_j \ket{j}, \label{eq:encodingAmp}\\
 &\mathcal{E}_{\mathrm{angle}}(\x)\ket{\textbf{0}} = \bigotimes_{j=1}^N e^{-i x_j \sigma_y} \ket{\textbf{0}},    \label{eq:encodingAngle}\\
 &\mathcal{E}_{\mathrm{dense}}(\x) \ket{\textbf{0}}= \bigotimes_{j=1}^{N/2} e^{-i x_{2j} \sigma_z} e^{-i x_{2j-1} \sigma_y} \ket{\textbf{0}}.\label{eq:encodingDense}
\end{align}
Each of these encodings has different advantages, such as in terms of expressibility,  learnability  and resource costs~\cite{larose2020robust}. For instance, amplitude encoding is drastically more space efficient, requiring a number of qubits only logarithmic in the dimension of the data, but in general requires exponentially deep circuits~\cite{larose2020robust,west2023drastic}, as opposed to the constant depth yet size-inefficient angle encoding circuits.

In the second and third steps, the encoded state is acted upon by a trainable variational unitary $U_{\theta}$, following which a local measurement is made. Without loss of generality, we choose this measurement to be of the Pauli-$Z$ operator on the first $k$ qubits. 
In the {binary classification case}, upon which for simplicity we focus (although the generalisation to multiple classes  is straightforward) 
we take the prediction  $\hat{y}$ of the model on the input $\x$ to be the sign of the final measurement,  
\begin{equation}
{y}_{\theta}(\x) := \bra{\psi(\x)}{U}_{\theta}^\dagger Z {U}_{\theta}\ket{\psi(\x)},\label{eq:prediction}
\end{equation}
choosing $Z:=\sigma_z^{\otimes k}\otimes{I}^{\otimes(n-k)}$ as the operator to be measured. We will also use the notation $Z_U:=U_\theta^\dg ZU_\theta$ to denote an operator Heisenberg-evolved by the circuit.
During training, the parameters $\theta$ are optimised to minimise a loss function $\ell_{\theta}(\x,y)$, for example of the form
\begin{equation}
 \ell_{\theta}(\x,y) = -y(\x)y_{\theta}(\x)\label{eq:loss}
\end{equation}
over a training set $S$, where we denote the true label of the datapoint $\x$ as $y(\x)\in\{\pm 1\}$.

An \textit{adversarial attack} is a vector $\w$ which perturbs the input data as $\x\mapsto \x' := \x+\w$, intended to change the prediction of the model. At the quantum circuit level, after encoding the classical perturbation induces a
unitary $W$ satisfying $\ket{\psi(\x')} := W \ket{\psi(\x)}$. An attack is deemed successful if this perturbation changes the prediction of the model, 
\begin{equation}
\mathrm{sgn}[{y}_{\theta}(\x')]= -\mathrm{sgn}[{y}_{\theta}(\x)]. \label{eq:attack}
\end{equation}
The properties of $W$ will be heavily influenced by the choice of data encoding map. In the case of angle or dense encoding (Eq.~\eqref{eq:encodingAngle}, ~\eqref{eq:encodingDense}) for example, it will take the form of a product of local unitaries $W=\bigotimes_i W_i$, but this is not generally true for amplitude encoding (Eq.~\eqref{eq:encodingAmp}).

\begin{table}
\begin{tabular}{c|c|c|c|}
\cline{2-4}
  & \begin{tabular}{@{}c@{}}Weak \\ (Thm.~\ref{thm:unitary})\end{tabular}   &  \begin{tabular}{@{}c@{}}Local \\ (Thm.~\ref{thm:unitary} \&~\ref{prop:otoc}) \end{tabular} & \multicolumn{1}{c|}{\cellcolor{white}\begin{tabular}{@{}c@{}}Universal \\ (Thm.~\ref{thm:universal}) \end{tabular}}   \\  \hline 
\multicolumn{1}{|c|}{Amplitude} &  $\checkmark$  & $\checkmark$  &   -- \\ \hline
\multicolumn{1}{|c|}{Angle}  & $\epsilon \lesssim 1/\sqrt{N}$   &   & $\checkmark$  \\  \cline{1-2} \cline{4-4}
\multicolumn{1}{|c|}{Dense}  & $\epsilon \lesssim 1/\sqrt{N}$   &  OTOC $\ll 1$ &  $\checkmark$ \\ \cline{1-2} \cline{4-4}
\multicolumn{1}{|c|}{Arbitrary} & $\epsilon \lesssim \left\lvert \Delta \boldsymbol{x} / \Delta \drho\right\rvert $ &    & --  \\ \hline
 & \begin{tabular}{@{}c@{}}Quantum\end{tabular}   &  \begin{tabular}{@{}c@{}}Scrambling  \end{tabular} & \multicolumn{1}{c|}{\cellcolor{white}\begin{tabular}{@{}c@{}}Chaotic\end{tabular}}   \\  \cline{2-4}
\end{tabular}
\caption{\textbf{Summary of robustness guarantees.} The applicability of our theorems, which depend on both the attack strategy and the form of data encoding, $\x\in\mathbb{R}^N\mapsto\drho(\x)=\mc{E}(\x)\ket{\boldsymbol{0}}\bra{\boldsymbol{0}}\mc{E}^\dg(\x)$. $\epsilon$ denotes the $
\ell_\infty$ norm of the adversarial perturbation. In some cases, our results apply unconditionally (denoted by a tick) while in others there is a specified dependence on the details of the encoding. Non-applicability is denoted by a dash. In the bottom row, we record the property of the model (qualitatively) responsible for the guarantee: ``Quantum'' refers to the contractive nature of any quantum classifier (e.g. a unitary circuit), ``Scrambling'' refers to a quickly decaying out-of-time-ordered correlator (OTOC) [Eq.~\eqref{eq:otocdef}], while by ``Chaotic'' we mean a linearly-growing local-operator entanglement (LOE) [Eq.~\eqref{eq:LOEdef}]. We also note that the ticks in the right-hand column are based on a conjecture, supported by numerical evidence and analytic results under a stronger condition than the most general universal adversarial attack (see Eqs.~\eqref{eq:uni_spoof} and \eqref{eq:anticomm}).}
\label{tab:results}
\end{table}

The existence of adversarial attacks which can spoof QML models (equivalently, the existence of nearby pairs of states classified differently) has already been established~\cite{liu2020vulnerability, liao2021robust}, and seems to indicate a significant vulnerability of quantum classifiers. What is less well understood, however, and the focus of this work, is the extent to which it is possible to construct and implement these attacks in practice. A key contribution of this work is to examine this adversarial setting in terms of concrete scenarios, identifying distinct (dis)advantages of the various encoding methods described above under different types of adversarial attack (see Table~\ref{tab:results}). We will derive fundamental robustness guarantees in three distinct situations: (i) tailored but weak attacks, (ii) strong, local attacks, and (iii) universal attacks which spoof all images with a single attack $\w$. To derive these guarantees, we will leverage the contractiveness of quantum dynamics, the dynamical complexity of the trained circuit $U_{\theta}$, and nature of the encoding in (Eqs.~\eqref{eq:encodingAngle}-\eqref{eq:encodingDense}). Remarkably, we will argue that given an encoding, and broad class of attack, our theorems hold in full generality. This means that they will apply to all future quantum adversarial scenarios that fit within one of these settings.

\section{Results}
At a high level our results (summarised in Table~\ref{tab:results}) can be split into two categories: statements about the strength of the perturbation required to induce a misclassification,
relying only on the unitarity/contractiveness the model (the first column of Table~\ref{tab:results}), and statements about the impossibility of carrying out certain classes of attacks, regardless of the strength of the perturbation (the second and third columns of Table~\ref{tab:results}). While in the first case we show a robustness for all quantum classifiers, the latter category relies on some notions from the theory of many-body chaos. We present our results based on progressively strongly requirements on the dynamical properties of the trained circuit $U_\theta$, as summarised in the bottom row of Table~\ref{tab:results}. 

\subsection{Weak Attacks} 
We first consider the simplest, and arguably the most potentially damaging, threat model: an input specific perturbation as weak as possible, so as to maximise the probability that the tampering is not detected. In this case, and in contrast to the generally highly non-linear nature of classical neural networks~\cite{agarap2018deep}, we can use the unitarity (and hence linearity) of isolated quantum circuits to arrive at the following result.
\begin{restatable}{thm}{unitary} \label{thm:unitary}
   Given an input state $\ket{\psi(\x)}$, a quantum model as defined in Eq.~\eqref{eq:prediction} will classify all states within a 1-norm ball of $\ket{\psi(\x)}$ of radius $ |{y}_{\theta}(\x)|$ identically.
\end{restatable}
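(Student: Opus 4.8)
The plan is to exploit the \emph{linearity} of the quantum expectation value, which is precisely the property that sharply separates this model from a generic (nonlinear) classical neural network. Writing the prediction on an arbitrary perturbed state $\rho$ as $y_\theta(\rho)=\Tr{Z_U\,\rho}$ with $Z_U=U_\theta^\dagger Z U_\theta$, the central observation is that $\rho\mapsto y_\theta(\rho)$ is affine, so the shift in the model output induced by \emph{any} perturbation is controlled entirely by the displacement $\rho-\psi(\x)$ through a single inner product with the fixed Heisenberg operator $Z_U$.

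First I would bound this shift using the matrix (trace/operator) H\"older inequality,
\[
\bigl|y_\theta(\rho)-y_\theta(\x)\bigr|=\bigl|\Tr{Z_U\bigl(\rho-\psi(\x)\bigr)}\bigr|\le \|Z_U\|_\infty\,\bigl\|\rho-\psi(\x)\bigr\|_1 .
\]
Next I would evaluate $\|Z_U\|_\infty$. Since $Z=\sigma_z^{\otimes k}\otimes I^{\otimes(n-k)}$ has spectrum contained in $\{\pm1\}$, we have $\|Z\|_\infty=1$, and because the spectral norm is invariant under conjugation by the unitary $U_\theta$, it follows that $\|Z_U\|_\infty=\|Z\|_\infty=1$. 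Hence the output is $1$-Lipschitz in the $1$-norm,
\[
\bigl|y_\theta(\rho)-y_\theta(\x)\bigr|\le \bigl\|\rho-\psi(\x)\bigr\|_1 .
\]

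The conclusion is then immediate: if $\rho$ lies in the $1$-norm ball of radius $|y_\theta(\x)|$ centred on $\psi(\x)$, the output cannot move by more than $|y_\theta(\x)|$, so $y_\theta(\rho)$ cannot cross zero and must keep the sign of $y_\theta(\x)$; that is, every such $\rho$ is classified identically, and the certified robustness radius is exactly the model's confidence $|y_\theta(\x)|$.

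The argument is short, so the main difficulties are bookkeeping rather than conceptual. I would need to fix precisely which ``$1$-norm'' is intended: I read it as the Schatten-$1$ (trace) distance $\|\rho-\psi(\x)\|_1$ between density matrices, with no factor of $\tfrac12$, which is what makes the radius come out as $|y_\theta(\x)|$ rather than $2|y_\theta(\x)|$; for pure states this also dominates the natural vector distance between the corresponding kets. The only genuine edge case is a flat $y_\theta(\rho)=0$ attained on the boundary of the closed ball, which I would handle either by passing to an open ball or by fixing a convention for $\mathrm{sgn}(0)$. Conceptually, the guarantee rests entirely on the two facts that the decision function is \emph{linear} in the quantum state and that the measured observable's operator norm is \emph{preserved} by the trained circuit --- neither of which has an analogue for a nonlinear classical classifier, which is exactly why this robustness is a quantum feature.
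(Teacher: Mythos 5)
Your proposal is correct and is essentially the paper's own argument: both proofs reduce the claim to the single inequality $\lvert y_\theta(\rho)-y_\theta(\x)\rvert=\lvert\tr[Z_U(\rho-\psi(\x))]\rvert\le\lVert\rho-\psi(\x)\rVert_1$, the paper obtaining it via the operational (POVM) characterisation of the trace norm together with unitary invariance, and you via the Schatten H\"older duality with $\lVert Z_U\rVert_\infty=\lVert Z\rVert_\infty=1$ --- the same bound in two equivalent guises. Your explicit handling of the boundary/sign-of-zero convention is a minor point the paper leaves implicit.
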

A proof of Thm.~\ref{thm:unitary} may be found in Section~\ref{ap:weak_proof}.
In fact, this result extends beyond unitary circuits $U_\theta$ to entirely general quantum dynamics. This means, for example, that this robustness guarantee holds in the presence of noise (which effectively limits the adversary's control of the situation; e.g. see Ref.~\cite{du2021quantum}). 
The practical usefulness of Thm.~\ref{thm:unitary} depends on two factors: the extent to which changes in the classical data vector $\x$ translate to 1-norm changes in $\psi(\x)$, and how big one can expect the magnitude  $ |{y}_{\theta}(\x)|$ of the unperturbed output to be. 

Investigating the first point, which will depend on the choice of data encoding scheme, in Section~\ref{ap:weak_proof} we estimate $\Delta\drho=\| \psi(\x) - \psi(\x') \|_1$ under an adversarial attack $\x\mapsto\x'$ with $\Delta \x = \max_i |x_i - x'_i| = \epsilon\ll 1$ for our considered encoding schemes (Eqs.~\eqref{eq:encodingAmp}-\eqref{eq:encodingDense}). In the archetypal example of image classification, this would correspond to changing each pixel value by no more than $\epsilon$.
We find (see Section~\ref{ap:weak_proof}), for a classical data vector of length $N$,
\begin{align}
&    \Delta \drho_{\mathrm{angle}}\sim \sqrt{N}\epsilon + \mathcal{O}\left(\epsilon^2\right),\\
&    \Delta \drho_{\mathrm{dense}}\sim \sqrt{N}\epsilon + \mathcal{O}\left(\epsilon^2\right),\\
&    \Delta \drho_{\mathrm{amp}}\sim \epsilon + \mathcal{O}\left(\epsilon^2\right).
\end{align}
So for large $N$, when using angle or dense encoding changing each value of the input vector by a small amount can induce a large change in the corresponding quantum states, and effectively weakens the applicability of Thm.~\ref{thm:unitary} to perturbations with $|\epsilon|\lesssim 1/\sqrt{N}$. In the case of amplitude encoding, on the other hand, the resulting change is independent of $N$, implying that weakly perturbed data will be mapped close to the original irrespective of the dimension of the classical input data. This will therefore impart a strong robustness guarantee on the quantum classifier if $|{y}_{\theta}(\x)|$ is of appreciable magnitude. A similar analysis could be carried out for other encoding strategies; in general, the relevant quantity is the magnitude $\left\lvert \Delta \x / \Delta \drho \right\rvert$ of the change in the encoded state as a function of the change in the input classical data vector (see Table~\ref{tab:results}).

At first glance, however, it is unclear that one should expect $|{y}_{\theta}(\x)|$ to be reasonably large, as due to standard concentration effects the measurement values for a large quantum circuit will concentrate strongly around zero, and thus a small perturbation will be sufficient to change the sign of most (typical) measurement results, independent of the data encoding strategy. Indeed, this phenomenon has been used to conclude that in general quantum classifiers will possess extreme adversarial vulnerability, with perturbations of magnitude $\mathcal{O}(2^{-n})$, exponentially falling with the number of qubits, capable of changing the prediction of a model~\cite{liu2020vulnerability}.
However, and as has previously been recognised~\cite{liao2021robust}, the distribution of states in which one is interested in practice in ML is typically highly non-uniform, which can lead to a merely polynomially vanishing minimum perturbation size for a successful adversarial attack. Yet further progress can be made, eliminating entirely the dependence on the number of qubits, if one assumes that (modulo a potential adversarial perturbation) the test time samples are being drawn from the same distribution $\mu$ as was the training set $S$ over which the loss function $\ell_{\theta}$ (e.g. Eq.~\eqref{eq:loss}) was minimised. In this case, one has with high probability that the difference between the expected risk ${R}(\boldsymbol{\theta}) = \mathop{\mathbb{E}}_{(x,y)\sim \mu} \ell_{\theta} (x,y) $ and the empirical risk $\hat{R}_S(\boldsymbol{\theta}) = \frac{1}{|S|} \sum_{(x_i,y_i)\in S} \ell_{\theta} (x_i,y_i)$ is bounded by $\mathcal{O}\left(  \sqrt{\frac{T\log T}{|S|}}  \right) $ where $T$ is the number of trainable 2-qubit unitaries~\cite{caro2022generalization}. So if one trains until (say) $\hat{R}_S(\boldsymbol{\theta})<-1/2$ on a training set $S$ with $|S|\gtrsim T\log T$ then (with high probability) attacked states with {$\| \psi(\x)- \psi(\x') \|_1 < 1/4$ and $(\x,y(\x))\sim\mu$} will be classified correctly, requiring an adversary to implement perturbations of magnitude $\mathcal{O}(1)$. We conclude that a well-trained QML model will be drastically more adversarially robust on encoded states of data drawn from $\mu$ than it will be on (Haar) random states, which, conveniently, are exactly the states that we care about the most. 

A useful consequence of the above construction is that it also automatically implies robustness to non-adversarial noise. 
In particular, Thm.~\ref{thm:unitary} immediately also includes a robustness in terms of the strength $\epsilon$ of the noise in \textit{any} perturbation $\x \mapsto \x + \epsilon \w $. An adversarial perturbation can be seen as a ``worst case'' scenario, where $\w$ is picked to optimise changing the prediction in Eq.~\eqref{eq:prediction}, compared to noise where $\w$ would be sampled from some distribution. This can also be viewed as encompassing noise in the quantum algorithm itself. That is, for sufficient training, any weak, coherent noise will not rotate the state vector outside of the 1-norm ball of Thm.~\ref{thm:unitary}. Moreover, Thm.~\ref{thm:unitary} readily extends also to incoherent noise; see Section~\ref{ap:weak_proof}.

We note that in Ref.~\cite{berberich2023training}, in a conceptually similar argument to Thm.~\ref{thm:unitary}, Lipschitz bounds are employed to show how certain variational circuits can be trained to constrain (in our notation)  $|{y}_{\theta}(\x')-{y}_{\theta}(\x)|$ as a function of $\|\x'-\x\|$, and the resulting relation between expressivity and robustness is explored. The present work, however, relaxes an assumption made in Ref.~\cite{berberich2023training} on the form of data encoding, with Thm.~\ref{thm:unitary} being encoding-agnostic.

The results discussed so far depend on the strength of the attack being weak -- the isometric nature of unitary maps does not give as useful a bound when the attack is strong.
In the search for guarantees even in the face of strong perturbations, then, we need to turn to a different property of the models.
As we will see in the next two sections, covering local and universal attacks, it will turn out that different degrees of dynamical complexity in $U_\theta$ can either safeguard or jeopardise its robustness against strong adversarial attacks.

\subsection{Local Attacks} 
The second attack scenario that we consider is the case of strong, local attacks, where the assumption we make on $W$ is that it acts only on a few qubits. That is, instead of a weak attack in Thm.~\ref{thm:unitary}, we take $\x' = \x + \boldsymbol{w}$, where $w_i \neq 0$ only for some small number $k \ll N $ bits. This threat model is inspired by the surprising result that certain classical neural networks can be successfully spoofed even if the attacker can change only a single pixel~\cite{su2019one}. 

As a first result, we note that a strongly (but still locally) perturbed classical vector $\x'$ need not lead to strongly perturbed state $\ket{\psi(\x')}$ after encoding. In particular, in the case of amplitude encoding \eqref{eq:encodingAmp}, changing only a small number of bits necessarily leads to a weak attack, $\|\psi(\x) - \psi(\x') \|_1 \ll 1$. Intuitively, this is because for a large quantum state with many non-zero amplitudes, changing only a small fraction of these cannot significantly change the global state. The results of Thm.~\ref{thm:unitary} then directly apply, as we prove in Section~\ref{ap:local}. Therefore, for amplitude encoding, QML circuits are robust against local attacks in addition to weak attacks (first cell of the middle column in Table~\ref{tab:results}). For other forms of encoding, however, this will not be true in general. In the cases of angle and dense encoding, for example, one can orthogonalise a pair of encoded states by changing only a single element of the corresponding classical data vector. Nonetheless, we can make progress by considering the scrambling characteristic of $U_{\theta}$. 

In contrast to the previous results, we now make the simplifying assumption that the initial state is sampled from a unitary 2-design, a mild relaxation of the approach of e.g. Ref.~\cite{liu2020vulnerability}. This is admittedly a strengthening of the previous assumptions that will not hold in general; when it does, however, we have:
\begin{restatable}{thm}{localOTOC}
\label{prop:otoc}
    If the state $\ket{\psi(\x)}$ representing the classical data $\x$ is sampled from a 2-design $\mce$ over the uniform measure on states, then for any attacked state $\ket{\psi(\x')}=W\ket{\psi(\x)}$ and  for any $\delta > 0$,
    \begin{equation} \label{eq:otocLevy}
        \Pr_{\ket{\psi(\x)} \sim \mce}\left\{ |y_{\theta}(\x) - y_{\theta}(\x') |  \geq \delta \right\} \leq \frac{\braket{ [Z_{U},W ]^2 }}{(d+1)\delta^2},
    \end{equation}
    where the prediction ${y}_{\theta}(\x)$ is defined in Eq.~\eqref{eq:prediction}, and the expectation value on the r.h.s. is over a maximally mixed state, $\braket{ [Z_{U},W ]^2 } = (1/d) \tr([Z_{U},W ]^2 )$.
\end{restatable}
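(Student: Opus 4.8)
The plan is to recognise the left-hand side as the tail probability of a mean-zero random variable and to control it by Chebyshev's inequality, with the variance computed \emph{exactly} from the defining moments of a state $2$-design (so that, despite the label ``otocLevy'', the mechanism is a second-moment bound rather than a Lipschitz/Levy argument). First I would collapse the prediction gap into a single expectation value of a state-independent observable. Using $\ket{\psi(\x')}=W\ket{\psi(\x)}$ together with Eq.~\eqref{eq:prediction}, I set $X := Z_U - W^\dg Z_U W$, a Hermitian operator depending only on the circuit and the attack, and write
\[ Y := y_\theta(\x) - y_\theta(\x') = \bra{\psi(\x)} X \ket{\psi(\x)}. \]
The event in Eq.~\eqref{eq:otocLevy} is then exactly $\{|Y|\geq\delta\}$, and the randomness enters only through the sampled state $\ket{\psi(\x)}\sim\mce$.

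Next I would compute the first two moments of $Y$. The $1$-design property $\mathbb{E}_{\ket{\psi}\sim\mce}[\ket{\psi}\bra{\psi}]=I/d$ (implied by the $2$-design) gives $\mathbb{E}[Y]=\tr(X)/d$, and cyclicity of the trace with unitarity of $W$ yields $\tr(W^\dg Z_U W)=\tr(Z_U)$, so $\tr(X)=0$ and hence $\mathbb{E}[Y]=0$. For the second moment the key input is the $2$-design moment $\mathbb{E}[(\ket{\psi}\bra{\psi})^{\ot 2}]=(I+\swap)/(d(d+1))$; since $Y^2$ is a degree-$2$ polynomial in $\ket{\psi}\bra{\psi}$, the $2$-design reproduces the Haar average exactly, and contracting against $X\ot X$ (using $\tr[(X\ot X)\swap]=\tr(X^2)$) gives
\[ \mathbb{E}[Y^2] = \frac{(\tr X)^2 + \tr(X^2)}{d(d+1)} = \frac{\tr(X^2)}{d(d+1)}. \]

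The remaining step is algebraic: identify $\tr(X^2)$ with the OTOC on the right-hand side. Expanding $X^2 = Z_U^2 - Z_U W^\dg Z_U W - W^\dg Z_U W Z_U + W^\dg Z_U^2 W$ and using $Z^2=I$ (so $Z_U^2=I$), unitarity $W^\dg W = I$, and cyclicity, I find $\tr(X^2) = 2d - 2\tr(Z_U W^\dg Z_U W)$. The same manipulations applied to the squared commutator give $\tr([Z_U,W]^\dg[Z_U,W]) = 2d - 2\tr(Z_U W^\dg Z_U W) = \tr(X^2)$, so that $\mathbb{E}[Y^2] = \braket{[Z_U,W]^2}/(d+1)$ with the maximally-mixed-state average $\braket{\,\cdot\,} = (1/d)\tr(\,\cdot\,)$. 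Finally, Chebyshev's inequality applied to the mean-zero variable $Y$ gives $\Pr\{|Y|\geq\delta\}\leq \mathbb{E}[Y^2]/\delta^2$, which is precisely Eq.~\eqref{eq:otocLevy}.

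The main obstacle is not the depth of any single step but the bookkeeping in the second-moment computation — one must apply the $2$-design (SWAP) average correctly and then match the resulting $\tr(X^2)$ to the OTOC. The one genuinely subtle point is the commutator convention: because $W$ is a general unitary rather than Hermitian, the manifestly non-negative quantity that emerges is the squared-commutator norm $\tr([Z_U,W]^\dg[Z_U,W])$, so I would be explicit that this is what the physical OTOC $\braket{[Z_U,W]^2}$ denotes, ensuring the bound is non-negative and the identification with $\tr(X^2)$ is exact.
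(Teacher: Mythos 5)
Your proposal is correct and follows essentially the same route as the paper's proof: Chebyshev's inequality applied to the mean-zero variable $y_\theta(\x)-y_\theta(\x')$, with the variance computed exactly from the 2-design (Haar 2-fold) moments and identified with $\frac{1}{d}\tr([Z_U,W]^\dg[Z_U,W]) = 2 - \frac{2}{d}\tr(Z_U W^\dg Z_U W)$. Packaging the computation into the single operator $X=Z_U-W^\dg Z_U W$ is a clean streamlining of the paper's term-by-term evaluation, and your explicit remark that the non-negative quantity is $\tr([Z_U,W]^\dg[Z_U,W])$ (not the literal $\tr([Z_U,W]^2)$, which would be non-positive) correctly resolves a notational subtlety the paper leaves implicit.
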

This theorem is entirely independent of the form of the attack and holds generally for any $W$. Similar encoding-specific bounds could be obtained by instead averaging over states attainable from a specific encoding scheme. We explain this in Section~\ref{ap:local}, and supply there a proof of the above theorem. When $W$ is a local attack, the numerator of the r.h.s. can be interpreted as an out-of-time-order correlator (OTOC), which probes how scrambling  the process $U_\theta$ is. We first interpret the theorem for a generic $W$ and then analyze it in terms of the OTOC.

To understand this result {for a general $W$}, we note that the form of Thm.~\ref{prop:otoc} is not exactly surprising. If the attack operator $W$ commutes with the circuit-evolved measurement $Z_{U} = U_\theta^\dg Z U_\theta$, then in the expectation value $y_\theta(\x)$ it will have no influence. Similarly, if the commutator is small, $({1}/{d} )\|  [Z_{U}, W ] \|_2 \ll 1$, then the adversary can only affect the outcomes of the $Z$ measurement weakly on average. That is, for an adversary to strongly affect a measurement, the strength of the commutator needs to be large. Thm.~\ref{prop:otoc} quantifies this intuition.

In the above, we note that Haar random quantum states typically lead to a concentrated expectation value $|y_\theta (\x)| \sim \frac{1}{\sqrt{d}} $~\cite{liu2020vulnerability}. This means that $\delta $ in Thm.~\ref{prop:otoc} needs to be at least larger than $\frac{1}{\sqrt{d}}$ for a successful attack, in such a typical situation. Substituting this in, we see that the dependence on $d$ drops out from the r.h.s. of the probability bound Eq.~\eqref{eq:otocLevy}. This means that even with concentration of measure effects, in this random-state setting the OTOC mediates the viability of an adversarial attack.

Now we return to the case where $W$ is a local attack. In many-body physics, the scaling of the OTOC $\braket{ [A_t, B]^2 }$ diagnoses quantum information scrambling, where $A_t$ is a time-evolving Heisenberg operator of an initially local operator $A$, and where $B$ is also a local operator. An early-time exponential growth of an OTOC indicates a fast-scrambling property of the dynamics~\cite{Sekino_Susskind_2008, Maldacena_Shenker_Stanford_2016,Swingle2018}, and the OTOC has played an important role in studies of the black hole information paradox~\cite{Shenker_Stanford_2014}, and in understanding feature of quantum chaos without a classical analogue~\cite{Swingle2016, Roberts2016, Foini2019,dowling_scrambling_2023}. Converting this to our setting of variational quantum circuits, for a scrambling circuit $U_{\theta}$ we can conclude that
\begin{equation} \label{eq:otocdef}
    \braket{ [Z_{U},W ]^2 } \sim \exp{  \left[\lambda \, \mathrm{depth}(U_\theta )\right] } \leq 1,
\end{equation}
for some $\lambda >0$. On the other hand, for a circuit that does not (quickly) scramble quantum information, this quantity scales slowly with the depth of the circuit, meaning that adversarial attacks are impossible according to Thm.~\ref{prop:otoc}.

There is evidence in the literature that trained QML architectures tend to be scrambling quantum circuits~\cite{Shen2020,We2021}, and furthermore that scrambling generally impacts trainability~\cite{garcia_quantifying_2022}. This again highlights the question of whether trained QML circuits are always vulnerable to attack? The above results depend only on the scrambling characteristic of the circuit. This concept is strictly independent of classical simulability~\cite{swingle2020}, for instance. In the following, we will investigate the robustness of a quantum classifier against attacks when $U_\theta$ instead is genuinely (quantum) chaotic, a distinct notion~\cite{dowling_scrambling_2023}.

\subsection{Universal Attacks} 
We now turn to the setting of universal adversarial attacks, i.e. a perturbation that changes the prediction of the model when applied to \textit{any} input state. 
While the existence of such perturbations is far from obvious, remarkably, they have been shown to exist in both the classical~\cite{moosavi2017universal} and quantum~\cite{gong2022universal} case. In our framework, from Eqs.~\eqref{eq:prediction} and \eqref{eq:attack} a universal attack is when for all data $\x$, 
\begin{equation}
    \mathrm{sgn}(\bra{\psi(\x)}Z_U\ket{\psi(\x)}) = -\mathrm{sgn}(\bra{\psi'(\x)} Z_U \ket{\psi'(\x)}), \label{eq:uni_spoof}
\end{equation}
where $\ket{\psi'(\x)} = \Wopt \ket{\psi(\x)}$ for some unitary $\Wopt$, which is independent of $\x$. The key point is that the allowable form of $\Wopt$ is restricted based on the encoding method, under our assumption that the adversary may only spoof the classical data $\x$. We will assume in this section that the adversary may only apply product unitaries: this is the case for angle and dense encoding. Then, we conjecture that if $Z_U$ is sufficiently \emph{complex}, then no product $\Wopt$ satisfying Eq.~\eqref{eq:uni_spoof} exists. By complex, we formally mean that it has high operator entanglement, as measured by the local-operator entanglement (LOE). LOE is a dynamical signature of chaos defined in terms of a Heisenberg operator. Through the Choi-Jamio\l kowski isomorphism, an operator $X_{U} = U_{\theta}^\dg X U_{\theta} \in \mc{B}(\mc{H})$ has a quantum state representation
\begin{equation} \label{eq:LOEdef}
    \ket{X_{U}} := ( X_{U} \otimes \id) \ket{\phi^+}, 
\end{equation}
where $\ket{\phi^+} \in \mc{H} \otimes \mc{H}$ is a maximally entangled state across a doubled Hilbert space. The LOE is then defined as the entanglement of this state, across some bipartition. This is usually measured by a R\'enyi entropy,
\begin{equation}
    S^{(\alpha)} (\omega):= \lim_{\beta \to \alpha } \frac{1}{1-\beta} \log(\tr[\omega^\beta]),
\end{equation}
for $\alpha\geq 0$, and where e.g. $S^{(1)}$ is the von Neumann entropy. 

In the remainder of this section, we will study the conjecture that operator entanglement precludes the existence of a universal adversarial attack, both analytically (Theorem~\ref{thm:universal} and Corollary~\ref{cor:clifford}) and numerically (see Fig.~\ref{fig:numerics}). 

First, to make analytic progress, we will make a simplifying assumption that the universal spoof satisfies 
\begin{equation} \label{eq:anticomm}
    Z_U  = - \Wopt ^\dg Z_U \Wopt.
\end{equation}
We call a $\Wopt$ satisfying this condition a strong universal attack. While the condition \eqref{eq:anticomm} is sufficient to satisfy Eq.~\eqref{eq:uni_spoof}, it is not necessary. This is because Eq.~\eqref{eq:anticomm} implies the stronger condition that not only the sign is flipped, but also that the magnitude of all expectation values is preserved. 
When seen in conjunction with the finite-sized numerics of a realistic model classifier (note that in the numerical setting we do not make the assumption of Eq.~\eqref{eq:anticomm}), we will argue that there is a deep relationship between LOE and the existence of a universal adversarial attack (see Fig.~\ref{fig:numerics}). A surprising facet of this setting is that even if the adversary can solve Eq.~\eqref{eq:anticomm} and determine $\Wopt$, \textit{a priori} it is not clear whether it will be implementable, given the restrictions enforced by the data encoding (here, locality), but even having dropped the assumption that the perturbation is weak.

\begin{figure*}
  \includegraphics[width=0.99\textwidth]{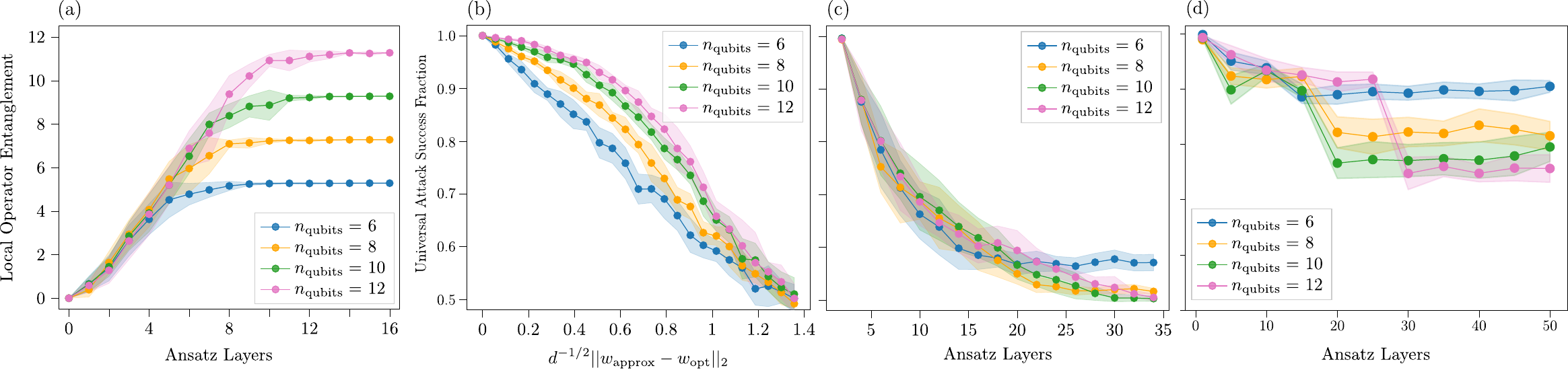}
  \caption{ \textbf{Numerical results for a common architecture.} (a) Local operator entanglement (LOE) growth in standard quantum machine learning (QML) models consisting of hardware-efficient layers of single qubit rotations and nearest neighbour CNOTs. The initial linear growth of the LOE indicates that these models are implementing chaotic quantum dynamics~\cite{dowling_scrambling_2023}.  (b) The fraction of states successfully spoofed by an approximation to a universal adversarial attack. The attack is carried out by random unitaries with various 2-norm distances from the ideal strong attack $W_\mathrm{univ}$ (satisfying Eq.~\eqref{eq:anticomm}). For each distance, we generate ten circuits, each with five attacks constructed by randomly rotating away from the ideal attack (see Eq.~\eqref{eq:approx_dist} and~\eqref{eq:rot}). The mean success fraction is plotted, with the regions within one standard deviation shaded. (c) Here, the attack is carried out by optimised local unitary operators on each qubit for random models of increasing circuit depth. For each choice of layer number, we generate $20$ circuits and train the adversary on $32,000$ training datapoints, and evaluate it on $10,000$ test datapoints. We plot the mean attack success fraction for up to $34$ layers, by which point both the LOE in the circuit and the attack success fraction have plateaued. (d) Similar to part (c), but employing a model trained to classify images of handwritten digits~\cite{mnist}. While the adversary enjoys improved performance compared to the random case, we nonetheless observe the emergence of increasing robustness with both circuit depth and qubit count. ``Universal Attack Success Fraction'' is a common vertical axis for parts (b-d).}
\label{fig:numerics}
\end{figure*}

With this in hand, we present the following result.
\begin{restatable}{thm}{universal} \label{thm:universal}
The distance $D$ between a product of local unitary channels $W=\bigotimes_i W_i$ and a strong universal adversarial attack $\Wopt$ (satisfying Eq.~\eqref{eq:anticomm}) is bounded as
    \begin{align}
      1-\ex^{- \frac{1}{2} S^{(2)}(\nu)}\leq  {D} \leq  1 - \ex^{-n S^{(2)}(\nu)}, \label{eq:universal}
      \end{align}
       where $S^{(2)}(\nu)$ is the R\'enyi 2-entropy of the reduced Choi state of a backwards circuit-evolved flip operator, $\nu := \tr_A[\Wopt \ket{\phi^+}\bra{\phi^+} \Wopt^\dagger ]$, maximised over all congruent bipartitions of the Hilbert space; $\mc{H}=\mc{H}_A\otimes \mc{H}_{\bar{A}}$. Explicitly,
       \begin{equation}
           D:= \inf_{{W}^{(i)}, 1\leq i \leq n }\frac{1}{2d} \| \mc{W}_{\mathrm{univ}} -  \bigotimes_i \mc{W}^{(i)} \|_2^2, \label{eq:distance}
       \end{equation}
       where $\mc{W} := W \otimes W^*$ denotes the (superoperator representation of the) quantum map for unitary $W$.
\end{restatable}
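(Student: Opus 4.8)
The plan is to convert the superoperator distance $D$ into a fidelity between Choi states and then bound that fidelity by the R\'enyi-2 entropy of $\nu$. First I would expand the Hilbert--Schmidt norm in Eq.~\eqref{eq:distance}. Since $\mc W_{\mathrm{univ}}=\Wopt\ot\Wopt^*$ and $\bigotimes_i\mc W^{(i)}=V\ot V^*$ with $V=\bigotimes_i W^{(i)}$ are both unitary on the doubled space, the two squared norms are $V$-independent and the only variable term is the overlap $\tr[\mc W_{\mathrm{univ}}^\dg(V\ot V^*)]=|\tr[\Wopt^\dg V]|^2$. Using $\braket{V|\Wopt}=\f1d\tr[V^\dg\Wopt]$ for the normalised Choi vectors, the normalisation in Eq.~\eqref{eq:distance} collapses to $D=1-\max_{V=\bigotimes_i W^{(i)}}|\braket{V|\Wopt}|^2$, i.e. one minus the maximal fidelity of $\ket{\Wopt}$ with a Choi state of a product of local unitaries. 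Both inequalities then follow from the standard fact that the largest achievable product overlap equals the largest squared Schmidt coefficient $\lambda_{\max}(\nu)$ across the relevant cut.

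For the lower bound on $D$ (upper bound on the fidelity), I would note that any product $V$ is unentangled across every congruent bipartition $\mc H=\mc H_A\ot\mc H_{\bar A}$, so its overlap with $\ket{\Wopt}$ cannot exceed $\lambda_{\max}(\nu_A)$, the top eigenvalue of the reduced Choi state for that cut. Bounding $\lambda_{\max}(\nu_A)^2\leq\tr[\nu_A^2]=\ex^{-S^{(2)}(\nu_A)}$ gives $\lambda_{\max}(\nu_A)\leq \ex^{-S^{(2)}(\nu_A)/2}$ for every cut; selecting the cut of maximal entropy (the maximisation defining $S^{(2)}(\nu)$) yields $\max_V|\braket{V|\Wopt}|^2\leq \ex^{-\f12 S^{(2)}(\nu)}$ and hence $D\geq 1-\ex^{-\f12 S^{(2)}(\nu)}$.

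For the upper bound on $D$ (lower bound on the fidelity) I would construct an explicit product approximant by sequentially peeling the $n$ sites: at each site I keep the dominant Schmidt vector across the cut separating that site from the remainder, so the fidelities multiply. Each factor is at least $\lambda_{\max}\geq\tr[\nu^2]=\ex^{-S^{(2)}}$ (using $\sum_i\lambda_i^2\leq\lambda_{\max}\sum_i\lambda_i=\lambda_{\max}$), and bounding each of the $n$ cut entropies by the maximal value $S^{(2)}(\nu)$ gives $\max_V|\braket{V|\Wopt}|^2\geq \ex^{-nS^{(2)}(\nu)}$, hence $D\leq 1-\ex^{-nS^{(2)}(\nu)}$.

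I expect the upper bound to be the main obstacle, for two reasons. First, the peeling argument generates \emph{conditional} reduced states after each site is fixed, and I must verify that the R\'enyi-2 entropy of each such cut is genuinely controlled by the single quantity $S^{(2)}(\nu)$ obtained by maximising over congruent bipartitions of the original Choi state; this is precisely where the linear factor $n$ enters and where subadditivity-type estimates are required. Second, the infimum in Eq.~\eqref{eq:distance} is over genuine single-qubit \emph{unitaries}, whose Choi vectors are maximally entangled within each doubled site, whereas the dominant Schmidt vectors produced by peeling need not be of this form; closing this gap requires projecting each factor onto the unitary Choi manifold (e.g. via a polar decomposition in an alternating optimisation) and checking that the incurred loss does not spoil the $\ex^{-nS^{(2)}(\nu)}$ estimate. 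The remaining ingredients---the norm bookkeeping in the reduction step and the identification of $\nu$ with the reduced Choi state of the circuit-evolved flip operator through Eq.~\eqref{eq:anticomm}---are routine by comparison.
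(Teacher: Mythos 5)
Your proposal follows essentially the same route as the paper's proof: rewrite $D$ as one minus the maximal Choi-state fidelity $\max_V|\braket{V|\Wopt}|^2$ over products of local unitaries, control that fidelity by the largest Schmidt coefficient of $\nu$ across congruent bipartitions, and pass between the min-entropy and the R\'enyi-2 entropy (your inequalities $\lambda_{\max}^2\leq\tr[\nu^2]\leq\lambda_{\max}$ are exactly the paper's sandwich $S^{(\infty)}\leq S^{(2)}\leq 2S^{(\infty)}$), with the lower bound on $D$ matching the paper's argument step for step. The two obstacles you flag for the upper bound are genuine, but the paper does not resolve them either: it simply asserts, as an exact identity, that the maximal product-state fidelity equals the product of the largest Schmidt coefficients over the $n$ congruent cuts of the original Choi state, and it tacitly treats the optimising product state as realisable by a tensor product of \emph{unitary} Choi vectors, so your peeling construction plus the projection-onto-the-unitary-manifold step is, if anything, a more honest account of what still needs to be justified rather than a departure from the paper's method.
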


Here, an optimal spoof $\Wopt$ acts as a ``flip operator'' on the $k$ qubits which are measured (in the $\sigma_z$ basis) and so has only an odd number of $\sigma_x$ and $\sigma_y$ in its Pauli decomposition on these qubits, therefore flipping all of the measured expectation values $y_\theta (\x)$. A proof of Thm.~\ref{thm:universal} may be found in Section~\ref{ap:universal}.

Thm.~\ref{thm:universal} applies directly to the case of dense angle encoding \eqref{eq:encodingDense}, as the classical adversary can effectively only apply a tensor product of local unitaries. In particular, examining the two extreme cases of Eq.~\eqref{eq:universal}, we see that
 \begin{align}
        D&\approx\begin{cases}
    1,& \text{if } LOE \gg 0\\
    0,              & \text{if } LOE \approx 0.
\end{cases} \label{eq:LOE} 
\end{align}
This says that a close approximation to a universal attack is possible for a circuit with low LOE, when $D \approx 0$, in the strong sense of Eq.~\eqref{eq:anticomm}. The converse case, when the bounds of Thm.~\ref{thm:universal} are looser, is investigated numerically below. As a simple application of Thm.~\ref{thm:universal}, we consider the case where the model $U_{\theta}$ can be implemented by a Clifford circuit~\cite{Dowlin2024LOE-OSRE}. We have:
\begin{restatable}{crllr}{clifford} \label{cor:clifford}
If the variational quantum circuit $U_{\theta}$ can be implemented using only Clifford gates, then for a local data encoding map, a universal adversary exists. 
\end{restatable}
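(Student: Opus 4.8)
The plan is to apply Theorem~\ref{thm:universal} and exploit the defining algebraic feature of Clifford circuits: they normalise the Pauli group, so conjugation by $U_\theta$ sends any Pauli to a signed Pauli. First I would fix an explicit flip operator, for instance $F=\sigma_x$ acting on one of the measured qubits; since $\sigma_x\sigma_z\sigma_x=-\sigma_z$ this satisfies $F^\dagger Z F=-Z$. The candidate strong universal attack is then the circuit-evolved operator $\Wopt=U_\theta^\dagger F U_\theta$, which obeys $\Wopt^\dagger Z_U\Wopt=U_\theta^\dagger F^\dagger Z F U_\theta=-Z_U$, i.e. the strong-attack condition \eqref{eq:anticomm}. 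So a strong universal attack certainly \emph{exists} as an operator; the entire content of the corollary is that for Clifford $U_\theta$ it can additionally be realised within the locality constraint imposed by the encoding.

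The crucial step is to note that, because $U_\theta$ is Clifford and $F$ is a Pauli, $\Wopt=U_\theta^\dagger F U_\theta$ is again a Pauli string, hence a tensor product of single-qubit Paulis, $\Wopt=\bigotimes_i W^{(i)}$. I would then evaluate its local-operator entanglement. Through the Choi isomorphism \eqref{eq:LOEdef}, the Choi state of a Pauli string factorises into a product of single-qubit (Bell-like) states, each supported entirely within one qubit's doubled Hilbert space. Consequently the reduced Choi state $\nu$ is a product across any congruent bipartition $\mc{H}=\mc{H}_A\otimes\mc{H}_{\bar{A}}$, giving $S^{(2)}(\nu)=0$, and this persists after maximising over bipartitions.

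Feeding $S^{(2)}(\nu)=0$ into \eqref{eq:universal} squeezes $D$ between $1-\ex^{0}=0$ and $1-\ex^{0}=0$, forcing $D=0$. By the definition \eqref{eq:distance} of $D$, the infimum over products of local unitary channels is then attained by $\Wopt$ itself, confirming that the strong universal attack lies \emph{exactly} in the set of product unitaries. Since a local data encoding permits precisely such product attacks, the adversary can implement $\Wopt$, establishing the existence of a universal adversary.

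I do not expect a serious obstacle: the argument is morally just the statement that Clifford evolution keeps operators at zero operator entanglement. The one point requiring care is the bookkeeping of the LOE bipartition in the doubled space — one must group the bra- and ket-copies of each qubit on the same side of the cut, so that the single-qubit Choi factors never straddle it, which is what secures $S^{(2)}(\nu)=0$ rather than a spurious nonzero entropy. As an independent cross-check that bypasses the theorem, I would observe that $Z_U$ is itself a Pauli string and directly exhibit a single-qubit Pauli $W^{(j)}$ anticommuting with it on any site in its support, yielding a manifestly local $\Wopt$.
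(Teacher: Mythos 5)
Your proposal is correct and follows essentially the same route as the paper: pick a Pauli flip operator anticommuting with $Z$, note that Clifford conjugation keeps it a Pauli string so its Choi state is a product and $S^{(2)}(\nu)=0$, then squeeze $D=0$ via Theorem~\ref{thm:universal} to conclude the strong universal attack is itself a product of local unitaries. Your ``cross-check'' (directly exhibiting a local Pauli anticommuting with the Pauli string $Z_U$) is likewise the direct normaliser argument the paper gives alongside its LOE proof.
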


Here, Thm~\ref{thm:universal} predicts the existence of universal adversarial attacks composed of local unitaries: the flip operator $F = \sigma_x \otimes \id^{n-1}$ has zero LOE, and under conjugation with a Clifford unitary $U_\theta = C$, so does the backwards evolved $CFC^\dg$. This operation can therefore in principle be applied by an adversary through an attack on the classical data. In the case of Clifford dynamics we can also see this directly: belonging to the normaliser of the Pauli group, $C$ maps the Pauli string $F$ to the Pauli string $CFC^\dg=:W_{\mathrm{univ}}$, the required local universal attack.

Corollary~\ref{cor:clifford} applies, for example, to angle encoding, modulo the subtlety that depending on the precise details of the angle encoding, the classical adversary may not be able to apply arbitrary local unitaries. In the formulation of Eq.~\eqref{eq:encodingAngle}, for example, only linear combinations of $I$ and $X$ gates are actually realisable. Nonetheless, with high probability one of the local universal adversarial attacks will be of this form, which we discuss in detail at the end of Section~\ref{ap:universal}.

We note that the behaviour of the LOE of the model is distinct from the standard QML assumption of a $2-$design~\cite{mcclean2018barren,ragone2023unified,fontana2023adjoint}. For example, the Cliffords generate a $2-$design, but as we argue above, LOE is constant for any Clifford dynamics. In Fig.~\ref{fig:numerics}(a) we plot the scaling of the LOE in a typical hardware efficient ansatz, finding an extensive growth indicative of quantum chaos~\cite{Prosen2007,Prosen2007a, Prosen2009,Dubail_2017,Jonay2018,Alba2019,Kos2020,Alba2021,dowling_scrambling_2023}. This is consistent with previous work arguing that effective QML models are efficient scramblers~\cite{Shen2020,We2021}, a necessary condition for chaos~\cite{PhysRevLett.124.140602,dowling_scrambling_2023}. 

We stress again that a fast growing OTOC, the subject of Thm.~\ref{prop:otoc}, is not equivalent to a chaotic circuit~\cite{dowling_scrambling_2023}. This means that the combination of Thms.~\ref{prop:otoc} and \ref{thm:universal} dictate that a trained QML circuit $U_{\theta}$ is vulnerable to a universal adversarial attacks if it is sufficiently scrambling in terms of the OTOC (Thm.~\ref{prop:otoc}), yet not chaotic according to the LOE (Thm.~\ref{thm:universal}).

Finally, we also prove in Section~\ref{ap:universal} that an $\epsilon$-approximation to a universal attack (in 2-norm distance) is itself a $2\epsilon$-approximate universal adversarial attack, in the sense $\left|\bra{\psi} \left( Z_U + W^\dagger  Z_U W \right) \ket{\psi}\right| \leq 2\epsilon$ where $W$ is an $\epsilon$-approximate universal adversarial attack. In turn, this means that any state $\bm{x}$ where $|{y}_\theta(x)| > 2\epsilon$ will be misclassified after applying $W$. This implies that the range of $\epsilon$ for which an $\epsilon$-approximate universal adversarial example correctly spoofs the entire training set $S$ grows as the empirical risk $\hat{R}_S(\boldsymbol{\theta})$ of $U_\theta$ falls (see discussion below Thm.~\ref{thm:unitary}).

\subsection{Numerical Results} 
We now complement our analytic results with supporting numerical calculations, supplied in Fig.~\ref{fig:numerics}. Although Theorem~\ref{thm:universal} gives a bound between the distance from a perfect universal attack to the subset of attacks which can actually be realised by a classical adversary (i.e. consisting only of single qubit rotations) enforced by the LOE of the circuit, its operational meaning remains unclear.
For example, if $\Wopt$ is a universal attack, then so is $-\Wopt$, but $\|\Wopt-(-\Wopt)\|_2=2d\gg 0$, so being 2-norm far from a given universal attack does not guarantee that the adversary will be unsuccessful. 
We therefore now seek to connect Theorem~\ref{thm:universal} to a metric more transparently relevant in practice: the fraction of states that are misclassified following a specific attack. We investigate two scenarios: attacks a given distance in 2-norm away from a perfect attack, and optimised local attacks for QML models of varying circuit depth. 

\noindent
We begin by investigating the relationship between the distance of a given unitary from a perfect universal attack and its efficacy in practice.
For concreteness, we generate approximations $W_{\mathrm{approx}}$ to $\Wopt$ satisfying
\begin{equation}
 \| W_{\mathrm{approx}}-\Wopt \|_2 = \epsilon \sqrt{d}    \label{eq:approx_dist}
\end{equation}
for various choices of $\epsilon$
by rotating away from $W$ by a unitary generated by a random Pauli string $P$, 
i.e. 
\begin{equation}
W_{\mathrm{approx}}=e^{-it P}\Wopt e^{it P},   \label{eq:rot} 
\end{equation}
with $t$ chosen so as to satisfy Eq.~\eqref{eq:approx_dist}.
The average fraction of states successfully spoofed  by the resulting attacks is plotted in Fig.~\ref{fig:numerics}(b). 
We find a clear dependence between the success of the adversarial attack and the 2-norm distance between the corresponding unitary and $\Wopt$. For $\epsilon \sqrt{d}\approx\sqrt{2}$ (i.e. $\langle{\Wopt}|{W_{\mathrm{approx}}}\rangle_{HS}\approx 0$) the success probability drops to 1/2, with the adversarial attack faring no better than a (strong) random perturbation. 
In Section~\ref{ap:universal} we also prove that, for small $\epsilon$, we have in this setup that the fraction (with respect to the Haar measure) of states that are misclassified after an attack by $W_{\mathrm{approx}}$ is given approximately by
\begin{equation}
\Pr_{\ket{\psi}\sim \mu_{\mbu}} \left[ \mathrm{sgn}({y}_{\theta}(\ket{\psi})) \neq \mathrm{sgn}({y}_{\theta}(W_{\mathrm{approx}}\ket{\psi}))  \right]
     \approx 1 -  \frac{2\epsilon }{ \pi} 
\end{equation}
showing that a close approximation in 2-norm distance is a sufficient condition for a successful universal attack, as argued more generally in Section~\ref{ap:universal}. \\

We next consider optimised local attacks, i.e. consisting of a tensor product of parameterised single-qubit unitaries. This is a scenario that an attacker would face in practice if they had the ability to tamper with inputs before they were fed to a quantum classifier employing a local data encoding map, e.g. dense angle encoding (Eq.~\eqref{eq:encodingDense}). The attack is optimised using the \texttt{ADAM} optimiser~\cite{kingma2014adam}, and tested on  a QML model consisting of two-qubit unitary operations laid out in a brick-like fashion (see Section~\ref{ap:mps}). The simulations are performed using the matrix product state (MPS) simulator \texttt{quimb}~\cite{gray_quimb_2018} in which the bond dimension of the simulation is tracked throughout. The bond dimension of the MPS can be directly related to the entanglement entropy of the resulting state, and is maximised after $\sim n$ layers of the circuit. In Fig.~\ref{fig:numerics}(c), we consider a set of 20 random classifiers of varying circuit depth and plot the fraction of states for which the adversary can learn to induce a misclassification, finding a sharp decrease as the length of the circuits (and the entanglement present) increases. For $n>6$, the adversary fails to outperform a random perturbation long before the circuits become maximally expressible, which occurs at a depth of $\sim$$2^{n}$ layers. In the $6$ qubit case, the limiting behaviour of the adversary is to maintain a success probability greater than $0.5$. Indeed, as the system becomes small, the set of local unitaries becomes a larger portion of the total unitary group $\mathbb{U}\left(2^n\right)$, with the restrictions on the adversary relatively lessened. In Fig.~\ref{fig:numerics}(d) we consider a simple classification task where we classify the ``0'' and ``1'' handwritten digits from the MNIST dataset~\cite{mnist}. The input images were preprocessed using principal component analysis~\cite{jolliffe2016principal} to compress them to vectors of length $n_{\rm qubits}$, so that they may be angle encoded. In determining the universal attack success fraction, we take the average performance of $20$ adversaries trained using the same procedure as for the random case. Quantitatively, we find that the success probability increases compared to the random case, with a distinct transition at approximately $2N$ layers wherein the adversary goes from being moderately successful to having noticeably weakened performance. Intriguingly, we also find that in the large-depth limit, the performance of the adversary degrades with increasing system size. We do, however, note that for this simple classification task, one can train a highly accurate classifier with very few layers, obviating the need for large and deep classifiers to begin with.

\section{Discussion}
Despite considerable excitement and intense research activity, the prospect of quantum advantage in machine learning has remained questionable. Perhaps chief among the difficulties has been the discovery of barren plateaus in the training landscapes of generic variational quantum models, with a long sequence of papers~\cite{mcclean2018barren,wang2021noise,holmes2022connecting,cerezo2021cost,larocca2022diagnosing,patti2021entanglement,ragone2023unified,diaz2023showcasing,fontana2023adjoint} raising serious concerns about their trainability. 
With this phenomenon seemingly tied to the ability of the models to implement classically intractable operations, and known techniques for avoiding barren plateaus~\cite{pesah2021absence,schatzki2022theoretical,wang2023trainability,west2023provably} resulting in classically simulable circuits~\cite{cerezo2023does}, searches for alternate sources of advantage in QML are increasingly timely.
Within this context, the adversarial vulnerability of quantum models is a natural place to look, and has indeed recently attracted significant attention~\cite{lu2020quantum,liu2020vulnerability,du2021quantum,guan2021robustness,weber2021optimal,liao2021robust,kehoe2021defence,west2023towards,wu2023radio,west2023benchmarking,west2023drastic,berberich2023training,khatun2024quantum,winderl2024constructing,ren2022experimental}.

While it is not \textit{a priori} clear why one would expect quantum dynamics to be suited to implementing machine learning algorithms more efficiently than classical methods, guarantees against spoofing are far more in line with capabilities that naturally arise in, for example, quantum communication~\cite{bennett2014quantum}. Similarly, our results do not depend on the details of the model being secret, and are guaranteed purely by its quantum mechanical nature. 
In the universal adversarial attack case, for example, even if the classical adversary knows exactly what $\Wopt$ is, they are unable to apply any attack close to it when the corresponding LOE is high enough and a local data encoding strategy is employed.

The theoretical framework we have developed in this work attempts to examine the vulnerability of quantum classifiers in the context of a practically relevant threat model: that of an adversary who can manipulate classical data before it is sent to the quantum computer for encoding. Given the various existence proofs of adversarial examples for quantum models that can classify arbitrary states~\cite{liu2020vulnerability} or states smoothly generated from a Gaussian latent space~\cite{liao2021robust}, it is an interesting open question to characterise exactly which classes of states, and exactly which variational circuits, lead to quantum models with provable robustness guarantees. Moreover, it is a key question to understand to what extent ``active'' robustness methods can improve upon our ``passive'' results, whereby, e.g. encoding can be tailored to be resistant to kinds of adversarial attacks~\cite{Gong2024}. While our focus has been on the impossibility of implementing adversarial perturbations given only access to the classical data, showing, for example, that encoding schemes that do not produce entanglement yield provable guarantees against universal adversarial attacks,
also of interest is the computational cost of finding such attacks in the first place. Future searches for robustness guarantees could seek to connect the difficulty of spoofing with the difficulty of simulating the classifier itself, or that of understanding the data that is being classified.

It is also interesting, in the current era of noisy intermediate-scale (NISQ) quantum computers, to investigate the validity of our results in the presence of uncontrolled external noise. In this case, one would have to instead take $U_\theta \mapsto \mc{L}_\theta$, a completely positive trace preserving (CPTP) map, which generally describes open quantum system evolution~\cite{Wilde_2013}. Our results in the first two columns of Table~\ref{tab:results} readily hold also in this paradigm. In particular, the bound of Thm.~\ref{thm:unitary} immediately holds also for CPTP map classifiers $\mc{L}_\theta$, as we show in Section~\ref{ap:weak_proof}. In fact, as trace preserving maps can not increase the distance between quantum states~\cite{nielsen2010quantum}, with the distance remaining constant if and only if the dynamics are unitary, Thm.~\ref{thm:unitary} is in fact strengthened by the presence of noise-induced non-unitarity. A similar point was made in Ref.~\cite{du2021quantum} in the context of depolarisation noise, the strength of which was linked to differential privacy, a measure of the insensitivity of a map to changes in its input~\cite{zhou2017differential}. The key difficulty with such guarantees, strengthened though they are by increasing noise, is separating the (non-perturbed) predictions from zero in the first place. Further, in terms of Thm.~\ref{prop:otoc}, the only quantity in Eq.~\eqref{eq:otocLevy} depending on the circuit is the OTOC, which also tends to grow fast in noisy, scrambling systems~\cite{Zhang2019,PhysRevLett.131.160402}. For our results in the final column of Table~\ref{tab:results}, it is less clear as the LOE has not been well studied in the open systems setting. Finally, we note that in Ref.~\cite{mele2024noiseinduced} it is shown that noisy circuits are generally only be as useful as shallow circuits for computing expectation values of Pauli operators, which is what QML is largely concerned with. This suggests that noise-tolerance will be a necessary ingredient of QML advantage, before one needs to consider adversarial robustness.  

On this note, it remains to be seen to what extent the chaos-based quantum guarantees of adversarial robustness are consistent with the trainability of the safeguarded model in the first place. Highly entangling operators can suffer from entanglement-induced barren plateaus~\cite{marrero2021entanglement} (although this is not guaranteed~\cite{ragone2023unified}), as does the related problem of trying to learn a given scrambling operator~\cite{Holmes2021scrambling}. As the theory of the trainability of variational QML models continues to advance~\cite{ragone2023unified, fontana2023adjoint, diaz2023showcasing}, the search for models that maximise trainability and robustness while minimising classical simulability remains an important research direction.

\newpage
\onecolumngrid

\section{Methods}
Here we provide the proofs of all theorems presented in the main text, as well as further details of our techniques.

\subsection{Weak Targeted Attack} \label{ap:weak_proof}
\unitary*
\begin{proof}
To induce a misclassification, an adversary needs to perturb the result of the final $Z$ measurement by at least $|{y}_{\theta}(\x)|$. For shorthand we write the original encoded state as $\drho(\x) := \ket{\psi(\x)} \bra{\psi(\x)}$, the attacked state as $\drho(\x') := W \ket{\psi}\bra{\psi} W^\dg$.
Then, $\Pi_{\vec{i}} := \ket{i_1}\bra{i_1} \otimes \ket{i_2}\bra{i_2} \otimes \dots$ is a projection onto the computational basis, and $Z \equiv \sigma_z \otimes \sigma_z \otimes \dots$ is a $Z$-basis measurement on $k$ qubits. For a successful attack, we require
\begin{align}
     |{y}_{\theta}(\x)| &\leq |\Delta Z|  \\
     &= |\braket{Z}_{{U}(\drho(\x) - \drho(\x')){U}^\dg}|  \\
     &=| \tr[{Z} ({U}(\drho(\x) - \drho(\x)) {U}^\dg )]|\\
     &= \big|\sum_{\vec{i}} (-1)^{i_1 +i_2+i_3+\dots}\tr[\Pi_{\vec{i}} ({U}(\drho(\x) - \drho(\x')){U}^\dg )]\big|\\
    &\leq \sum_{\vec{i}} \left|\tr[\Pi_{\vec{i}} ({U}(\drho(\x) - \drho(\x')){U}^\dg )]\right|  \label{eq:linear}\\
    &\leq \max_{\{ P_i \} } \sum_{{i}} \left|\tr[P_{{i}} ({U}(\drho(\x) - \drho(\x')){U}^\dg )]\right|  )  \\
    &=  \| U (\drho(\x) - \drho(\x')) {U}^\dg\|_1 \label{eq:A7}\\
    &= { \| \drho(\x) - \drho(\x') \|_1}  \label{eq:A8}
\end{align}
where $\{ P_i\}$ is an arbitrary POVM, and the inequality fourth line comes from the fact that diagonal projections in the computational basis form a POVM, with $\sum_{\vec{i}}\Pi_{\vec{i}} = \id$. Here we have used the operational definition of the trace distance, and that the $1-$norm distance is unitarily invariant. 

Finally, this proof can be extended to to the case of general (non-unitary) evolution. Such evolution, which can e.g. describe noisy circuits, are described by CPTP maps characterised by Kraus operators~\cite{Wilde_2013}, where for some density matrix input $\rho$, 
\begin{equation}
    \rho \underset{\text{CPTP}}{\mapsto} \mc{L}(\rho) := \sum_j K_j \rho K_j^\dagger ,
\end{equation}
with $\sum_j K_j K_j^\dagger = \id$. Then the above proof can be amended by replacing the unitary evolution $U (\cdot) U^\dagger$ with $\sum_j K_j^\dagger (\cdot)) K_j$. Then going from \eqref{eq:A7} to \eqref{eq:A8} we can use the fact that trace norm distance is contractive under CPTP maps~\cite{nielsen2010quantum}, to arrive at the same result
\begin{align}
    |{y}_{\theta}(\x)| \leq \dots \leq \| \sum_j K_j (\drho (\x) - \drho (\x')) K_j^\dagger \|_1 \leq { \| \drho (\x) - \drho (\x') \|_1}.
\end{align}
\end{proof}

The operational significance of this result depends on how changes in the classical vector to be encoded translate to changes in the corresponding quantum state, and will depend on the specific encoding technique employed. We now undertake this analysis for two common encoding schemes and two different attack methods: (i) angle, (ii) dense angle, and (ii) amplitude encoding, under an adversarial attack $\x\mapsto\x+\epsilon\boldsymbol{w}$, with $\max_i |w_i| \sim 1$ and $|\epsilon|\ll 1$.
We begin by noting that for $\psi = \ket{\psi}\bra{\psi}$ and $\phi = \ket{\phi}\bra{\phi}$,
\begin{equation}
    \frac{1}{2}\|\drho - \phi\|_1^2 = 1 - |\braket{\psi|\phi}|^2.     \label{eq:one-norm}
\end{equation}
(i) First for angle encoding:
\begin{align}
    \left\lvert\braket{\psi_{\mathrm{angle}}(\x)|\psi_{\mathrm{angle}}(\x+\epsilon\boldsymbol{w})}\right\rvert^2&=\big\lvert\braket{\psi_{\mathrm{angle}}(\x)| \bigotimes_i w_i | \psi_{\mathrm{angle}}(\x)}\big\rvert^2 \label{eq:angleWeak1}\\ 
    &= \left\lvert\prod_{j=1}^N \bra{0}_j   \mathrm{e}^{i x_j \sigma_x} \mathrm{e}^{-i (x_j+\epsilon w_j) \sigma_x} \ket{0}_j\right\rvert^2 \\
    &=\prod_{j=1}^N \left\lvert\bra{0}_j  \mathrm{e}^{-i \epsilon w_j \sigma_x} \ket{0}_j\right\rvert^2 \label{eq:angleWeakTaylor1} \\
    &\approx \prod_{j=1}^N \left(1- \frac{\epsilon^2 w_j^2}{2} \right)^2 \label{eq:angleWeakTaylor2}\\
    &\approx 1 - N \epsilon^2 + \mc{O}\left(N^2\epsilon^4\right) \label{eq:angleWeak2}
\end{align}
where we have taken $\epsilon \ll 1$, and that $w_j \approx w_k \approx 1$ for any $1\leq j,k \leq N $. In going from Eq.~\eqref{eq:angleWeakTaylor1} to Eq.~\eqref{eq:angleWeakTaylor2} we have used that the linear term in the Taylor series expansion vanishes as $\bra{0}   \sigma_x \ket{0} =0$. Therefore, to lowest order in $\epsilon$, 
\begin{equation}
    \|\drho(\x) - \drho(\x') \|_1 =2 \sqrt{N} \epsilon + \mc{O}(\epsilon^2).
\end{equation}

(ii) Next we turn to dense angle encoding. Temporarily adopting the notation
$\ket{\theta,\phi}$ for a single qubit state with Bloch sphere angles ${\theta,\phi}$ and recalling the relation~\cite{nielsen2010quantum}
\[\left\lvert\braket{\theta,\phi|\alpha,\beta}\right\rvert^2 = 1 - \frac{1}{4}  \ \left|\left|\ \begin{pmatrix} \sin(\theta)\cos(\phi) \\  \sin(\theta)\sin(\phi) \\ \cos(\theta) \end{pmatrix} - \begin{pmatrix} \sin(\alpha)\cos(\beta) \\  \sin(\alpha)\sin(\beta) \\ \cos(\alpha) \end{pmatrix}  \ \right|\right|_2^2 \]
between the  fidelity of two single qubit states and the Euclidean distance between their Bloch sphere vectors, we have

\begin{align}
    \left\lvert\braket{\psi_{\mathrm{dense}}(\x)|\psi_{\mathrm{dense}}(\x+\epsilon\boldsymbol{w})}\right\rvert^2&=\big\lvert\braket{\psi_{\mathrm{dense}}(\x)| \bigotimes_i w_i | \psi_{\mathrm{dense}}(\x)}\big\rvert^2 \label{eq:denseWeak1}\\        
    &= \prod_{j=1}^{N/2} \left\lvert \bra{0}_j \mathrm{e}^{i x_{2j-1} \sigma_y} \mathrm{e}^{i x_{2j} \sigma_z}\mathrm{e}^{-i (x_{2j}+\epsilon w_{2j}) \sigma_z}   \mathrm{e}^{-i (x_{2j-1}+\epsilon w_{2j-1}) \sigma_y} \ket{0}_j\right\rvert^2 \\
    &= \prod_{j=1}^{N/2} \left\lvert \braket{x_{2j-1},x_{2j}|x_{2j-1}+\epsilon w_{2j-1},x_{2j}+\epsilon w_{2j}}\right\rvert^2 \\
    &= \prod_{j=1}^{N/2} \bigg( 1- \frac{1}{4}  
    \left|\left|\ 
    \begin{pmatrix}
        \sin(x_{2j-1}+\epsilon w_{2j-1})\cos(x_{2j}+\epsilon w_{2j}) \\ 
        \sin(x_{2j-1}+\epsilon w_{2j-1})\sin(x_{2j}+\epsilon w_{2j}) \\
        \cos(x_{2j-1}+\epsilon w_{2j-1})
    \end{pmatrix}
    -
    \begin{pmatrix}
        \sin(x_{2j-1})\cos(x_{2j}) \\ \sin(x_{2j-1})\sin(x_{2j}) \\ \cos(x_{2j-1})
    \end{pmatrix}
    \ \right|\right|_2^2  \bigg) \\
    &\approx \prod_{j=1}^{N/2} \bigg( 1- \frac{1}{4} \left|\left|\  \epsilon 
    \begin{pmatrix}
        w_{2j-1}\cos(x_{2j-1})\cos(x_{2j}) -  w_{2j}\sin (x_{2j-1})\sin (x_{2j})\\
         w_{2j-1}\cos(x_{2j-1})\sin (x_{2j}) +  w_{2j}\sin (x_{2j-1})\cos (x_{2j} )\\
        - w_{2j-1}\sin(x_{2j-1}) 
    \end{pmatrix}
    \ \right|\right|_2^2 \bigg) \\
    &= \prod_{j=1}^{N/2} \left( 1- \frac{\epsilon^2}{4} \left( w_{2j-1}^2 + \sin^2(x_{2j-1}) w_{2j}^2 \right)\right)\\
    &\sim  \prod_{j=1}^{N/2} \left(1-\epsilon ^2 \right)\\
    &\sim 1-N\epsilon ^2 + \mc{O}\left(N^2\epsilon^4\right)\\
\end{align}
again using that $w_j \sim 1 \ \forall j\in \{1,\ldots, N\}$.
So, as in the case of angle encoding, we find
\begin{equation}
    \|\drho(\x) - \drho(\x') \|_1 =  \sqrt{N} \epsilon + \mc{O}(\epsilon^2).
\end{equation}

(iii) Now for amplitude encoding,
\begin{align}
    |\braket{\psi_{\mathrm{amp}}(\x)|\psi_{\mathrm{amp}}(\x+\epsilon\boldsymbol{w})}|^2 &=\left| \frac{\sum_j \bra{j} x_j \sum_k (x_k+\epsilon w_k ) \ket{k}}{ \sqrt{\sum_n |x_n|^2} \sqrt{\sum_m |x_m + \epsilon w_m|^2 }} \right|^2 \label{eq:amp1}\\
    &= \frac{\left|\sum_j x_j^2 +\epsilon x_j w_j \right|^2}{|\x|^2 {\sum_m x_m^2 + 2 \epsilon w_m x_m+  \epsilon^2 w_m^2 }} \\
    &= \frac{\left||\x|^2 +\epsilon  \braket{\x,\boldsymbol{w}} \right|^2}{|\x|^2 ({|\x|^2+ 2 \epsilon  \braket{\x,\boldsymbol{w}}+  \epsilon^2 |\boldsymbol{w}|^2 } )} \\
    &=\left(|\x|^2 + 2  \epsilon \braket{\x,\boldsymbol{w}} +  \epsilon^2 \frac{|\braket{\x,\boldsymbol{w}}|^2}{| \x|^2} \right)  \left(\frac{1}{|\x|^2} - 2 \epsilon \frac{\braket{\x,\boldsymbol{w}} }{|\x|^4} + \epsilon^2 \frac{4 \braket{\x,\boldsymbol{w}}^2 - |\x|^2 |\boldsymbol{w}|^2}{|\x|^6} + \mathcal{O}(\epsilon^3) \right) \\
    &=1 + \epsilon \left( 2 \frac{ \braket{\x,\boldsymbol{w}} }{ |\x|^2} - 2  \frac{ \braket{\x,\boldsymbol{w}} }{ |\x|^2}  \right) + \epsilon^2 \left( -4 \frac{\braket{\x,\boldsymbol{w}}^2}{|\x|^4} + \frac{\braket{\x,\boldsymbol{w}}^2}{|\x|^4 } + \frac{4 \braket{\x,\boldsymbol{w}}^2 - |\x|^2 |\boldsymbol{w}|^2 }{|\x|^4}   \right) + \mathcal{O}(\epsilon^3) \\
    &=1 + \epsilon^2 \left( \frac{\braket{\x,\boldsymbol{w}}^2}{|\x|^4}  -\frac{|\boldsymbol{w}|^2 }{|\x|^2}   \right) + \mathcal{O}(\epsilon^3)
\end{align}
where we have employed the second order Taylor series expansion of $1/(a+b\epsilon+c\epsilon^2)$. Now, if we choose $\epsilon$ such that $|\boldsymbol{w}|^2  = |\x|^2$, then for small $\epsilon$
\begin{equation}
    |\braket{\psi_{\mathrm{amp}}(\x)|\psi_{\mathrm{amp}}(\x+\epsilon\boldsymbol{w})}|^2 = 1 - \epsilon^2 \left( 1-\frac{\braket{\x,\boldsymbol{w}}^2}{|\x|^4}   \right) + \mathcal{O}(\epsilon^3).
\end{equation}
Then, again using the pure state identity Eq.~\eqref{eq:one-norm}, to first order in $\epsilon$
\begin{equation}
    \|\drho(\x) - \drho(\x') \|_1 = \epsilon \sqrt{2 \left(1- \frac{\braket{\x,\boldsymbol{w}}^2}{|\x|^4}  \right)}  + \mc{O}(\epsilon^2).
\end{equation}
We see that this time the resulting expression does not scale with $N$, implying that a weak perturbation of each element of the classical vector leads to a weakly perturbed encoded state, irrespective of the dimension of data.

\subsection{(Strong) Local Attacks} \label{ap:local}
Here, we give some extra background and details on quantum information scrambling, and detail the proofs of the analytic result pertaining to local adversarial attacks.  

First, we will argue that changing a small number of bits $\ell \ll n$ of the classical data string $\x$ leads to a weakly perturbed state after amplitude encoding, with $|\x|_0 = n$, the length of the data bit-string. 
This will mean that Thm.~\ref{thm:unitary} can be applied directly to this case. Recall the effect of a weak perturbation in Eq.~\eqref{eq:amp1},
\begin{align}
    |\braket{\psi_{\mathrm{amp}}(\x)|\psi_{\mathrm{amp}}(\x+\epsilon\boldsymbol{w})}|^2 &=\left| \frac{\sum_j \bra{j} x_j \sum_k (x_k+\epsilon w_k ) \ket{k}}{ \sqrt{\sum_n |x_n|^2} \sqrt{\sum_m |x_m + \epsilon w_m|^2 }} \right|^2 
\end{align}
A local attack corresponds to substituting in the above 
 \begin{align}
        \epsilon w_i&\mapsto\begin{cases}
    w_i,&  i \in [a,a+\ell] \\
    0,              & i \notin  [a,a+\ell].
\end{cases} \label{eq:localCondition} 
\end{align}
Here, we have assumed that the encoding maps all the attacked pixels to adjacent bits, which we are free to choose. Then, we have that 
\begin{align}
    |\braket{\psi_{\mathrm{amp}}(\x)|\psi_{\mathrm{amp}}(\x')}|^2 &=\left| \frac{\sum_j \bra{j} x_j \sum_k (x_k+\epsilon w_k ) \ket{k}}{ \sqrt{\sum_n |x_n|^2} \sqrt{\sum_m |x_m + \epsilon w_m|^2 }} \right|^2 \\
    &= \left| \frac{1 + \frac{\langle \mathbf{w} , \mathbf{x}  \rangle }{| \x |^2 }}{\sqrt{1 + \frac{|\mathbf{w}|^2}{|\x|^2} + \frac{2 \langle \mathbf{w} , \mathbf{x}  \rangle }{| \x |^2 }} } \right|^2.
\end{align}
Now, assuming that $|x_i|,|w_i| \leq 1$, from counting arguments both
\begin{equation}
    \frac{\langle \mathbf{w} , \mathbf{x}  \rangle }{| \x |^2 } = 
    \frac{1 }{| \x |^2 }\sum_{i=a}^{a+\ell} x_i w_i  
    \sim \frac{\ell}{n} :=\epsilon 
\end{equation}
and similarly
\begin{equation}
    \frac{|\mathbf{w}|^2}{|\x|^2} \approx \epsilon.
\end{equation}
Following a similar argument to Eq.~\eqref{eq:angleWeak1}-\eqref{eq:angleWeak2}, then 
\begin{equation}
     \| \ket{\psi_{\mathrm{amp}}(\x)} \bra{\psi_{\mathrm{amp}}(\x)} -  \ket{\psi_{\mathrm{amp}}(\x')} \bra{\psi_{\mathrm{amp}}(\x')} \|_1 = \epsilon + \mc{O}(\epsilon^2).
\end{equation}
A similar argument does not apply to angle encoding. 

Now, we will prove a relation between quantum information scrambling and (local) Adversarial Attacks.

\localOTOC*

\begin{proof}
Now, let us assume that an adversary may change the initial quantum state by some arbitrary operator $w$. We take an encoding-agnostic approach to this result, and stress that a similar result should hold upon specifying a particular scheme. 
Recall the original and spoofed expectation values 
\begin{align}
    &y_\theta(\x) =\braket{\psi(\x) |   Z_{U}  |\psi(\x)  }, \text{ and,}\\
    &y_\theta (\x') =\braket{\psi(\x) | W^\dg Z_{U} W |\psi(\x)  },
\end{align}
where both $W$ and $Z$ (not circuit-evolved) are (relatively) local operators, and both $\psi(\x)$ and $W=W(\x)$ may depend on the classical state $\x$ (i.e. the original data). Here, $Z_{U} = U_\theta^\dg Z U_\theta $. Now, we want to know whether the adversary can spoof using only local but possibly strong `attacks' (operators) $W$. 

We consider values of $m'$, with a sampling of the initial state $\ket{\psi} = V \ket{0}$ over a unitary 2-design $\mce$. If we want to use the full concentration of measure results -- as opposed to using Chebyshev's inequality -- it would need to instead be over the full Haar measure. We can solve both for the average and variance of $y_{\theta}(\x) - y_{\theta}(\x')$ over this ensemble. Then, if we can bound the variance as $\sigma^2 \leq X$ (or just compute the variance exactly), then from Chebyshev we know that for $\delta >0$
\begin{equation}
    \Pr \left\{ |y_{\theta}(\x) - y_{\theta}(\x') - \mu | \geq \delta' \sqrt{X} \right\} \leq \frac{1}{(\delta')^2}.
\end{equation}
Then the average is 
\begin{align}
    \mathbb{E}_{V \sim \mce} ( y_{\theta}(\x) - y_{\theta}(\x') ) &= \mathbb{E} ( \braket{\psi(\x) | V^\dg    Z_{U}  V |\psi(\x)  } - \braket{\psi(\x) | V^\dg  W^\dg  Z_{U}  W V |\psi(\x)  } )\\
    &=\tr[Z_{U} ] - \tr[W^\dg  Z_{U}  W] =0,
\end{align}
given the unitarity of $U_\theta, W$, and the traceless property of $Z$. For clarity we have dropped the notation of the measure dependence. Now, the variance is
\begin{align}
    \sigma^2 &= \mathbb{E}((y_{\theta}(\x) - y_{\theta}(\x'))^2) - \mathbb{E}((y_{\theta}(\x) - y_{\theta}(\x')))^2  \\
    &= \mathbb{E}(y_{\theta}(\x)^2 -2 \mathrm{Re}(y_{\theta}(\x)  y_{\theta}(\x')) + y_{\theta}(\x')^2). \label{eq:variance}
\end{align}
Here we can use the explicit expression for a $2-$fold average over the Haar ensemble, derivable from Weingarten calculus~\cite{mele2023introduction}. For some tensor $X \in \mc{H}\otimes \mc{H}$ this is~\cite{Roberts2017-en}
\begin{align}
    &\Phi^{(2)}_\mathrm{Haar}(X):= \int dU U \otimes U (X) U^\dg \otimes U^\dg \label{eq:2fold}\\
    &\quad= \frac{1}{d^2-1} \left(\id \tr[X] + \mathbb{S} \tr[\mathbb{S} X] -\frac{1}{d}\mathbb{S} \tr[X] -\frac{1}{d} \id \tr[\mathbb{S} X] \right), \nn
\end{align}
Note also that for any $X$, by definition the $2-$fold Haar average is equal to the $2-$fold average over the unitary $2-$design $\mce$, that is $\Phi^{(2)}_\mathrm{Haar}(X) = \Phi^{(2)}_\mce(X)$. 

Then handling the terms of Eq.~\eqref{eq:variance} one at a time,
\begin{align}
    \mathbb{E}(y_{\theta}(\x)^2) &= \frac{1}{d^2-1} \left( \tr[Z_{U} ]^2 \braket{\psi(\x)|\psi(\x)}^2 + \tr[Z_{U}^2 ] \braket{\psi(\x)|\psi(\x)}^2 -\frac{1}{d}\tr[Z_{U}^2 ] \braket{\psi(\x)|\psi(\x)}^2 -\frac{1}{d} \tr[Z_{U} ]^2  \braket{\psi(\x)|\psi(\x)}^2  \right) \\
    &= \frac{1}{d^2-1} \left( \tr[\id] -\frac{1}{d}\tr[\id ]  \right) = \frac{1}{d+1}.
\end{align}
Then it is easy to check that $\mathbb{E}(y_{\theta}(\x)^2) = \mathbb{E}(y_{\theta}(\x')^2) $ by repeating the above calculation but with $Z_{U} \to W^\dg Z_{U} W$. The non-trivial component is then 
\begin{align}
    \mathbb{E}( \mathrm{Re}(y_{\theta}(\x)  y_{\theta}(\x'))) &=  \frac{1}{d^2-1} \bigg( \tr[Z_{U} ] \tr[W^\dg Z_{U} W ]  + \tr[Z_{U} W^\dg Z_{U} W] -\frac{1}{d} \tr[Z_{U} W^\dg Z_{U} W]\nn \\ 
    &\hspace{47.5mm}-\frac{1}{d} \tr[Z_{U} ] \tr[W^\dg Z_{U} W ]  \bigg) \braket{\psi(\x)|\psi(\x)}^2 \\
    &=\frac{1}{d (d+1)}\tr[Z_{U} W^\dg Z_{U} W].
\end{align}
Remarkably, $\frac{1}{d}\tr[Z_{U} W^\dg Z_{U} W]$ (with $W, Z$ being local) is exactly the OTOC: a measure of quantum information scrambling. 

Subbing this into Chebyshev's inequality, and choosing $\delta' = \delta \sqrt{\frac{2}{d+1} (1 - \frac{1}{d}\mathrm{Re}(\tr[Z_{U} W^\dg Z_{U} W]))} >0$
\begin{align}
    &\Pr \left\{ |y_{\theta}(\x) - y_{\theta}(\x') | \geq \delta' \sqrt{\frac{2}{d+1} (1 - \frac{1}{d}\mathrm{Re}(\tr[Z_{U} W^\dg Z_{U} W]))}\right\} \leq \frac{1}{(\delta')^2} \\
    \iff& \Pr \left\{ |y_{\theta}(\x) - y_{\theta}(\x') | \geq \delta\right\} \leq \frac{2 (1 - \frac{1}{d}\mathrm{Re}(\tr[Z_{U} W^\dg Z_{U} W]))}{(d+1)\delta^2} = \frac{\braket{ [Z_{U},W ]^2 }}{(d+1)\delta^2} \sim \frac{ \exp{  \left[\lambda \, \mathrm{depth}(U_\theta )\right]}}{(d+1)\delta^2} 
\end{align}
Here, we have chosen $\delta' $ such that this bound corresponds to the necessary change $|y_{\theta}(\x) - y_{\theta}(\x') |$ required to spoof the outcome is $\delta$. Note that the exponential in the final ``$\sim$'' is for early time behaviour of scrambling systems, and that $0\leq \braket{ [Z_{U},W ]^2 }\leq 1$. 
\end{proof}

One could extend the above result to specific encoding schemes. For example, one could replace the Haar averaging above with an averaging over the local angles of Pauli-$Y$ rotations in angle encoding \eqref{eq:encodingAngle}. This results in a more complex expression, with terms proportional to different OTOCs and related quantities.

\subsection{(Strong) Universal Attack}\label{ap:universal}
Here, we give further details on the robustness of QML circuits against universal attacks, from the chaoticity of the circuit.  

First, we provide a proof of our main technical result in this setting.
\universal*
\begin{proof}
We assume that an adversary only has access to modifying the classical input $ \x $. Note that we do not assume anything about the weakness of the attack, in contrast to Prop.~\ref{thm:unitary}, but rather that a single attack should flip the prediction of all inputs $\x$. Mathematically, we replace the classical input data $\x$ with
\begin{equation}
     \x \underset{\text{attack}}{\rightarrow}  \x^\prime =:  \x+ \w
\end{equation}
where $\boldsymbol{w}$ is independent of $\x$.
Rewriting the full algorithm under the effects of the attack, 
\begin{equation}
    \x^\prime \underset{\text{encode}}{\rightarrow} \ket{\psi(\x^\prime)} = W \ket{\psi(\x)}  = \underset{\text{\phantom{..}QVC\phantom{..}}}{\rightarrow} U_{\theta} \ket{\psi(\x^\prime)} \underset{\text{measure}}{\rightarrow} \tr[(Z \otimes \id) U W \ket{\psi}\bra{\psi}W^\dg U^\dg] =: y_\theta (\x')  \label{eq:QML_sequence_spoof}.
\end{equation}
The specific form of the induced unitary $W$ will depend on the form of data encoding employed; for example, for dense encoding (Eq.~\eqref{eq:encodingDense}) which for convenience we restate here: 
\begin{equation}
    W \ket{\psi(\x)} = \bigotimes_{j=1}^{n/2} \mathrm{e}^{-i (x_{2j}+ w_{2j}) \sigma_z}\mathrm{e}^{-i (x_{2j-1}+ w_{2j-1}) \sigma_y} \ket{0}_j ,
\end{equation}
$W$ is of the form of a tensor product of single qubit unitaries. More generally, whenever
the adversary only has access to the classical data and a local encoding is employed, $W=W_1 \otimes W_2 \otimes \dots \otimes W_N$. This motivates the investigation of the quantity Eq.~\eqref{eq:distance} as the distance between a unitary that implements a universal adversarial attack, and the class of unitaries which the adversary is actually able to implement. 

In the universal adversarial attack scenario, the adversary wishes to simultaneously change the prediction of \textit{all} input states. With the predictions given by Eq.~\eqref{eq:prediction}, this implies
\begin{align}
    &y_{\theta}(\x+\boldsymbol{w}) {=} -y_{\theta}(\x) \hspace{72.5mm}\forall\x\\
    \iff& \tr[Z U W \ket{\psi(\x)}\bra{\psi(\x)}W^\dg U^\dg] = - \tr[Z  U \ket{\psi(\x)}\bra{\psi(\x)}U^\dg] \hspace{10mm}\forall\x\\
    \iff& 0= \bra{\psi(\x)} U^\dg  [ Z + \WU{U^\dg}^\dg Z \WU{U^\dg} ] U \ket{\psi(\x) } \hspace{41.5mm}\forall\x\label{eq:spoofed}
\end{align}
where $\WU{U^\dg} := U W U^\dg $ and $Z := \sigma_z^{\otimes k} \otimes \id^{n-k}$ is a Pauli-$Z$ measurement on the first $k$ qubits. For a universal adversary (with respect to the encoding of Eq.~\eqref{eq:encodingDense}), this equation must be satisfied for any initial product state, and thus by linearity for any state. In light of this, we conclude that 
\begin{equation}
    Z + \WU{U^\dg}^\dg Z \WU{U^\dg} = {0}. \label{eq:op_spoof}
\end{equation}
The general $\WU{U^\dg}$ that satisfies Eq.~\eqref{eq:op_spoof} is
\begin{equation}
    \WU{U^\dg} = \sum_{ij} c_{ij} F^{(k)}_i \otimes P_j, \label{eq:general_flip}
\end{equation}
where $F^{(k)}_i $ is a flip operator on the $k$ qubits which $Z$ measures, i.e. in the Pauli basis, $F^{(k)}_i $ has an odd number of $\sigma_x$ and $\sigma_y$ local basis elements, ensuring that it anti-commutes with $\sigma_z^{\otimes k} $. The index $i$ iterates over the Pauli strings satisfying this, and $j$ over all Pauli strings of length $n-k$. Now, what does this tell us about the spoofing operator $W$? We know that $W$ is restricted to be a product operator, due to our assumption on the dense angle encoding method \eqref{eq:encodingDense}. We will investigate how close the adversary can get to a perfect universal spoof, through the Hilbert-Schmidt distance
\begin{align}
   D :=&  \inf_{W_1,\dots, W_n} \frac{1}{2d^2} \| \Wopt \otimes \Wopt^* - (W_1 \otimes  \dots \otimes W_n)\otimes (W_1^* \otimes  \dots \otimes W_n^*)  \|_2^2, \label{eq:distance_spoof}
\end{align}
where the normalization of $1/2d^2$ is introduced such that $0 \leq D \leq 1$. $D$ measures how close the adversary can get to a universal spoofing. From Eq.~\eqref{eq:distance_spoof}, $\Wopt$ is the (backwards-)time-evolved flip operator $\Wopt:= U (F_k \otimes \id) U^\dg$. We have that 
\begin{align}
     D &=  \inf_{W_1,\dots, W_n} \frac{1}{2d^2} (\tr[\Wopt^2]^2 + \tr[(W_1 \otimes  \dots \otimes W_n)^2]^2 - 2 \tr[\Wopt ( W_1 \otimes  \dots \otimes W_n)^\dg ]^2  ) \label{eq:proof1b} \\
    &= \inf_{W_1,\dots, W_n} \frac{1}{2d^2} (2 d^2 - 2d^2 \langle{\Wopt } | { W_1 \otimes  \dots \otimes W_n} \rangle^2  ) \label{eq:proof1c} \\ 
    &= 1- |\lambda_{\infty}^{(1)} \lambda_{\infty}^{(2)} \cdots \lambda_{\infty}^{(n)}|^2  = 1 - \ex^{- S^{(\infty)}(\mc{H}_{1}:\mc{H}_{2:n})} \ex^{-S^{(\infty)}(\mc{H}_{1:2}:\mc{H}_{3:n})} \dots \ex^{-S^{(\infty)}(\mc{H}_{1:n-1}:\mc{H}_{n}) }\label{eq:proof1d}
\end{align} 
In the third line we have used that $\tr[\Wopt ( W_1 \otimes  \dots \otimes W_n)^\dg ] = d \bra{\phi^+} \Wopt ( W_1 \otimes  \dots \otimes W_n)^\dg \ket{\phi^+}$, and defined the Choi states $\ket{ \Wopt} := \Wopt \ket{\phi^+} $ for the normalised bell state $\ket{\phi^+}:= 1/\sqrt{d} \sum_j \ket{jj}$. Then $\ket{W_1 \otimes  \dots \otimes W_n}$ is a product state as we assume $W_i$ to be unitary (the effective ability of an adversary under dense angle encoding \eqref{eq:encodingDense}). In the penultimate line we have used that the largest fidelity of a bipartite state with any product state corresponds to the largest singular value $|\lambda_{\infty}|^2$ from the Schmidt decomposition across the bipartition. Then, generalised to the closest multipartite state, the largest fidelity corresponds the product of the largest singular values $|\lambda_\infty^{(i)}|^2$, across all congruent bipartitions of the first $i$ qubits: the next $n-i$ qubits. $S^{(\infty)}(\mc{H}_{i}:\mc{H}_{i+1:n}) := -\log(|\lambda_\infty^{(i)}|^2)$ is the min-entropy of the reduced state of the flip operator $\Wopt$ across this bipartition (also called the $\infty-$R\'enyi entropy). The final line \eqref{eq:proof1d} is an exact expression, and can be seen as an alternative version of Thm.~\ref{thm:universal}. Both the min-entropy and the $2-$R\'enyi entropies are valid measures of the LOE.

To arrive at our final result, using Eq.~\eqref{eq:proof1d} we can bound this from both above and below: 
\begin{equation}
   (1 - |\lambda_{\infty}^{(i)} |^2)\leq  D \leq \max_i ( 1 - |\lambda_{\infty}^{(i)}|^{2n}) 
\end{equation}
where the lower bound is valid for any $i$, while the upper bound is for the smallest $\lambda_{\infty}$ across any cut. Then, applying the identity $S^{(\infty)} \leq S^{(2)} \leq 2 S^{(\infty)}$,
\begin{align}
   &1 - \ex^{-S^{(\infty)} (\nu)} \leq D \leq  1 - \ex^{- n S^{(\infty)}(\nu)}  \\
   \iff& 1-\ex^{- \frac{1}{2}S^{(2)}(\nu)}\leq D\leq  1 - \ex^{-n S^{(2)}(\nu)}.
\end{align}
Here, as the left hand side is valid for any $i$, we choose the largest lower bound, corresponding to the largest $2-$R\'enyi entropy across any contiguous bipartition (largest entanglement across any cut). The right hand side is also the largest entropy over bipartitions, so both the upper and lower bound are the same $2-$R\'enyi entropy LOE.
\end{proof}

Let's look at the limits of this bound. Both $S^{(2)}$ and $S^{(\infty)}$ are at most $\log(d/2) = \log(2^{n-1}) \approx n$ (for subsystem of half of total qubits). Then 
\begin{equation}
    1 - \ex^{-n/2} \leq D \leq 1 - \ex^{- n^2},
\end{equation}
and so $D(\Wopt,W_{\text{prod}}) \approx 1$. On the other hand, for $S^{(2)} \approx S^{(\infty)} \approx 0$, we have that $D \approx 0$, which is relevant to the case of Corollary~\ref{cor:clifford}.

Now we will detail the case of when a spoof is approximately strongly universal.
Suppose that instead of finding unitary $\Wopt$ of the form given in Eq.~(\ref{eq:anticomm}), that is, 
\[ \bra{\psi} U_{\theta}^\dagger Z U \ket{\psi} = - \bra{\psi} \Wopt^\dagger U_{\theta}^\dagger Z U_{\theta} \Wopt\ket{\psi}\] 
for all states $\ket{\psi}$, one found a unitary $W$ such that 
\begin{align*}
    \left|\bra{\psi} U_{\theta}^\dagger Z U \ket{\psi} + \bra{\psi} W^\dagger U_{\theta}^\dagger Z U_{\theta} W\ket{\psi}\right| \leq \epsilon
\end{align*}
for all states $\ket{\psi}$.
If $W$ satisfies the above condition, then we call $W$ a $\epsilon$-universal spoof. 
The following results show that if one finds some $W$ such that $\|\Wopt - W\|_\infty \leq \epsilon$ (where again $\|\cdot\|_\infty$ is the Schatten-$\infty$ norm or spectral norm, equivalently the induced $2$-norm), then $W$ is a $2\epsilon$-universal spoof. This is a less strict condition than saying that $\|\Wopt - W\|$ in Schatten 2-norm (Frobenius norm), as the Schatten-$\infty$ norm is bounded by the Schatten 2-norm. Throughout the following proof we leverage the equivalence of the Schatten $\infty$-norm, the spectral norm, and the induced matrix $2$-norm. 

\begin{lemma}
    Let $U, V, W$ be unitary matrices of the same dimension. Suppose $\|U - V\|_\infty \leq \epsilon$. Then for all choices of $W$, 
    \begin{align*}
        \|UWU^\dagger - VWV^\dagger\|_\infty \leq 2\epsilon.
    \end{align*}
\end{lemma}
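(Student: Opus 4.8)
The plan is to prove this by a single ``add and subtract'' telescoping step, relying only on three elementary properties of the spectral norm $\|\cdot\|_\infty$ (equivalently the Schatten-$\infty$ norm, equivalently the induced matrix $2$-norm, as emphasised just above the statement): submultiplicativity $\|AB\|_\infty \leq \|A\|_\infty\|B\|_\infty$, the fact that any unitary $Q$ has $\|Q\|_\infty = 1$, and invariance under the adjoint, $\|A^\dagger\|_\infty = \|A\|_\infty$.

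First I would insert the cross term $VWU^\dagger$ and regroup into the exact identity
\begin{equation}
    UWU^\dagger - VWV^\dagger = (U-V)\,WU^\dagger + VW\,(U^\dagger - V^\dagger),
\end{equation}
valid for arbitrary matrices, obtained by writing $UWU^\dagger - VWU^\dagger + VWU^\dagger - VWV^\dagger$ and factoring the first pair on the left and the second pair on the right.

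Next I would apply the triangle inequality and then submultiplicativity to each summand. For the first term, $\|(U-V)WU^\dagger\|_\infty \leq \|U-V\|_\infty\|W\|_\infty\|U^\dagger\|_\infty = \|U-V\|_\infty$, since $W$ and $U$ are unitary. For the second term, $\|VW(U^\dagger - V^\dagger)\|_\infty \leq \|V\|_\infty\|W\|_\infty\|U^\dagger - V^\dagger\|_\infty = \|U^\dagger - V^\dagger\|_\infty$. The only step that deserves a remark is $\|U^\dagger - V^\dagger\|_\infty = \|(U-V)^\dagger\|_\infty = \|U-V\|_\infty$, by adjoint-invariance of the spectral norm. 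Both contributions are therefore bounded by $\|U-V\|_\infty \leq \epsilon$, yielding the claimed bound $2\epsilon$.

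There is no genuine obstacle here: the argument is essentially forced once one decides to split the difference symmetrically so that each piece isolates exactly one factor of $U-V$, and the unitary norm-$1$ factors then drop out cleanly. The only points requiring care are choosing the cross term that achieves this cancellation and recording that the adjoint leaves the norm unchanged. Downstream this lemma is invoked with its $(U,V)$ identified as the realised attack $W$ and the ideal strong universal attack $\Wopt$, and its ``$W$'' taken to be the (unitary and Hermitian) observable $Z_U$; combined with the defining relation $Z_U = -\Wopt^\dagger Z_U \Wopt$, it gives $\left|\bra{\psi}(Z_U + W^\dagger Z_U W)\ket{\psi}\right| \leq \|W^\dagger Z_U W - \Wopt^\dagger Z_U \Wopt\|_\infty \leq 2\epsilon$, which is precisely the $2\epsilon$-universal-spoof statement advertised earlier in this section.
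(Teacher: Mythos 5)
Your proof is correct and follows essentially the same route as the paper's: both arguments insert the cross term $VWU^\dagger$ and bound the two resulting pieces by $\epsilon$ each. The only (cosmetic) difference is that you work directly at the operator level using submultiplicativity and adjoint-invariance of $\|\cdot\|_\infty$, whereas the paper applies the same decomposition to an arbitrary unit vector $\ket{a}$ and passes to the induced $2$-norm; your version is, if anything, slightly more streamlined.
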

\begin{proof}
    The assumption that $\|U - V\|_\infty \leq \epsilon$ (and thus $\|U^\dagger - V^\dagger\|_\infty \leq \epsilon$) immediately implies that $\|U^\dagger\ket{a} - V^\dagger \ket{a}\|_{\ell_2} \leq \epsilon$ for all unit vectors $\ket{a}$, where for clarity we use $\|\cdot\|_{\ell_2}$ to denote the Euclidean norm or vector 2-norm. Since $W$ is a unitary matrix and Schatten $p$-norms are invariant under unitary transformation,  $\|WU^\dagger\ket{a} - WV^\dagger \ket{a}\|_\infty \leq \epsilon$.

    Define the (non-unit) vector $\bm{b} := \bm{b}(\ket{a})$ such that $W V^\dagger \ket{a} = WU^\dagger \ket{a} - \bm{b}$, which by definition has (vector) $2$-norm at most $\epsilon$. 
    The definition of $\bm{b}$ implies that    
    \begin{align}
        UWU^\dagger\ket{a} - VWV^\dagger\ket{a} = UWU^\dagger\ket{a} - V(WU^\dagger \ket{a} - \bm{b}) = UWU^\dagger\ket{a} - VWU^\dagger\ket{a} + V\bm{b}.
    \end{align}
    Therefore, for an arbitrary unit vector $\ket{a}$, 
    \begin{align}
        \|UWU^\dagger\ket{a} - VWV^\dagger\ket{a}\|_{\ell_2} = \|UWU^\dagger\ket{a} - VWU^\dagger\ket{a} + V\bm{b}\|_{\ell_2} \leq \|(U-V)WU^\dagger\ket{a}\|_{\ell_2} + \|V\bm{b}\|_{\ell_2} \leq 2\epsilon.
    \end{align}
    This implies that $\|UWU^\dagger - VWV^\dagger\|_\infty \leq 2\epsilon$, where again $\|\cdot\|_\infty$ is the Schatten $\infty$-norm or spectral norm.
\end{proof}

\begin{crllr}
    Let $\Wopt$ be a perfect universal spoof. If $\|\Wopt - W\|_\infty \leq \epsilon$, then $W$ is a $2\epsilon$-universal spoof. 
\end{crllr}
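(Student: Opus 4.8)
The plan is to obtain this Corollary as a direct consequence of the preceding Lemma, applied to the conjugation of the Heisenberg-evolved measurement operator $Z_U = U_\theta^\dg Z U_\theta$ rather than to an abstract unitary. The observation that unlocks the Lemma is that $Z_U$ is itself unitary: since $Z = \sigma_z^{\otimes k}\otimes\id^{n-k}$ squares to the identity it is a unitary involution, and conjugation by the unitary $U_\theta$ preserves unitarity. Hence $Z_U$ may legitimately play the role of the fixed unitary denoted $W$ in the Lemma, with the two ``nearby'' unitaries being $\Wopt$ and the approximate spoof $W$.

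First I would record that the spectral norm is invariant under taking adjoints, so the hypothesis $\|\Wopt - W\|_\infty \leq \epsilon$ equally gives $\|\Wopt^\dg - W^\dg\|_\infty \leq \epsilon$. I would then invoke the Lemma with the substitutions $U\mapsto\Wopt^\dg$, $V\mapsto W^\dg$, and fixed unitary $Z_U$, which produces
\begin{equation}
\|\Wopt^\dg Z_U \Wopt - W^\dg Z_U W\|_\infty \leq 2\epsilon.
\end{equation}
Next I would invoke the defining property of a perfect (strong) universal spoof, Eq.~\eqref{eq:anticomm}, in the rearranged form $\Wopt^\dg Z_U \Wopt = -Z_U$. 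Substituting this collapses the first term and yields
\begin{equation}
\|Z_U + W^\dg Z_U W\|_\infty \leq 2\epsilon.
\end{equation}
Finally, since $|\bra{\psi} A \ket{\psi}| \leq \|A\|_\infty$ for every unit vector $\ket{\psi}$ (indeed with equality in the supremum for the Hermitian operator $A = Z_U + W^\dg Z_U W$), I conclude $|\bra{\psi}(Z_U + W^\dg Z_U W)\ket{\psi}| \leq 2\epsilon$ uniformly in $\ket{\psi}$, which is exactly the assertion that $W$ is a $2\epsilon$-universal spoof.

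There is no genuine obstacle here; the argument is a short deduction once the Lemma is established, and the substantive work has already been done in proving that Lemma. The only points requiring care are bookkeeping: one must track the daggers so that the conjugation appearing is $\Wopt^\dg Z_U \Wopt$ (rather than $\Wopt Z_U \Wopt^\dg$), matching the orientation of the anticommutation relation~\eqref{eq:anticomm}, and one must justify that the spectral-norm bound on the operator transfers to a bound on expectation values that holds for \emph{all} input states simultaneously. Both follow immediately from the unitary invariance of the Schatten norms and the variational characterisation of $\|\cdot\|_\infty$.
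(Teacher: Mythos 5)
Your proposal is correct and follows essentially the same route as the paper: both apply the preceding Lemma with $\Wopt^\dg,W^\dg$ as the nearby unitaries conjugating the unitary involution $Z_U$, then use the perfect-spoof relation $\Wopt^\dg Z_U \Wopt=-Z_U$ and the variational characterisation of the spectral norm to pass to expectation values. The only cosmetic difference is that you substitute the anticommutation identity at the operator level inside the norm, whereas the paper does so after restricting to diagonal matrix elements; the two are equivalent.
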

\begin{proof}
    We assume that $\Wopt, W$, and $U_\theta^\dg ZU_\theta$ are unitary. Thus, the above lemma implies that 
    \begin{align}\label{eqn:approx-flip}
        \|\Wopt^\dg U_\theta^\dg ZU_\theta\Wopt - W^\dg U_\theta^\dg ZU_\theta W\|_\infty \leq 2\epsilon.
    \end{align}
    An equivalent characterisation of the spectral norm/Schatten $\infty$-norm is 
    \begin{align*}
        \|A\|_\infty = \max_{\ket{\alpha}, \ket{\beta} \in S^N} |\bra{\alpha}A\ket{\beta}|.
    \end{align*}
    Thus, Eq.~(\ref{eqn:approx-flip}) immediately implies that 
    \begin{align*}
        \max_{\ket{\alpha}, \ket{\beta} \in S^N} \left| \bra{\alpha} \Wopt^\dagger U_\theta^\dagger Z U_\theta \Wopt \ket{\beta} - \bra{\alpha}W^\dagger U_\theta^\dg Z U_\theta W \ket{\beta} \right| \leq 2\epsilon.
    \end{align*}
    Setting $\ket{\alpha} = \ket{\beta} = \ket{\psi(\bm{x})}$ gives the desired result.
\end{proof}

This has ramifications for the effect of the spoof on the states of interest -- as mentioned earlier, $U_{\theta}$ would often be trained until the expectation values for the training data are bounded away from $0$ by a constant. Combined with the above lemma, this suggests that being $\epsilon$-close to a universal spoof, for some small constant $\epsilon$, will result in a large number of the ``relevant'' states being misclassified if the empirical risk does not vanish. 

It is worth noting that similar methods do not seem to work to get a corresponding lower bound (i.e., that a lower bound on $\|U-V\|_\infty$ implies a lower bound on $\|UWU^\dagger - VWV^\dagger\|$). For example, if $W$ commutes with $U$ and $V$, then $\|UWU^\dagger - VWV^\dagger\| = 0$ trivially. Furthermore, in general there may be more than one such ``perfect'' universal spoof, and so one would need to ensure that a given candidate spoof attempt $W$ was bounded away from all valid choices of $\Wopt$. These together suggest that different methods would be needed to determine a corresponding lower bound.

To investigate the operational meaning of the bound of Thm.~\ref{thm:universal}, we conduct numerical simulations that the probe the effectiveness of approximations to universal adversarial attacks by testing the fraction of states misclassified following the attack.
We consider an imperfect universal spoof $W_{\mathrm{approx}}$ satisfying
$\| \Wopt-W_{\mathrm{approx}} \|_2 = \epsilon \sqrt{d} >0$, with $\Wopt$ the perfect spoof and $\epsilon \ll 1$, with $W_{\mathrm{approx}}=e^{-i\epsilon H}\Wopt e^{i\epsilon H}$ for some hermitian $H$ normalised such that $\| H\|_2=\sqrt{d}$.
Then  
\begin{align*}
W_{\mathrm{approx}}^\dg Z W_{\mathrm{approx}}&= \Wopt^\dg \left( \id - i\epsilon H \right)Z \left( \id + i\epsilon H \right)\Wopt\\
&\approx \Wopt^\dg Z \Wopt  - i\epsilon   \Wopt^\dg [H, Z] \Wopt
\end{align*}
A successful spoof (on a given input state $\ket{\psi}$) occurs if $\bra{\psi} W_{\mathrm{approx}}^\dg Z W_{\mathrm{approx}} \ket{\psi}$  has the same sign as $\bra{\psi} \Wopt^\dg Z \Wopt \ket{\psi}$; this therefore happens with probability 
\begin{equation}
    1 - 1/2 \Pr_{\ket{\psi}\sim \mu_{\mbu}}  \left(|\bra{\psi} \Wopt^\dg Z \Wopt \ket{\psi} | < |\bra{\psi} i\epsilon   \Wopt^\dg [H, Z] \Wopt\ket{\psi} |\right) = 1 - 1/2 \Pr_{\ket{\psi}\sim \mu_{\mbu}}  \left(|\bra{\psi}  Z  \ket{\psi} | < |\bra{\psi} i\epsilon  [H, Z] \ket{\psi} |\right)
\end{equation}   
where the factor of one half accounts for the possibility that $|\bra{\psi} i\epsilon    [H, Z] \ket{\psi} |>|\bra{\psi} Z  \ket{\psi} |$ but is of the same sign, in which case the adversarial attack will be successful. Said another way, the adversarial attack is unsuccessful if and only if both $|\bra{\psi} i\epsilon    [H, Z] \ket{\psi} |>|\bra{\psi} Z  \ket{\psi} |$ and $\mathrm{sgn}\left(\bra{\psi} i\epsilon    [H, Z] \ket{\psi} \right)\neq \mathrm{sgn}\left(\bra{\psi} Z  \ket{\psi} \right)$.
As $Z$ and $[H, Z]$ are orthogonal with respect to the Hilbert-Schmidt inner product, $\bra{\psi}  Z  \ket{\psi}$ and $\epsilon \bra{\psi}  [H, Z] \ket{\psi} $ are independent Gaussian variables~\cite{garcia2023deep} when we sample $\ket{\psi}$ according to the Haar measure.
We can work out their mean and variance via the Weingarten calculus~\cite{mele2023introduction} (which together with their Gaussianity tells us the distributions). For the means we have:
\[ \mathbb{E}_{\ket{\psi}\sim \mu_{\mbu}} \bra{\psi}  Z  \ket{\psi} =  \frac{\tr Z}{d} = 0\]
\[ \mathbb{E}_{\ket{\psi}\sim \mu_{\mbu}} \bra{\psi}  \epsilon [H, Z]\ket{\psi} =  \frac{\tr \epsilon [H, Z]}{d}=0 \]
and for the variances:
\begin{align*}
\mathrm{Var}_{\ket{\psi}\sim \mu_{\mbu}} \bra{\psi}  Z  \ket{\psi}&=\mathbb{E}_{\ket{\psi}\sim \mu_{\mbu}} \left( \bra{\psi}  Z  \ket{\psi}^2\right)-\left(\mathbb{E}_{\ket{\psi}\sim \mu_{\mbu}} \bra{\psi}  Z  \ket{\psi}\right)^2\\
&=\mathbb{E}_{\ket{\psi}\sim \mu_{\mbu}} \left( \bra{\psi}  Z  \ket{\psi}^2\right)\\
&=\frac{\tr (Z^2)+\tr (Z)^2}{d(d+1)} \\
&=\frac{1}{d+1}  \\
\end{align*}
Similarly
\[ \mathrm{Var}_{\ket{\psi}\sim \mu_{\mbu}} \bra{\psi} i\epsilon  [H, Z]  \ket{\psi}=   \frac{ \epsilon ^2}{d(d+1)}  \|  [H, Z]\|_2^2      \]
and so 
\begin{align*}
1 - 1/2 \Pr_{\ket{\psi}\sim \mu_{\mbu}}  \left(|\bra{\psi}  Z  \ket{\psi} | < |\bra{\psi} i\epsilon  [H, Z] \ket{\psi} |\right) &= 1 - \frac{1}{\pi} \frac{\sqrt{d}(d+1)}{\epsilon \|  [H, Z]\|_2} \int_0^\infty dy \int_0^y dx \exp \left( \frac{-x^2 (d+1)}{2} \right)\exp \left( \frac{-y^2 d(d+1)}{2\epsilon^2 \|  [H, Z]\|_2^2} \right)\\
&= 1 - \frac{\sqrt{d(d+1)}}{\sqrt{2\pi}\epsilon \|  [H, Z]\|_2} \int_0^\infty dy\ \mathrm{erf}\left(\sqrt{\frac{d+1}{2}}y\right) \exp \left( \frac{-y^2 d (d+1)}{2\epsilon^2 \|  [H, Z]\|_2^2} \right)\\
&= 1 - \frac{\sqrt{d}}{\sqrt{\pi}\epsilon \|  [H, Z]\|_2} \int_0^\infty dy\ \mathrm{erf}\left(y\right) \exp \left( \frac{-y^2 d }{\epsilon^2 \|  [H, Z]\|_2^2} \right)\\
&= 1 -    \frac{1}{ \pi } \arctan  \left( \frac{\epsilon \|  [H, Z]\|_2}{ \sqrt{d} } \right)\\
&\approx 1 -  \frac{\epsilon \|  [H, Z]\|_2}{ \pi\sqrt{d} } \\
&\geq 1 - \frac{2\epsilon }{ \pi } 
\end{align*}
where we have used one of the known closed form expressions for integrals involving products of a Gaussian and the error function~\cite{ng1969table}, as well as the simple bound  \[ \|  [H, Z]\|_2 = \|  HZ-ZH\|_2 \leq \|  HZ\|_2 + \| ZH\|_2 = 2\| H\|_2 = 2\sqrt{d}\] (as $Z$ is unitary). So, for spoof attempts close to the perfect universal attack, the success probability decreases at most only linearly in the 2-norm distance between the approximate and exact attack. In the case $[H, Z]=0$ the success probability does not change at all, and we recover the earlier counterexamples of perfect universal attacks far from a given universal attack in 2-norm.

Finally, we here prove that under unitary evolution according to a Clifford circuit, the LOE is bounded by the number of terms in the Pauli expansion of the initial operator. 

\clifford* 
\begin{proof}
    A flip operator which universally spoofs a prediction according to the measurement of $Z$, as in Eq.~\eqref{eq:prediction}, is 
    \begin{equation}
        F = \sigma_x \otimes \id^{\otimes(n-1)}.
    \end{equation}
    As this is a single Pauli string, the corresponding Choi state is a product state,
    \begin{equation}
        \ket{F} = \sigma_x \otimes \id^{2n-1} \ket{\phi^+}^{\otimes n} =  \ket{\psi^+}  \ket{\phi^+}^{\otimes(n-1)}.
    \end{equation}
This Choi state therefore has zero entanglement, and the LOE of the operator $F$ is zero for any (spatial) bipartition. A unitary circuit $C$ composed of only Clifford gates is itself Clifford, and by definition maps Pauli strings to other Pauli strings. Then it directly follows also that $F_C$ also has zero LOE (across any bipartition):
    \begin{equation}
        S^{(2)} ( \nu ) =0,
    \end{equation}
    for $\nu = \tr_A[C F C^\dg \ket{\phi^+}\bra{\phi^+} C F C^\dg ]  $. For dense angle encoding \eqref{eq:encodingDense}, the adversary effectively has access to any local unitaries on the initial state through classical attacks (of arbitrary strength). Therefore, applying Thm.~\ref{thm:universal}, the upper and lower bounds are tight and so $D=0$. In this case an adversary can implement a universal attack that flips all predictions: $\hat{y}_\theta (\x') = -\hat{y}_\theta (\x)$. 
\end{proof}

As explained in the main text, if $U_\theta$ is a Clifford circuit and $\ket{\phi(\bm{x})}$ is defined according to the dense angle encoding \eqref{eq:encodingDense}, then a universal adversarial attack comprised of local unitaries is guaranteed to exist. This follows from Thm.~\ref{thm:universal} and the property that Clifford circuits map Pauli strings to Pauli strings. Let $\mathcal{F}$ be the set of flip operators that are Pauli strings; there are $2^{2n-1}$ such operators (see Eq.~\eqref{eq:general_flip}). For some particular $F \in \mathcal{F}$, then $U_\theta^\dagger FU_\theta$ ($U_\theta$ Clifford) is a Pauli string, and thus may be applied by a classical adversary that can perform arbitrary local attacks. 

Now we consider the important case of the (non-dense) angle encoding of the form of Eq.~\eqref{eq:encodingAngle}, where the adversary can effectively only induce perturbations of the form $\bigotimes_i(\alpha_i \id + \beta_i Y)$ (thereby including the $2^n$ Pauli strings containing only $\id$ and $Y$). We can apply the following probabilistic argument in the (non-dense) angle encoding \eqref{eq:encodingAngle} case to show that with probability at least $2^{-2^{n}}$ there is a universal counterexample, and that furthermore such a counterexample is easy to find. To do so we assume that that if $U_\theta$ is a 2-design and a Clifford circuit, then $U_\theta P U_\theta^{\dagger}$ is approximately uniformly distributed in $\{I, X, Y, Z\}^{\otimes n}$. 

$P \in \{I, Y\}^{\otimes n}$ is a universal adversarial attack if it is mapped to some element of $\mathcal{F}$ by the action of $U_\theta$. Let $P_i$ be the $i$\textsuperscript{th} Pauli string in $\{I, Y\}^{\otimes n}$ (for some ordering), and let $Q_i$ be the indicator function for the event that $U_\theta P_i U_\theta^\dagger \in \mathcal{F}$. If $Q_j = 0$ for all $j \leq i$ (that is, none of $P_1, \dots, P_i$ map to an element of $\mathcal{F}$), and $U_\theta$ is approximately a 2-design, then 
\begin{align*}
     \Pr(Q_{i+1} = 1 \mid U_\theta P_j U_\theta^\dagger \notin \mathcal{F} ~\forall j \leq i) = \frac{4^{n-1}}{4^{n} - i} \geq \frac{1}{2}.
\end{align*}
Thus, the probability that there is no universal spoof is less than $2^{-2^n}$. This also implies a straightforward process for the classical adversary to find such a spoof, if they have access to $U_\theta$: simply test each $P \in \{I, Y\}^{\otimes n}$ individually until such a spoof is found. The expected number of strings that need to be tested in this case is $O(1)$.

\subsection{Matrix Product State Simulations} \label{ap:mps}
We now describe the simulations of an adversarial attack by an adversary constrained to performing only a tensor product of local operations carried out in Fig.~\ref{fig:numerics}(c). 
We employed a matrix product state (MPS) simulator based on the \texttt{quimb} library~\cite{gray_quimb_2018}. MPS are a tensor network representation of one-dimensional many-body quantum states. In Penrose graphical notation, MPS are represented as~\cite{orus_practical_2014}
\begin{equation}
    \ket{\psi} = \begin{array}{c}
                    \includegraphics[scale=1]{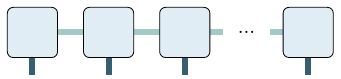}
                \end{array}
\end{equation}
where the shapes are rank-3 tensors (or rank-2 at the boundary) representing each qudit in the system. The internal vertices have a dimension $\chi$ which equals the Schmidt rank of each bipartition and directly relates to the entropy of entanglement via
\begin{equation}
    S_s = -\sum_{i=1}^\chi |\Lambda^{(s)}_i|^2\log(|\Lambda^{(s)}_i|^2) \label{eq:entrop},
\end{equation}
where $s$ indicates the bipartition between the $s$ and $s+1$ qudits and $\Lambda^{(s)}_i$ are the Schmidt values at that bipartition. The maximum entanglement entropy is given by $S=\log\chi$ corresponding to a maximal bond dimension of $\chi=d^{n_{\mathrm{qubits}}/2}$~\cite{perez2006matrix}.

The operator analogues of MPS are the matrix product operators (MPOs) which can be visualised in Penrose notation as
\begin{equation}
    O = \begin{array}{c}
                    \includegraphics[scale=1]{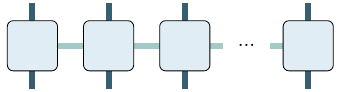}
                \end{array}
\end{equation}
analogously to MPS, the maximum bond dimension of an MPO is given by $d^2$.

Given this, we model the QVC as a series of nearest neighbour 2-qubit unitary operators laid out in a brickwork fashion,
\begin{equation}
    \text{QVC Layer} \sim \begin{array}{c}\includegraphics[scale=2]{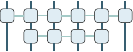}\end{array}
\end{equation}
where the thin internal vertices indicate a bond dimension of one, i.e. a product state. The bond dimension at each bipartition is $2^{2l}$ where $l$ is the number of QVC layers.  To see this consider that an arbitrary 2-qubit unitary is of rank $4$, the operator is then applied across all bipartitions with bond dimension $\chi_{l-1}$ (where $\chi_0=1$). $\chi_l$ is then $4\chi_{l-1}$ as a result of the fusion of the internal vertex of the MPO (with $\chi=\chi_l$) and 2-qubit unitary operator (with $\chi=4$)~\cite{orus_practical_2014}. The bond dimension of the QVC is hence saturated after $n_{\mathrm{qubits}}/2$ layers. The unitary operators that comprise the QVC are generated randomly, with results in Figure \ref{fig:numerics}(c) showing the average of $20$ randomly generated QVCs per layer.

Likewise, the adversary is modelled as a parameterised MPO with all  bond dimensions one to enforce the locality restriction, 
\begin{equation}
    \text{Adversary} \sim \begin{array}{c}
                    \includegraphics[scale=2]{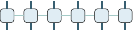}
                \end{array}
\end{equation}
with each tensor having $3$ parameters following the standard construction for a general single-qubit unitary operator $U(\theta, \phi, \lambda)$. As per the discussion at the end of Section~\ref{ap:universal}, in the case of angle encoding, this actually slightly overestimates the capacity of the adversary. Despite this, in Fig.~\ref{fig:numerics}(c), we see the success probability quickly falling as a function of the depth of the circuit.

We utilise the \texttt{quimb}~\cite{gray_quimb_2018} library to perform the simulations of adversarially attacked QVCs. For each value of $n_{\mathrm{qubits}}$, we generate a training set of $32,000$ randomly generated product states with random labels, to which the adversary is trained and a test set of $10,000$ instances. During training, we utilise the cross-entropy loss to determine the likelihood that the adversary flipped the label of the training instance. The training is performed using the \texttt{ADAM}~\cite{kingma2014adam} optimiser with a batch size of $32$ over a single epoch. 

\newpage
\twocolumngrid

\begin{acknowledgments}
M.T.W., N.D., and A.C.N. acknowledge the support of Australian Government Research Training Program Scholarships. N.D. further acknowledges the support of the Monash Graduate Excellence Scholarship. M.U. and M.T.W. acknowledge funding from the Australian Army Research through the Quantum Technology Challenge program. Computational resources were provided by the Pawsey Supercomputing Research Center through the National Computational Merit Allocation Scheme (NCMAS). K.M. acknowledges the support of the Australian Research Council's Discovery Projects DP210100597 and DP220101793. This research is supported by the Ministry of Education, Singapore, under its Academic Research Fund (AcRF) Tier 1 grant, and funded through the SUTD Kickstarter Initiative (SKI 2021\_07\_02).

\end{acknowledgments}

\section*{Data Availability}
The datasets used and/or analysed during the current study are available from the corresponding author upon reasonable request.

\section*{Code Availability}
The code used during the current study is available from the corresponding author upon reasonable request.

\section*{Competing Interests}
The Authors declare no Competing Financial or Non-Financial Interests. 

\section*{Author Contributions}
Dowling and West are Equal contribution authors, listed in alphabetical order. Dowling, West, Modi, Southwell, and Usman conceived the core ideas. Dowling and West co-led the main derivation with technical and numerical support from Nakhl and Southwell. Modi, Usman, and Sevior provided high-level guidance on the technical content. Dowling, West, and Southwell wrote the manuscript with detailed feedback from all authors.

\noindent
{\small \phantom{...} email: ndowling@uni-koeln.de}\\
\noindent
{\small \phantom{...} email: westm2@student.unimelb.edu.au} \\
\noindent
{\small \phantom{...} email: musman@unimelb.edu.au}\\
\noindent
{\small \phantom{...} email: kavan@quantumlah.org}


\begin{thebibliography}{94}%
\makeatletter
\providecommand \@ifxundefined [1]{%
 \@ifx{#1\undefined}
}%
\providecommand \@ifnum [1]{%
 \ifnum #1\expandafter \@firstoftwo
 \else \expandafter \@secondoftwo
 \fi
}%
\providecommand \@ifx [1]{%
 \ifx #1\expandafter \@firstoftwo
 \else \expandafter \@secondoftwo
 \fi
}%
\providecommand \natexlab [1]{#1}%
\providecommand \enquote  [1]{``#1''}%
\providecommand \bibnamefont  [1]{#1}%
\providecommand \bibfnamefont [1]{#1}%
\providecommand \citenamefont [1]{#1}%
\providecommand \href@noop [0]{\@secondoftwo}%
\providecommand \href [0]{\begingroup \@sanitize@url \@href}%
\providecommand \@href[1]{\@@startlink{#1}\@@href}%
\providecommand \@@href[1]{\endgroup#1\@@endlink}%
\providecommand \@sanitize@url [0]{\catcode `\\12\catcode `\$12\catcode `\&12\catcode `\#12\catcode `\^12\catcode `\_12\catcode `\%12\relax}%
\providecommand \@@startlink[1]{}%
\providecommand \@@endlink[0]{}%
\providecommand \url  [0]{\begingroup\@sanitize@url \@url }%
\providecommand \@url [1]{\endgroup\@href {#1}{\urlprefix }}%
\providecommand \urlprefix  [0]{URL }%
\providecommand \Eprint [0]{\href }%
\providecommand \doibase [0]{http://dx.doi.org/}%
\providecommand \selectlanguage [0]{\@gobble}%
\providecommand \bibinfo  [0]{\@secondoftwo}%
\providecommand \bibfield  [0]{\@secondoftwo}%
\providecommand \translation [1]{[#1]}%
\providecommand \BibitemOpen [0]{}%
\providecommand \bibitemStop [0]{}%
\providecommand \bibitemNoStop [0]{.\EOS\space}%
\providecommand \EOS [0]{\spacefactor3000\relax}%
\providecommand \BibitemShut  [1]{\csname bibitem#1\endcsname}%
\let\auto@bib@innerbib\@empty
\bibitem [{\citenamefont {Szegedy}\ \emph {et~al.}(2013)\citenamefont {Szegedy}, \citenamefont {Zaremba}, \citenamefont {Sutskever}, \citenamefont {Bruna}, \citenamefont {Erhan}, \citenamefont {Goodfellow},\ and\ \citenamefont {Fergus}}]{szegedy2013intriguing}%
  \BibitemOpen
  \bibfield  {author} {\bibinfo {author} {\bibfnamefont {C.}~\bibnamefont {Szegedy}}, \bibinfo {author} {\bibfnamefont {W.}~\bibnamefont {Zaremba}}, \bibinfo {author} {\bibfnamefont {I.}~\bibnamefont {Sutskever}}, \bibinfo {author} {\bibfnamefont {J.}~\bibnamefont {Bruna}}, \bibinfo {author} {\bibfnamefont {D.}~\bibnamefont {Erhan}}, \bibinfo {author} {\bibfnamefont {I.}~\bibnamefont {Goodfellow}}, \ and\ \bibinfo {author} {\bibfnamefont {R.}~\bibnamefont {Fergus}},\ }\href {https://arxiv.org/abs/1312.6199} {\bibfield  {journal} {\bibinfo  {journal} {arXiv preprint arXiv:1312.6199}\ } (\bibinfo {year} {2013})}\BibitemShut {NoStop}%
\bibitem [{\citenamefont {Biggio}\ \emph {et~al.}(2013)\citenamefont {Biggio}, \citenamefont {Corona}, \citenamefont {Maiorca}, \citenamefont {Nelson}, \citenamefont {{\v{S}}rndi{\'c}}, \citenamefont {Laskov}, \citenamefont {Giacinto},\ and\ \citenamefont {Roli}}]{biggio2013evasion}%
  \BibitemOpen
  \bibfield  {author} {\bibinfo {author} {\bibfnamefont {B.}~\bibnamefont {Biggio}}, \bibinfo {author} {\bibfnamefont {I.}~\bibnamefont {Corona}}, \bibinfo {author} {\bibfnamefont {D.}~\bibnamefont {Maiorca}}, \bibinfo {author} {\bibfnamefont {B.}~\bibnamefont {Nelson}}, \bibinfo {author} {\bibfnamefont {N.}~\bibnamefont {{\v{S}}rndi{\'c}}}, \bibinfo {author} {\bibfnamefont {P.}~\bibnamefont {Laskov}}, \bibinfo {author} {\bibfnamefont {G.}~\bibnamefont {Giacinto}}, \ and\ \bibinfo {author} {\bibfnamefont {F.}~\bibnamefont {Roli}},\ }in\ \href {\doibase 10.48550/arXiv.1708.06131} {\emph {\bibinfo {booktitle} {Joint European conference on machine learning and knowledge discovery in databases}}}\ (\bibinfo {organization} {Springer},\ \bibinfo {year} {2013})\ pp.\ \bibinfo {pages} {387--402}\BibitemShut {NoStop}%
\bibitem [{\citenamefont {Goodfellow}\ \emph {et~al.}(2014)\citenamefont {Goodfellow}, \citenamefont {Shlens},\ and\ \citenamefont {Szegedy}}]{goodfellow2014explaining}%
  \BibitemOpen
  \bibfield  {author} {\bibinfo {author} {\bibfnamefont {I.~J.}\ \bibnamefont {Goodfellow}}, \bibinfo {author} {\bibfnamefont {J.}~\bibnamefont {Shlens}}, \ and\ \bibinfo {author} {\bibfnamefont {C.}~\bibnamefont {Szegedy}},\ }\href {https://arxiv.org/abs/1412.6572} {\bibfield  {journal} {\bibinfo  {journal} {arXiv preprint arXiv:1412.6572}\ } (\bibinfo {year} {2014})}\BibitemShut {NoStop}%
\bibitem [{\citenamefont {Chakraborty}\ \emph {et~al.}(2021)\citenamefont {Chakraborty}, \citenamefont {Alam}, \citenamefont {Dey}, \citenamefont {Chattopadhyay},\ and\ \citenamefont {Mukhopadhyay}}]{chakraborty2021survey}%
  \BibitemOpen
  \bibfield  {author} {\bibinfo {author} {\bibfnamefont {A.}~\bibnamefont {Chakraborty}}, \bibinfo {author} {\bibfnamefont {M.}~\bibnamefont {Alam}}, \bibinfo {author} {\bibfnamefont {V.}~\bibnamefont {Dey}}, \bibinfo {author} {\bibfnamefont {A.}~\bibnamefont {Chattopadhyay}}, \ and\ \bibinfo {author} {\bibfnamefont {D.}~\bibnamefont {Mukhopadhyay}},\ }\href {\doibase https://doi.org/10.1049/cit2.12028} {\bibfield  {journal} {\bibinfo  {journal} {CAAI Transactions on Intelligence Technology}\ }\textbf {\bibinfo {volume} {6}},\ \bibinfo {pages} {25} (\bibinfo {year} {2021})}\BibitemShut {NoStop}%
\bibitem [{\citenamefont {Ilyas}\ \emph {et~al.}(2019)\citenamefont {Ilyas}, \citenamefont {Santurkar}, \citenamefont {Tsipras}, \citenamefont {Engstrom}, \citenamefont {Tran},\ and\ \citenamefont {Madry}}]{ilyas2019adversarial}%
  \BibitemOpen
  \bibfield  {author} {\bibinfo {author} {\bibfnamefont {A.}~\bibnamefont {Ilyas}}, \bibinfo {author} {\bibfnamefont {S.}~\bibnamefont {Santurkar}}, \bibinfo {author} {\bibfnamefont {D.}~\bibnamefont {Tsipras}}, \bibinfo {author} {\bibfnamefont {L.}~\bibnamefont {Engstrom}}, \bibinfo {author} {\bibfnamefont {B.}~\bibnamefont {Tran}}, \ and\ \bibinfo {author} {\bibfnamefont {A.}~\bibnamefont {Madry}},\ }\href {https://proceedings.neurips.cc/paper_files/paper/2019/file/e2c420d928d4bf8ce0ff2ec19b371514-Paper.pdf} {\bibfield  {journal} {\bibinfo  {journal} {Advances in neural information processing systems}\ }\textbf {\bibinfo {volume} {32}} (\bibinfo {year} {2019})}\BibitemShut {NoStop}%
\bibitem [{\citenamefont {Su}\ \emph {et~al.}(2019)\citenamefont {Su}, \citenamefont {Vargas},\ and\ \citenamefont {Sakurai}}]{su2019one}%
  \BibitemOpen
  \bibfield  {author} {\bibinfo {author} {\bibfnamefont {J.}~\bibnamefont {Su}}, \bibinfo {author} {\bibfnamefont {D.~V.}\ \bibnamefont {Vargas}}, \ and\ \bibinfo {author} {\bibfnamefont {K.}~\bibnamefont {Sakurai}},\ }\href {\doibase https://doi.org/10.1109/TEVC.2019.2890858} {\bibfield  {journal} {\bibinfo  {journal} {IEEE Transactions on Evolutionary Computation}\ }\textbf {\bibinfo {volume} {23}},\ \bibinfo {pages} {828} (\bibinfo {year} {2019})}\BibitemShut {NoStop}%
\bibitem [{\citenamefont {Athalye}\ \emph {et~al.}(2018)\citenamefont {Athalye}, \citenamefont {Carlini},\ and\ \citenamefont {Wagner}}]{athalye2018obfuscated}%
  \BibitemOpen
  \bibfield  {author} {\bibinfo {author} {\bibfnamefont {A.}~\bibnamefont {Athalye}}, \bibinfo {author} {\bibfnamefont {N.}~\bibnamefont {Carlini}}, \ and\ \bibinfo {author} {\bibfnamefont {D.}~\bibnamefont {Wagner}},\ }in\ \href {https://proceedings.mlr.press/v80/athalye18a/athalye18a.pdf} {\emph {\bibinfo {booktitle} {International conference on machine learning}}}\ (\bibinfo {organization} {PMLR},\ \bibinfo {year} {2018})\ pp.\ \bibinfo {pages} {274--283}\BibitemShut {NoStop}%
\bibitem [{\citenamefont {Cohen}\ \emph {et~al.}(2019)\citenamefont {Cohen}, \citenamefont {Rosenfeld},\ and\ \citenamefont {Kolter}}]{cohen2019certified}%
  \BibitemOpen
  \bibfield  {author} {\bibinfo {author} {\bibfnamefont {J.}~\bibnamefont {Cohen}}, \bibinfo {author} {\bibfnamefont {E.}~\bibnamefont {Rosenfeld}}, \ and\ \bibinfo {author} {\bibfnamefont {Z.}~\bibnamefont {Kolter}},\ }in\ \href {https://proceedings.mlr.press/v97/cohen19c.html} {\emph {\bibinfo {booktitle} {International Conference on Machine Learning}}}\ (\bibinfo {organization} {PMLR},\ \bibinfo {year} {2019})\ pp.\ \bibinfo {pages} {1310--1320}\BibitemShut {NoStop}%
\bibitem [{\citenamefont {Lecuyer}\ \emph {et~al.}(2019)\citenamefont {Lecuyer}, \citenamefont {Atlidakis}, \citenamefont {Geambasu}, \citenamefont {Hsu},\ and\ \citenamefont {Jana}}]{lecuyer2019certified}%
  \BibitemOpen
  \bibfield  {author} {\bibinfo {author} {\bibfnamefont {M.}~\bibnamefont {Lecuyer}}, \bibinfo {author} {\bibfnamefont {V.}~\bibnamefont {Atlidakis}}, \bibinfo {author} {\bibfnamefont {R.}~\bibnamefont {Geambasu}}, \bibinfo {author} {\bibfnamefont {D.}~\bibnamefont {Hsu}}, \ and\ \bibinfo {author} {\bibfnamefont {S.}~\bibnamefont {Jana}},\ }in\ \href {\doibase 10.1109/SP.2019.00044} {\emph {\bibinfo {booktitle} {2019 IEEE Symposium on Security and Privacy (SP)}}}\ (\bibinfo {organization} {IEEE},\ \bibinfo {year} {2019})\ pp.\ \bibinfo {pages} {656--672}\BibitemShut {NoStop}%
\bibitem [{\citenamefont {Biamonte}\ \emph {et~al.}(2017)\citenamefont {Biamonte}, \citenamefont {Wittek}, \citenamefont {Pancotti}, \citenamefont {Rebentrost}, \citenamefont {Wiebe},\ and\ \citenamefont {Lloyd}}]{biamonte2017quantum}%
  \BibitemOpen
  \bibfield  {author} {\bibinfo {author} {\bibfnamefont {J.}~\bibnamefont {Biamonte}}, \bibinfo {author} {\bibfnamefont {P.}~\bibnamefont {Wittek}}, \bibinfo {author} {\bibfnamefont {N.}~\bibnamefont {Pancotti}}, \bibinfo {author} {\bibfnamefont {P.}~\bibnamefont {Rebentrost}}, \bibinfo {author} {\bibfnamefont {N.}~\bibnamefont {Wiebe}}, \ and\ \bibinfo {author} {\bibfnamefont {S.}~\bibnamefont {Lloyd}},\ }\href {\doibase 10.1038/nature23474} {\bibfield  {journal} {\bibinfo  {journal} {Nature}\ }\textbf {\bibinfo {volume} {549}},\ \bibinfo {pages} {195} (\bibinfo {year} {2017})}\BibitemShut {NoStop}%
\bibitem [{\citenamefont {Abbas}\ \emph {et~al.}(2021)\citenamefont {Abbas}, \citenamefont {Sutter}, \citenamefont {Zoufal}, \citenamefont {Lucchi}, \citenamefont {Figalli},\ and\ \citenamefont {Woerner}}]{abbas2021power}%
  \BibitemOpen
  \bibfield  {author} {\bibinfo {author} {\bibfnamefont {A.}~\bibnamefont {Abbas}}, \bibinfo {author} {\bibfnamefont {D.}~\bibnamefont {Sutter}}, \bibinfo {author} {\bibfnamefont {C.}~\bibnamefont {Zoufal}}, \bibinfo {author} {\bibfnamefont {A.}~\bibnamefont {Lucchi}}, \bibinfo {author} {\bibfnamefont {A.}~\bibnamefont {Figalli}}, \ and\ \bibinfo {author} {\bibfnamefont {S.}~\bibnamefont {Woerner}},\ }\href {https://www.nature.com/articles/s43588-021-00084-1} {\bibfield  {journal} {\bibinfo  {journal} {Nature Computational Science}\ }\textbf {\bibinfo {volume} {1}},\ \bibinfo {pages} {403} (\bibinfo {year} {2021})}\BibitemShut {NoStop}%
\bibitem [{\citenamefont {Liu}\ \emph {et~al.}(2021)\citenamefont {Liu}, \citenamefont {Arunachalam},\ and\ \citenamefont {Temme}}]{liu2021rigorous}%
  \BibitemOpen
  \bibfield  {author} {\bibinfo {author} {\bibfnamefont {Y.}~\bibnamefont {Liu}}, \bibinfo {author} {\bibfnamefont {S.}~\bibnamefont {Arunachalam}}, \ and\ \bibinfo {author} {\bibfnamefont {K.}~\bibnamefont {Temme}},\ }\href {https://www.nature.com/articles/s41567-021-01287-z} {\bibfield  {journal} {\bibinfo  {journal} {Nature Physics}\ }\textbf {\bibinfo {volume} {17}},\ \bibinfo {pages} {1013} (\bibinfo {year} {2021})}\BibitemShut {NoStop}%
\bibitem [{\citenamefont {Holmes}\ \emph {et~al.}(2022)\citenamefont {Holmes}, \citenamefont {Sharma}, \citenamefont {Cerezo},\ and\ \citenamefont {Coles}}]{holmes2022connecting}%
  \BibitemOpen
  \bibfield  {author} {\bibinfo {author} {\bibfnamefont {Z.}~\bibnamefont {Holmes}}, \bibinfo {author} {\bibfnamefont {K.}~\bibnamefont {Sharma}}, \bibinfo {author} {\bibfnamefont {M.}~\bibnamefont {Cerezo}}, \ and\ \bibinfo {author} {\bibfnamefont {P.~J.}\ \bibnamefont {Coles}},\ }\href {\doibase 10.1103/PRXQuantum.3.010313} {\bibfield  {journal} {\bibinfo  {journal} {PRX Quantum}\ }\textbf {\bibinfo {volume} {3}},\ \bibinfo {pages} {010313} (\bibinfo {year} {2022})}\BibitemShut {NoStop}%
\bibitem [{\citenamefont {McClean}\ \emph {et~al.}(2018)\citenamefont {McClean}, \citenamefont {Boixo}, \citenamefont {Smelyanskiy}, \citenamefont {Babbush},\ and\ \citenamefont {Neven}}]{mcclean2018barren}%
  \BibitemOpen
  \bibfield  {author} {\bibinfo {author} {\bibfnamefont {J.~R.}\ \bibnamefont {McClean}}, \bibinfo {author} {\bibfnamefont {S.}~\bibnamefont {Boixo}}, \bibinfo {author} {\bibfnamefont {V.~N.}\ \bibnamefont {Smelyanskiy}}, \bibinfo {author} {\bibfnamefont {R.}~\bibnamefont {Babbush}}, \ and\ \bibinfo {author} {\bibfnamefont {H.}~\bibnamefont {Neven}},\ }\href {\doibase doi.org/10.1038/s41467-018-07090-4} {\bibfield  {journal} {\bibinfo  {journal} {Nature Communications}\ }\textbf {\bibinfo {volume} {9}},\ \bibinfo {pages} {4812} (\bibinfo {year} {2018})}\BibitemShut {NoStop}%
\bibitem [{\citenamefont {Wang}\ \emph {et~al.}(2021)\citenamefont {Wang}, \citenamefont {Fontana}, \citenamefont {Cerezo}, \citenamefont {Sharma}, \citenamefont {Sone}, \citenamefont {Cincio},\ and\ \citenamefont {Coles}}]{wang2021noise}%
  \BibitemOpen
  \bibfield  {author} {\bibinfo {author} {\bibfnamefont {S.}~\bibnamefont {Wang}}, \bibinfo {author} {\bibfnamefont {E.}~\bibnamefont {Fontana}}, \bibinfo {author} {\bibfnamefont {M.}~\bibnamefont {Cerezo}}, \bibinfo {author} {\bibfnamefont {K.}~\bibnamefont {Sharma}}, \bibinfo {author} {\bibfnamefont {A.}~\bibnamefont {Sone}}, \bibinfo {author} {\bibfnamefont {L.}~\bibnamefont {Cincio}}, \ and\ \bibinfo {author} {\bibfnamefont {P.~J.}\ \bibnamefont {Coles}},\ }\href {https://www.nature.com/articles/s41467-021-27045-6} {\bibfield  {journal} {\bibinfo  {journal} {Nature Communications}\ }\textbf {\bibinfo {volume} {12}},\ \bibinfo {pages} {6961} (\bibinfo {year} {2021})}\BibitemShut {NoStop}%
\bibitem [{\citenamefont {Cerezo}\ \emph {et~al.}(2021)\citenamefont {Cerezo}, \citenamefont {Sone}, \citenamefont {Volkoff}, \citenamefont {Cincio},\ and\ \citenamefont {Coles}}]{cerezo2021cost}%
  \BibitemOpen
  \bibfield  {author} {\bibinfo {author} {\bibfnamefont {M.}~\bibnamefont {Cerezo}}, \bibinfo {author} {\bibfnamefont {A.}~\bibnamefont {Sone}}, \bibinfo {author} {\bibfnamefont {T.}~\bibnamefont {Volkoff}}, \bibinfo {author} {\bibfnamefont {L.}~\bibnamefont {Cincio}}, \ and\ \bibinfo {author} {\bibfnamefont {P.~J.}\ \bibnamefont {Coles}},\ }\href {https://www.nature.com/articles/s41467-021-21728-w} {\bibfield  {journal} {\bibinfo  {journal} {Nature Communications}\ }\textbf {\bibinfo {volume} {12}},\ \bibinfo {pages} {1791} (\bibinfo {year} {2021})}\BibitemShut {NoStop}%
\bibitem [{\citenamefont {Larocca}\ \emph {et~al.}(2022)\citenamefont {Larocca}, \citenamefont {Czarnik}, \citenamefont {Sharma}, \citenamefont {Muraleedharan}, \citenamefont {Coles},\ and\ \citenamefont {Cerezo}}]{larocca2022diagnosing}%
  \BibitemOpen
  \bibfield  {author} {\bibinfo {author} {\bibfnamefont {M.}~\bibnamefont {Larocca}}, \bibinfo {author} {\bibfnamefont {P.}~\bibnamefont {Czarnik}}, \bibinfo {author} {\bibfnamefont {K.}~\bibnamefont {Sharma}}, \bibinfo {author} {\bibfnamefont {G.}~\bibnamefont {Muraleedharan}}, \bibinfo {author} {\bibfnamefont {P.~J.}\ \bibnamefont {Coles}}, \ and\ \bibinfo {author} {\bibfnamefont {M.}~\bibnamefont {Cerezo}},\ }\href {\doibase https://doi.org/10.22331/q-2022-09-29-824} {\bibfield  {journal} {\bibinfo  {journal} {Quantum}\ }\textbf {\bibinfo {volume} {6}},\ \bibinfo {pages} {824} (\bibinfo {year} {2022})}\BibitemShut {NoStop}%
\bibitem [{\citenamefont {Patti}\ \emph {et~al.}(2021)\citenamefont {Patti}, \citenamefont {Najafi}, \citenamefont {Gao},\ and\ \citenamefont {Yelin}}]{patti2021entanglement}%
  \BibitemOpen
  \bibfield  {author} {\bibinfo {author} {\bibfnamefont {T.~L.}\ \bibnamefont {Patti}}, \bibinfo {author} {\bibfnamefont {K.}~\bibnamefont {Najafi}}, \bibinfo {author} {\bibfnamefont {X.}~\bibnamefont {Gao}}, \ and\ \bibinfo {author} {\bibfnamefont {S.~F.}\ \bibnamefont {Yelin}},\ }\href {\doibase https://doi.org/10.1103/PhysRevResearch.3.033090} {\bibfield  {journal} {\bibinfo  {journal} {Physical Review Research}\ }\textbf {\bibinfo {volume} {3}},\ \bibinfo {pages} {033090} (\bibinfo {year} {2021})}\BibitemShut {NoStop}%
\bibitem [{\citenamefont {Ragone}\ \emph {et~al.}(2023)\citenamefont {Ragone}, \citenamefont {Bakalov}, \citenamefont {Sauvage}, \citenamefont {Kemper}, \citenamefont {Marrero}, \citenamefont {Larocca},\ and\ \citenamefont {Cerezo}}]{ragone2023unified}%
  \BibitemOpen
  \bibfield  {author} {\bibinfo {author} {\bibfnamefont {M.}~\bibnamefont {Ragone}}, \bibinfo {author} {\bibfnamefont {B.~N.}\ \bibnamefont {Bakalov}}, \bibinfo {author} {\bibfnamefont {F.}~\bibnamefont {Sauvage}}, \bibinfo {author} {\bibfnamefont {A.~F.}\ \bibnamefont {Kemper}}, \bibinfo {author} {\bibfnamefont {C.~O.}\ \bibnamefont {Marrero}}, \bibinfo {author} {\bibfnamefont {M.}~\bibnamefont {Larocca}}, \ and\ \bibinfo {author} {\bibfnamefont {M.}~\bibnamefont {Cerezo}},\ }\href {https://arxiv.org/abs/2309.09342} {\bibfield  {journal} {\bibinfo  {journal} {arXiv preprint arXiv:2309.09342}\ } (\bibinfo {year} {2023})}\BibitemShut {NoStop}%
\bibitem [{\citenamefont {Diaz}\ \emph {et~al.}(2023)\citenamefont {Diaz}, \citenamefont {Garc{\'\i}a-Mart{\'\i}n}, \citenamefont {Kazi}, \citenamefont {Larocca},\ and\ \citenamefont {Cerezo}}]{diaz2023showcasing}%
  \BibitemOpen
  \bibfield  {author} {\bibinfo {author} {\bibfnamefont {N.}~\bibnamefont {Diaz}}, \bibinfo {author} {\bibfnamefont {D.}~\bibnamefont {Garc{\'\i}a-Mart{\'\i}n}}, \bibinfo {author} {\bibfnamefont {S.}~\bibnamefont {Kazi}}, \bibinfo {author} {\bibfnamefont {M.}~\bibnamefont {Larocca}}, \ and\ \bibinfo {author} {\bibfnamefont {M.}~\bibnamefont {Cerezo}},\ }\href {https://arxiv.org/abs/2310.11505} {\bibfield  {journal} {\bibinfo  {journal} {arXiv preprint arXiv:2310.11505}\ } (\bibinfo {year} {2023})}\BibitemShut {NoStop}%
\bibitem [{\citenamefont {Fontana}\ \emph {et~al.}(2023)\citenamefont {Fontana}, \citenamefont {Herman}, \citenamefont {Chakrabarti}, \citenamefont {Kumar}, \citenamefont {Yalovetzky}, \citenamefont {Heredge}, \citenamefont {Sureshbabu},\ and\ \citenamefont {Pistoia}}]{fontana2023adjoint}%
  \BibitemOpen
  \bibfield  {author} {\bibinfo {author} {\bibfnamefont {E.}~\bibnamefont {Fontana}}, \bibinfo {author} {\bibfnamefont {D.}~\bibnamefont {Herman}}, \bibinfo {author} {\bibfnamefont {S.}~\bibnamefont {Chakrabarti}}, \bibinfo {author} {\bibfnamefont {N.}~\bibnamefont {Kumar}}, \bibinfo {author} {\bibfnamefont {R.}~\bibnamefont {Yalovetzky}}, \bibinfo {author} {\bibfnamefont {J.}~\bibnamefont {Heredge}}, \bibinfo {author} {\bibfnamefont {S.~H.}\ \bibnamefont {Sureshbabu}}, \ and\ \bibinfo {author} {\bibfnamefont {M.}~\bibnamefont {Pistoia}},\ }\href {https://arxiv.org/abs/2309.07902} {\bibfield  {journal} {\bibinfo  {journal} {arXiv preprint arXiv:2309.07902}\ } (\bibinfo {year} {2023})}\BibitemShut {NoStop}%
\bibitem [{\citenamefont {Larocca}\ \emph {et~al.}(2024)\citenamefont {Larocca}, \citenamefont {Thanasilp}, \citenamefont {Wang}, \citenamefont {Sharma}, \citenamefont {Biamonte}, \citenamefont {Coles}, \citenamefont {Cincio}, \citenamefont {McClean}, \citenamefont {Holmes},\ and\ \citenamefont {Cerezo}}]{larocca2024review}%
  \BibitemOpen
  \bibfield  {author} {\bibinfo {author} {\bibfnamefont {M.}~\bibnamefont {Larocca}}, \bibinfo {author} {\bibfnamefont {S.}~\bibnamefont {Thanasilp}}, \bibinfo {author} {\bibfnamefont {S.}~\bibnamefont {Wang}}, \bibinfo {author} {\bibfnamefont {K.}~\bibnamefont {Sharma}}, \bibinfo {author} {\bibfnamefont {J.}~\bibnamefont {Biamonte}}, \bibinfo {author} {\bibfnamefont {P.~J.}\ \bibnamefont {Coles}}, \bibinfo {author} {\bibfnamefont {L.}~\bibnamefont {Cincio}}, \bibinfo {author} {\bibfnamefont {J.~R.}\ \bibnamefont {McClean}}, \bibinfo {author} {\bibfnamefont {Z.}~\bibnamefont {Holmes}}, \ and\ \bibinfo {author} {\bibfnamefont {M.}~\bibnamefont {Cerezo}},\ }\href {https://arxiv.org/abs/2405.00781} {\bibfield  {journal} {\bibinfo  {journal} {arXiv preprint arXiv:2405.00781}\ } (\bibinfo {year} {2024})}\BibitemShut {NoStop}%
\bibitem [{\citenamefont {Cerezo}\ \emph {et~al.}(2023)\citenamefont {Cerezo}, \citenamefont {Larocca}, \citenamefont {Garc{\'\i}a-Mart{\'\i}n}, \citenamefont {Diaz}, \citenamefont {Braccia}, \citenamefont {Fontana}, \citenamefont {Rudolph}, \citenamefont {Bermejo}, \citenamefont {Ijaz}, \citenamefont {Thanasilp} \emph {et~al.}}]{cerezo2023does}%
  \BibitemOpen
  \bibfield  {author} {\bibinfo {author} {\bibfnamefont {M.}~\bibnamefont {Cerezo}}, \bibinfo {author} {\bibfnamefont {M.}~\bibnamefont {Larocca}}, \bibinfo {author} {\bibfnamefont {D.}~\bibnamefont {Garc{\'\i}a-Mart{\'\i}n}}, \bibinfo {author} {\bibfnamefont {N.}~\bibnamefont {Diaz}}, \bibinfo {author} {\bibfnamefont {P.}~\bibnamefont {Braccia}}, \bibinfo {author} {\bibfnamefont {E.}~\bibnamefont {Fontana}}, \bibinfo {author} {\bibfnamefont {M.~S.}\ \bibnamefont {Rudolph}}, \bibinfo {author} {\bibfnamefont {P.}~\bibnamefont {Bermejo}}, \bibinfo {author} {\bibfnamefont {A.}~\bibnamefont {Ijaz}}, \bibinfo {author} {\bibfnamefont {S.}~\bibnamefont {Thanasilp}},  \emph {et~al.},\ }\href {https://arxiv.org/abs/2312.09121} {\bibfield  {journal} {\bibinfo  {journal} {arXiv preprint arXiv:2312.09121}\ } (\bibinfo {year} {2023})}\BibitemShut {NoStop}%
\bibitem [{\citenamefont {West}\ \emph {et~al.}(2023{\natexlab{a}})\citenamefont {West}, \citenamefont {Tsang}, \citenamefont {Low}, \citenamefont {Hill}, \citenamefont {Leckie}, \citenamefont {Hollenberg}, \citenamefont {Erfani},\ and\ \citenamefont {Usman}}]{west2023towards}%
  \BibitemOpen
  \bibfield  {author} {\bibinfo {author} {\bibfnamefont {M.~T.}\ \bibnamefont {West}}, \bibinfo {author} {\bibfnamefont {S.~L.}\ \bibnamefont {Tsang}}, \bibinfo {author} {\bibfnamefont {J.~S.}\ \bibnamefont {Low}}, \bibinfo {author} {\bibfnamefont {C.~D.}\ \bibnamefont {Hill}}, \bibinfo {author} {\bibfnamefont {C.}~\bibnamefont {Leckie}}, \bibinfo {author} {\bibfnamefont {L.~C.~L.}\ \bibnamefont {Hollenberg}}, \bibinfo {author} {\bibfnamefont {S.~M.}\ \bibnamefont {Erfani}}, \ and\ \bibinfo {author} {\bibfnamefont {M.}~\bibnamefont {Usman}},\ }\href {\doibase https://doi.org/10.1038/s42256-023-00661-1} {\bibfield  {journal} {\bibinfo  {journal} {Nature Machine Intelligence}\ }\textbf {\bibinfo {volume} {5}},\ \bibinfo {pages} {581} (\bibinfo {year} {2023}{\natexlab{a}})}\BibitemShut {NoStop}%
\bibitem [{NMI(2023)}]{NMI2023}%
  \BibitemOpen
  \href {\doibase 10.1038/s42256-023-00710-9} {\bibfield  {journal} {\bibinfo  {journal} {Nature Machine Intelligence}\ }\textbf {\bibinfo {volume} {5}},\ \bibinfo {pages} {813} (\bibinfo {year} {2023})}\BibitemShut {NoStop}%
\bibitem [{\citenamefont {Lu}\ \emph {et~al.}(2020)\citenamefont {Lu}, \citenamefont {Duan},\ and\ \citenamefont {Deng}}]{lu2020quantum}%
  \BibitemOpen
  \bibfield  {author} {\bibinfo {author} {\bibfnamefont {S.}~\bibnamefont {Lu}}, \bibinfo {author} {\bibfnamefont {L.-M.}\ \bibnamefont {Duan}}, \ and\ \bibinfo {author} {\bibfnamefont {D.-L.}\ \bibnamefont {Deng}},\ }\href {\doibase 10.1103/PhysRevResearch.2.033212} {\bibfield  {journal} {\bibinfo  {journal} {Physical Review Research}\ }\textbf {\bibinfo {volume} {2}},\ \bibinfo {pages} {033212} (\bibinfo {year} {2020})}\BibitemShut {NoStop}%
\bibitem [{\citenamefont {Liu}\ and\ \citenamefont {Wittek}(2020)}]{liu2020vulnerability}%
  \BibitemOpen
  \bibfield  {author} {\bibinfo {author} {\bibfnamefont {N.}~\bibnamefont {Liu}}\ and\ \bibinfo {author} {\bibfnamefont {P.}~\bibnamefont {Wittek}},\ }\href {\doibase 10.1103/PhysRevA.101.062331} {\bibfield  {journal} {\bibinfo  {journal} {Phys. Rev. A}\ }\textbf {\bibinfo {volume} {101}},\ \bibinfo {pages} {062331} (\bibinfo {year} {2020})}\BibitemShut {NoStop}%
\bibitem [{\citenamefont {Du}\ \emph {et~al.}(2021)\citenamefont {Du}, \citenamefont {Hsieh}, \citenamefont {Liu}, \citenamefont {Tao},\ and\ \citenamefont {Liu}}]{du2021quantum}%
  \BibitemOpen
  \bibfield  {author} {\bibinfo {author} {\bibfnamefont {Y.}~\bibnamefont {Du}}, \bibinfo {author} {\bibfnamefont {M.-H.}\ \bibnamefont {Hsieh}}, \bibinfo {author} {\bibfnamefont {T.}~\bibnamefont {Liu}}, \bibinfo {author} {\bibfnamefont {D.}~\bibnamefont {Tao}}, \ and\ \bibinfo {author} {\bibfnamefont {N.}~\bibnamefont {Liu}},\ }\href {\doibase 10.1103/PhysRevResearch.3.023153} {\bibfield  {journal} {\bibinfo  {journal} {Physical Review Research}\ }\textbf {\bibinfo {volume} {3}},\ \bibinfo {pages} {023153} (\bibinfo {year} {2021})}\BibitemShut {NoStop}%
\bibitem [{\citenamefont {Guan}\ \emph {et~al.}(2021)\citenamefont {Guan}, \citenamefont {Fang},\ and\ \citenamefont {Ying}}]{guan2021robustness}%
  \BibitemOpen
  \bibfield  {author} {\bibinfo {author} {\bibfnamefont {J.}~\bibnamefont {Guan}}, \bibinfo {author} {\bibfnamefont {W.}~\bibnamefont {Fang}}, \ and\ \bibinfo {author} {\bibfnamefont {M.}~\bibnamefont {Ying}},\ }in\ \href {\doibase 10.1007/978-3-030-81685-8_7} {\emph {\bibinfo {booktitle} {International Conference on Computer Aided Verification}}}\ (\bibinfo {organization} {Springer},\ \bibinfo {year} {2021})\ pp.\ \bibinfo {pages} {151--174}\BibitemShut {NoStop}%
\bibitem [{\citenamefont {Weber}\ \emph {et~al.}(2021)\citenamefont {Weber}, \citenamefont {Liu}, \citenamefont {Li}, \citenamefont {Zhang},\ and\ \citenamefont {Zhao}}]{weber2021optimal}%
  \BibitemOpen
  \bibfield  {author} {\bibinfo {author} {\bibfnamefont {M.}~\bibnamefont {Weber}}, \bibinfo {author} {\bibfnamefont {N.}~\bibnamefont {Liu}}, \bibinfo {author} {\bibfnamefont {B.}~\bibnamefont {Li}}, \bibinfo {author} {\bibfnamefont {C.}~\bibnamefont {Zhang}}, \ and\ \bibinfo {author} {\bibfnamefont {Z.}~\bibnamefont {Zhao}},\ }\href {\doibase 10.1038/s41534-021-00410-5} {\bibfield  {journal} {\bibinfo  {journal} {npj Quantum Information}\ }\textbf {\bibinfo {volume} {7}},\ \bibinfo {pages} {1} (\bibinfo {year} {2021})}\BibitemShut {NoStop}%
\bibitem [{\citenamefont {Liao}\ \emph {et~al.}(2021)\citenamefont {Liao}, \citenamefont {Convy}, \citenamefont {Huggins},\ and\ \citenamefont {Whaley}}]{liao2021robust}%
  \BibitemOpen
  \bibfield  {author} {\bibinfo {author} {\bibfnamefont {H.}~\bibnamefont {Liao}}, \bibinfo {author} {\bibfnamefont {I.}~\bibnamefont {Convy}}, \bibinfo {author} {\bibfnamefont {W.~J.}\ \bibnamefont {Huggins}}, \ and\ \bibinfo {author} {\bibfnamefont {K.~B.}\ \bibnamefont {Whaley}},\ }\href {\doibase https://doi.org/10.1103/PhysRevA.103.042427} {\bibfield  {journal} {\bibinfo  {journal} {Physical Review A}\ }\textbf {\bibinfo {volume} {103}},\ \bibinfo {pages} {042427} (\bibinfo {year} {2021})}\BibitemShut {NoStop}%
\bibitem [{\citenamefont {Kehoe}\ \emph {et~al.}(2021)\citenamefont {Kehoe}, \citenamefont {Wittek}, \citenamefont {Xue},\ and\ \citenamefont {Pozas-Kerstjens}}]{kehoe2021defence}%
  \BibitemOpen
  \bibfield  {author} {\bibinfo {author} {\bibfnamefont {A.}~\bibnamefont {Kehoe}}, \bibinfo {author} {\bibfnamefont {P.}~\bibnamefont {Wittek}}, \bibinfo {author} {\bibfnamefont {Y.}~\bibnamefont {Xue}}, \ and\ \bibinfo {author} {\bibfnamefont {A.}~\bibnamefont {Pozas-Kerstjens}},\ }\href {\doibase 10.1088/2632-2153/abf834} {\bibfield  {journal} {\bibinfo  {journal} {Machine Learning: Science and Technology}\ }\textbf {\bibinfo {volume} {2}},\ \bibinfo {pages} {045006} (\bibinfo {year} {2021})}\BibitemShut {NoStop}%
\bibitem [{\citenamefont {Ren}\ \emph {et~al.}(2022)\citenamefont {Ren}, \citenamefont {Li}, \citenamefont {Xu}, \citenamefont {Wang}, \citenamefont {Jiang}, \citenamefont {Jin}, \citenamefont {Zhu}, \citenamefont {Chen}, \citenamefont {Song}, \citenamefont {Zhang} \emph {et~al.}}]{ren2022experimental}%
  \BibitemOpen
  \bibfield  {author} {\bibinfo {author} {\bibfnamefont {W.}~\bibnamefont {Ren}}, \bibinfo {author} {\bibfnamefont {W.}~\bibnamefont {Li}}, \bibinfo {author} {\bibfnamefont {S.}~\bibnamefont {Xu}}, \bibinfo {author} {\bibfnamefont {K.}~\bibnamefont {Wang}}, \bibinfo {author} {\bibfnamefont {W.}~\bibnamefont {Jiang}}, \bibinfo {author} {\bibfnamefont {F.}~\bibnamefont {Jin}}, \bibinfo {author} {\bibfnamefont {X.}~\bibnamefont {Zhu}}, \bibinfo {author} {\bibfnamefont {J.}~\bibnamefont {Chen}}, \bibinfo {author} {\bibfnamefont {Z.}~\bibnamefont {Song}}, \bibinfo {author} {\bibfnamefont {P.}~\bibnamefont {Zhang}},  \emph {et~al.},\ }\href {https://www.nature.com/articles/s43588-022-00351-9} {\bibfield  {journal} {\bibinfo  {journal} {Nature Computational Science}\ }\textbf {\bibinfo {volume} {2}},\ \bibinfo {pages} {711} (\bibinfo {year} {2022})}\BibitemShut {NoStop}%
\bibitem [{\citenamefont {Wu}\ \emph {et~al.}(2023)\citenamefont {Wu}, \citenamefont {Adermann}, \citenamefont {Thapa}, \citenamefont {Camtepe}, \citenamefont {Suzuki},\ and\ \citenamefont {Usman}}]{wu2023radio}%
  \BibitemOpen
  \bibfield  {author} {\bibinfo {author} {\bibfnamefont {Y.}~\bibnamefont {Wu}}, \bibinfo {author} {\bibfnamefont {E.}~\bibnamefont {Adermann}}, \bibinfo {author} {\bibfnamefont {C.}~\bibnamefont {Thapa}}, \bibinfo {author} {\bibfnamefont {S.}~\bibnamefont {Camtepe}}, \bibinfo {author} {\bibfnamefont {H.}~\bibnamefont {Suzuki}}, \ and\ \bibinfo {author} {\bibfnamefont {M.}~\bibnamefont {Usman}},\ }\href {https://arxiv.org/abs/2312.07821} {\bibfield  {journal} {\bibinfo  {journal} {arXiv preprint arXiv:2312.07821}\ } (\bibinfo {year} {2023})}\BibitemShut {NoStop}%
\bibitem [{\citenamefont {West}\ \emph {et~al.}(2023{\natexlab{b}})\citenamefont {West}, \citenamefont {Erfani}, \citenamefont {Leckie}, \citenamefont {Sevior}, \citenamefont {Hollenberg},\ and\ \citenamefont {Usman}}]{west2023benchmarking}%
  \BibitemOpen
  \bibfield  {author} {\bibinfo {author} {\bibfnamefont {M.~T.}\ \bibnamefont {West}}, \bibinfo {author} {\bibfnamefont {S.~M.}\ \bibnamefont {Erfani}}, \bibinfo {author} {\bibfnamefont {C.}~\bibnamefont {Leckie}}, \bibinfo {author} {\bibfnamefont {M.}~\bibnamefont {Sevior}}, \bibinfo {author} {\bibfnamefont {L.~C.~L.}\ \bibnamefont {Hollenberg}}, \ and\ \bibinfo {author} {\bibfnamefont {M.}~\bibnamefont {Usman}},\ }\href {\doibase 10.1103/PhysRevResearch.5.023186} {\bibfield  {journal} {\bibinfo  {journal} {Phys. Rev. Res.}\ }\textbf {\bibinfo {volume} {5}},\ \bibinfo {pages} {023186} (\bibinfo {year} {2023}{\natexlab{b}})}\BibitemShut {NoStop}%
\bibitem [{\citenamefont {West}\ \emph {et~al.}(2023{\natexlab{c}})\citenamefont {West}, \citenamefont {Nakhl}, \citenamefont {Heredge}, \citenamefont {Creevey}, \citenamefont {Hollenberg}, \citenamefont {Sevior},\ and\ \citenamefont {Usman}}]{west2023drastic}%
  \BibitemOpen
  \bibfield  {author} {\bibinfo {author} {\bibfnamefont {M.~T.}\ \bibnamefont {West}}, \bibinfo {author} {\bibfnamefont {A.~C.}\ \bibnamefont {Nakhl}}, \bibinfo {author} {\bibfnamefont {J.}~\bibnamefont {Heredge}}, \bibinfo {author} {\bibfnamefont {F.~M.}\ \bibnamefont {Creevey}}, \bibinfo {author} {\bibfnamefont {L.~C.}\ \bibnamefont {Hollenberg}}, \bibinfo {author} {\bibfnamefont {M.}~\bibnamefont {Sevior}}, \ and\ \bibinfo {author} {\bibfnamefont {M.}~\bibnamefont {Usman}},\ }\href {https://arxiv.org/abs/2309.09424} {\bibfield  {journal} {\bibinfo  {journal} {arXiv preprint arXiv:2309.09424}\ } (\bibinfo {year} {2023}{\natexlab{c}})}\BibitemShut {NoStop}%
\bibitem [{\citenamefont {Khatun}\ and\ \citenamefont {Usman}(2024)}]{khatun2024quantum}%
  \BibitemOpen
  \bibfield  {author} {\bibinfo {author} {\bibfnamefont {A.}~\bibnamefont {Khatun}}\ and\ \bibinfo {author} {\bibfnamefont {M.}~\bibnamefont {Usman}},\ }\href {https://arxiv.org/abs/2401.17009} {\bibfield  {journal} {\bibinfo  {journal} {arXiv preprint arXiv:2401.17009}\ } (\bibinfo {year} {2024})}\BibitemShut {NoStop}%
\bibitem [{\citenamefont {Winderl}\ \emph {et~al.}(2024)\citenamefont {Winderl}, \citenamefont {Franco},\ and\ \citenamefont {Lorenz}}]{winderl2024constructing}%
  \BibitemOpen
  \bibfield  {author} {\bibinfo {author} {\bibfnamefont {D.}~\bibnamefont {Winderl}}, \bibinfo {author} {\bibfnamefont {N.}~\bibnamefont {Franco}}, \ and\ \bibinfo {author} {\bibfnamefont {J.~M.}\ \bibnamefont {Lorenz}},\ }\href {https://arxiv.org/abs/2404.16417} {\bibfield  {journal} {\bibinfo  {journal} {arXiv preprint arXiv:2404.16417}\ } (\bibinfo {year} {2024})}\BibitemShut {NoStop}%
\bibitem [{\citenamefont {Berberich}\ \emph {et~al.}(2023)\citenamefont {Berberich}, \citenamefont {Fink}, \citenamefont {Pranji{\'c}}, \citenamefont {Tutschku},\ and\ \citenamefont {Holm}}]{berberich2023training}%
  \BibitemOpen
  \bibfield  {author} {\bibinfo {author} {\bibfnamefont {J.}~\bibnamefont {Berberich}}, \bibinfo {author} {\bibfnamefont {D.}~\bibnamefont {Fink}}, \bibinfo {author} {\bibfnamefont {D.}~\bibnamefont {Pranji{\'c}}}, \bibinfo {author} {\bibfnamefont {C.}~\bibnamefont {Tutschku}}, \ and\ \bibinfo {author} {\bibfnamefont {C.}~\bibnamefont {Holm}},\ }\href {https://arxiv.org/abs/2311.11871} {\bibfield  {journal} {\bibinfo  {journal} {arXiv preprint arXiv:2311.11871}\ } (\bibinfo {year} {2023})}\BibitemShut {NoStop}%
\bibitem [{\citenamefont {Hendrik~Metzen}\ \emph {et~al.}(2017)\citenamefont {Hendrik~Metzen}, \citenamefont {Chaithanya~Kumar}, \citenamefont {Brox},\ and\ \citenamefont {Fischer}}]{hendrik2017universal}%
  \BibitemOpen
  \bibfield  {author} {\bibinfo {author} {\bibfnamefont {J.}~\bibnamefont {Hendrik~Metzen}}, \bibinfo {author} {\bibfnamefont {M.}~\bibnamefont {Chaithanya~Kumar}}, \bibinfo {author} {\bibfnamefont {T.}~\bibnamefont {Brox}}, \ and\ \bibinfo {author} {\bibfnamefont {V.}~\bibnamefont {Fischer}},\ }in\ \href@noop {} {\emph {\bibinfo {booktitle} {Proceedings of the IEEE international conference on computer vision}}}\ (\bibinfo {year} {2017})\ pp.\ \bibinfo {pages} {2755--2764}\BibitemShut {NoStop}%
\bibitem [{\citenamefont {Gong}\ and\ \citenamefont {Deng}(2022)}]{gong2022universal}%
  \BibitemOpen
  \bibfield  {author} {\bibinfo {author} {\bibfnamefont {W.}~\bibnamefont {Gong}}\ and\ \bibinfo {author} {\bibfnamefont {D.-L.}\ \bibnamefont {Deng}},\ }\href {\doibase https://doi.org/10.1093/nsr/nwab130} {\bibfield  {journal} {\bibinfo  {journal} {National Science Review}\ }\textbf {\bibinfo {volume} {9}},\ \bibinfo {pages} {nwab130} (\bibinfo {year} {2022})}\BibitemShut {NoStop}%
\bibitem [{\citenamefont {Prosen}\ and\ \citenamefont {Znidari\ifmmode~\check{c}\else \v{c}\fi{}}(2007)}]{Prosen2007}%
  \BibitemOpen
  \bibfield  {author} {\bibinfo {author} {\bibfnamefont {T.}~\bibnamefont {Prosen}}\ and\ \bibinfo {author} {\bibfnamefont {M.}~\bibnamefont {Znidari\ifmmode~\check{c}\else \v{c}\fi{}}},\ }\href {\doibase 10.1103/PhysRevE.75.015202} {\bibfield  {journal} {\bibinfo  {journal} {Phys. Rev. E}\ }\textbf {\bibinfo {volume} {75}},\ \bibinfo {pages} {015202(R)} (\bibinfo {year} {2007})}\BibitemShut {NoStop}%
\bibitem [{\citenamefont {Prosen}\ and\ \citenamefont {Pi\ifmmode~\check{z}\else \v{z}\fi{}orn}(2007)}]{Prosen2007a}%
  \BibitemOpen
  \bibfield  {author} {\bibinfo {author} {\bibfnamefont {T.}~\bibnamefont {Prosen}}\ and\ \bibinfo {author} {\bibfnamefont {I.}~\bibnamefont {Pi\ifmmode~\check{z}\else \v{z}\fi{}orn}},\ }\href {\doibase 10.1103/PhysRevA.76.032316} {\bibfield  {journal} {\bibinfo  {journal} {Phys. Rev. A}\ }\textbf {\bibinfo {volume} {76}},\ \bibinfo {pages} {032316} (\bibinfo {year} {2007})}\BibitemShut {NoStop}%
\bibitem [{\citenamefont {Pi\ifmmode~\check{z}\else \v{z}\fi{}orn}\ and\ \citenamefont {Prosen}(2009)}]{Prosen2009}%
  \BibitemOpen
  \bibfield  {author} {\bibinfo {author} {\bibfnamefont {I.}~\bibnamefont {Pi\ifmmode~\check{z}\else \v{z}\fi{}orn}}\ and\ \bibinfo {author} {\bibfnamefont {T.}~\bibnamefont {Prosen}},\ }\href {\doibase 10.1103/PhysRevB.79.184416} {\bibfield  {journal} {\bibinfo  {journal} {Phys. Rev. B}\ }\textbf {\bibinfo {volume} {79}},\ \bibinfo {pages} {184416} (\bibinfo {year} {2009})}\BibitemShut {NoStop}%
\bibitem [{\citenamefont {Shenker}\ and\ \citenamefont {Stanford}(2014)}]{Shenker_Stanford_2014}%
  \BibitemOpen
  \bibfield  {author} {\bibinfo {author} {\bibfnamefont {S.~H.}\ \bibnamefont {Shenker}}\ and\ \bibinfo {author} {\bibfnamefont {D.}~\bibnamefont {Stanford}},\ }\href {\doibase 10.1007/JHEP03(2014)067} {\bibfield  {journal} {\bibinfo  {journal} {Journal of High Energy Physics}\ }\textbf {\bibinfo {volume} {2014}},\ \bibinfo {pages} {67} (\bibinfo {year} {2014})}\BibitemShut {NoStop}%
\bibitem [{\citenamefont {Maldacena}\ \emph {et~al.}(2016)\citenamefont {Maldacena}, \citenamefont {Shenker},\ and\ \citenamefont {Stanford}}]{Maldacena_Shenker_Stanford_2016}%
  \BibitemOpen
  \bibfield  {author} {\bibinfo {author} {\bibfnamefont {J.}~\bibnamefont {Maldacena}}, \bibinfo {author} {\bibfnamefont {S.~H.}\ \bibnamefont {Shenker}}, \ and\ \bibinfo {author} {\bibfnamefont {D.}~\bibnamefont {Stanford}},\ }\href {\doibase 10.1007/JHEP08(2016)106} {\bibfield  {journal} {\bibinfo  {journal} {Journal of High Energy Physics}\ }\textbf {\bibinfo {volume} {2016}},\ \bibinfo {pages} {106} (\bibinfo {year} {2016})}\BibitemShut {NoStop}%
\bibitem [{\citenamefont {Swingle}\ \emph {et~al.}(2016)\citenamefont {Swingle}, \citenamefont {Bentsen}, \citenamefont {Schleier-Smith},\ and\ \citenamefont {Hayden}}]{Swingle2016}%
  \BibitemOpen
  \bibfield  {author} {\bibinfo {author} {\bibfnamefont {B.}~\bibnamefont {Swingle}}, \bibinfo {author} {\bibfnamefont {G.}~\bibnamefont {Bentsen}}, \bibinfo {author} {\bibfnamefont {M.}~\bibnamefont {Schleier-Smith}}, \ and\ \bibinfo {author} {\bibfnamefont {P.}~\bibnamefont {Hayden}},\ }\href {\doibase 10.1103/PhysRevA.94.040302} {\bibfield  {journal} {\bibinfo  {journal} {Physical Review A}\ }\textbf {\bibinfo {volume} {94}},\ \bibinfo {pages} {040302} (\bibinfo {year} {2016})}\BibitemShut {NoStop}%
\bibitem [{\citenamefont {Roberts}\ and\ \citenamefont {Swingle}(2016)}]{Roberts2016}%
  \BibitemOpen
  \bibfield  {author} {\bibinfo {author} {\bibfnamefont {D.~A.}\ \bibnamefont {Roberts}}\ and\ \bibinfo {author} {\bibfnamefont {B.}~\bibnamefont {Swingle}},\ }\href {\doibase 10.1103/PhysRevLett.117.091602} {\bibfield  {journal} {\bibinfo  {journal} {Physical review letters}\ }\textbf {\bibinfo {volume} {117}},\ \bibinfo {pages} {091602} (\bibinfo {year} {2016})}\BibitemShut {NoStop}%
\bibitem [{\citenamefont {Dubail}(2017)}]{Dubail_2017}%
  \BibitemOpen
  \bibfield  {author} {\bibinfo {author} {\bibfnamefont {J.}~\bibnamefont {Dubail}},\ }\href {\doibase 10.1088/1751-8121/aa6f38} {\bibfield  {journal} {\bibinfo  {journal} {Journal of Physics A: Mathematical and Theoretical}\ }\textbf {\bibinfo {volume} {50}},\ \bibinfo {pages} {234001} (\bibinfo {year} {2017})}\BibitemShut {NoStop}%
\bibitem [{\citenamefont {Swingle}(2018)}]{Swingle2018}%
  \BibitemOpen
  \bibfield  {author} {\bibinfo {author} {\bibfnamefont {B.}~\bibnamefont {Swingle}},\ }\href {\doibase 10.1038/s41567-018-0295-5} {\bibfield  {journal} {\bibinfo  {journal} {Nature Phys.}\ }\textbf {\bibinfo {volume} {14}},\ \bibinfo {pages} {988} (\bibinfo {year} {2018})}\BibitemShut {NoStop}%
\bibitem [{\citenamefont {Jonay}\ \emph {et~al.}(2018)\citenamefont {Jonay}, \citenamefont {Huse},\ and\ \citenamefont {Nahum}}]{Jonay2018}%
  \BibitemOpen
  \bibfield  {author} {\bibinfo {author} {\bibfnamefont {C.}~\bibnamefont {Jonay}}, \bibinfo {author} {\bibfnamefont {D.~A.}\ \bibnamefont {Huse}}, \ and\ \bibinfo {author} {\bibfnamefont {A.}~\bibnamefont {Nahum}},\ }\href {https://www.arxiv.org/abs/1803.00089} {\bibfield  {journal} {\bibinfo  {journal} {arXiv preprint arXiv:1803.00089}\ } (\bibinfo {year} {2018})}\BibitemShut {NoStop}%
\bibitem [{\citenamefont {Alba}\ \emph {et~al.}(2019)\citenamefont {Alba}, \citenamefont {Dubail},\ and\ \citenamefont {Medenjak}}]{Alba2019}%
  \BibitemOpen
  \bibfield  {author} {\bibinfo {author} {\bibfnamefont {V.}~\bibnamefont {Alba}}, \bibinfo {author} {\bibfnamefont {J.}~\bibnamefont {Dubail}}, \ and\ \bibinfo {author} {\bibfnamefont {M.}~\bibnamefont {Medenjak}},\ }\href {\doibase 10.1103/PhysRevLett.122.250603} {\bibfield  {journal} {\bibinfo  {journal} {Phys. Rev. Lett.}\ }\textbf {\bibinfo {volume} {122}},\ \bibinfo {pages} {250603} (\bibinfo {year} {2019})}\BibitemShut {NoStop}%
\bibitem [{\citenamefont {Bertini}\ \emph {et~al.}(2020)\citenamefont {Bertini}, \citenamefont {Kos},\ and\ \citenamefont {Prosen}}]{Kos2020}%
  \BibitemOpen
  \bibfield  {author} {\bibinfo {author} {\bibfnamefont {B.}~\bibnamefont {Bertini}}, \bibinfo {author} {\bibfnamefont {P.}~\bibnamefont {Kos}}, \ and\ \bibinfo {author} {\bibfnamefont {T.}~\bibnamefont {Prosen}},\ }\href {\doibase 10.21468/SciPostPhys.8.4.067} {\bibfield  {journal} {\bibinfo  {journal} {SciPost Phys.}\ }\textbf {\bibinfo {volume} {8}},\ \bibinfo {pages} {067} (\bibinfo {year} {2020})}\BibitemShut {NoStop}%
\bibitem [{\citenamefont {Alba}(2021)}]{Alba2021}%
  \BibitemOpen
  \bibfield  {author} {\bibinfo {author} {\bibfnamefont {V.}~\bibnamefont {Alba}},\ }\href {\doibase 10.1103/PhysRevB.104.094410} {\bibfield  {journal} {\bibinfo  {journal} {Phys. Rev. B}\ }\textbf {\bibinfo {volume} {104}},\ \bibinfo {pages} {094410} (\bibinfo {year} {2021})}\BibitemShut {NoStop}%
\bibitem [{\citenamefont {Anand}\ \emph {et~al.}(2021)\citenamefont {Anand}, \citenamefont {Styliaris}, \citenamefont {Kumari},\ and\ \citenamefont {Zanardi}}]{Anand2021-yi}%
  \BibitemOpen
  \bibfield  {author} {\bibinfo {author} {\bibfnamefont {N.}~\bibnamefont {Anand}}, \bibinfo {author} {\bibfnamefont {G.}~\bibnamefont {Styliaris}}, \bibinfo {author} {\bibfnamefont {M.}~\bibnamefont {Kumari}}, \ and\ \bibinfo {author} {\bibfnamefont {P.}~\bibnamefont {Zanardi}},\ }\href {\doibase 10.1103/PhysRevResearch.3.023214} {\bibfield  {journal} {\bibinfo  {journal} {Phys. Rev. Res.}\ }\textbf {\bibinfo {volume} {3}},\ \bibinfo {pages} {023214} (\bibinfo {year} {2021})}\BibitemShut {NoStop}%
\bibitem [{\citenamefont {Kim}\ \emph {et~al.}(2022)\citenamefont {Kim}, \citenamefont {Murugan}, \citenamefont {Olle},\ and\ \citenamefont {Rosa}}]{Rosa2022}%
  \BibitemOpen
  \bibfield  {author} {\bibinfo {author} {\bibfnamefont {J.}~\bibnamefont {Kim}}, \bibinfo {author} {\bibfnamefont {J.}~\bibnamefont {Murugan}}, \bibinfo {author} {\bibfnamefont {J.}~\bibnamefont {Olle}}, \ and\ \bibinfo {author} {\bibfnamefont {D.}~\bibnamefont {Rosa}},\ }\href {\doibase 10.1103/PhysRevA.105.L010201} {\bibfield  {journal} {\bibinfo  {journal} {Phys. Rev. A}\ }\textbf {\bibinfo {volume} {105}},\ \bibinfo {pages} {L010201} (\bibinfo {year} {2022})}\BibitemShut {NoStop}%
\bibitem [{\citenamefont {Dowling}\ \emph {et~al.}(2023)\citenamefont {Dowling}, \citenamefont {Kos},\ and\ \citenamefont {Modi}}]{dowling_scrambling_2023}%
  \BibitemOpen
  \bibfield  {author} {\bibinfo {author} {\bibfnamefont {N.}~\bibnamefont {Dowling}}, \bibinfo {author} {\bibfnamefont {P.}~\bibnamefont {Kos}}, \ and\ \bibinfo {author} {\bibfnamefont {K.}~\bibnamefont {Modi}},\ }\href {\doibase 10.1103/PhysRevLett.131.180403} {\bibfield  {journal} {\bibinfo  {journal} {Physical Review Letters}\ }\textbf {\bibinfo {volume} {131}},\ \bibinfo {pages} {180403} (\bibinfo {year} {2023})}\BibitemShut {NoStop}%
\bibitem [{\citenamefont {Dowling}\ and\ \citenamefont {Modi}(2024)}]{dowling_operational_2024}%
  \BibitemOpen
  \bibfield  {author} {\bibinfo {author} {\bibfnamefont {N.}~\bibnamefont {Dowling}}\ and\ \bibinfo {author} {\bibfnamefont {K.}~\bibnamefont {Modi}},\ }\href {\doibase 10.1103/PRXQuantum.5.010314} {\bibfield  {journal} {\bibinfo  {journal} {PRX Quantum}\ }\textbf {\bibinfo {volume} {5}},\ \bibinfo {pages} {010314} (\bibinfo {year} {2024})}\BibitemShut {NoStop}%
\bibitem [{\citenamefont {Sekino}\ and\ \citenamefont {Susskind}(2008)}]{Sekino_Susskind_2008}%
  \BibitemOpen
  \bibfield  {author} {\bibinfo {author} {\bibfnamefont {Y.}~\bibnamefont {Sekino}}\ and\ \bibinfo {author} {\bibfnamefont {L.}~\bibnamefont {Susskind}},\ }\href {\doibase 10.1088/1126-6708/2008/10/065} {\bibfield  {journal} {\bibinfo  {journal} {Journal of High Energy Physics}\ }\textbf {\bibinfo {volume} {2008}},\ \bibinfo {pages} {065} (\bibinfo {year} {2008})}\BibitemShut {NoStop}%
\bibitem [{\citenamefont {Foini}\ and\ \citenamefont {Kurchan}(2019)}]{Foini2019}%
  \BibitemOpen
  \bibfield  {author} {\bibinfo {author} {\bibfnamefont {L.}~\bibnamefont {Foini}}\ and\ \bibinfo {author} {\bibfnamefont {J.}~\bibnamefont {Kurchan}},\ }\href {\doibase 10.1103/PhysRevE.99.042139} {\bibfield  {journal} {\bibinfo  {journal} {Phys. Rev. E}\ }\textbf {\bibinfo {volume} {99}},\ \bibinfo {pages} {042139} (\bibinfo {year} {2019})}\BibitemShut {NoStop}%
\bibitem [{\citenamefont {Xu}\ \emph {et~al.}(2020)\citenamefont {Xu}, \citenamefont {Scaffidi},\ and\ \citenamefont {Cao}}]{PhysRevLett.124.140602}%
  \BibitemOpen
  \bibfield  {author} {\bibinfo {author} {\bibfnamefont {T.}~\bibnamefont {Xu}}, \bibinfo {author} {\bibfnamefont {T.}~\bibnamefont {Scaffidi}}, \ and\ \bibinfo {author} {\bibfnamefont {X.}~\bibnamefont {Cao}},\ }\href {\doibase 10.1103/PhysRevLett.124.140602} {\bibfield  {journal} {\bibinfo  {journal} {Phys. Rev. Lett.}\ }\textbf {\bibinfo {volume} {124}},\ \bibinfo {pages} {140602} (\bibinfo {year} {2020})}\BibitemShut {NoStop}%
\bibitem [{\citenamefont {Shen}\ \emph {et~al.}(2020)\citenamefont {Shen}, \citenamefont {Zhang}, \citenamefont {You},\ and\ \citenamefont {Zhai}}]{Shen2020}%
  \BibitemOpen
  \bibfield  {author} {\bibinfo {author} {\bibfnamefont {H.}~\bibnamefont {Shen}}, \bibinfo {author} {\bibfnamefont {P.}~\bibnamefont {Zhang}}, \bibinfo {author} {\bibfnamefont {Y.-Z.}\ \bibnamefont {You}}, \ and\ \bibinfo {author} {\bibfnamefont {H.}~\bibnamefont {Zhai}},\ }\href {\doibase 10.1103/PhysRevLett.124.200504} {\bibfield  {journal} {\bibinfo  {journal} {Phys. Rev. Lett.}\ }\textbf {\bibinfo {volume} {124}},\ \bibinfo {pages} {200504} (\bibinfo {year} {2020})}\BibitemShut {NoStop}%
\bibitem [{\citenamefont {Wu}\ \emph {et~al.}(2021)\citenamefont {Wu}, \citenamefont {Zhang},\ and\ \citenamefont {Zhai}}]{We2021}%
  \BibitemOpen
  \bibfield  {author} {\bibinfo {author} {\bibfnamefont {Y.}~\bibnamefont {Wu}}, \bibinfo {author} {\bibfnamefont {P.}~\bibnamefont {Zhang}}, \ and\ \bibinfo {author} {\bibfnamefont {H.}~\bibnamefont {Zhai}},\ }\href {\doibase 10.1103/PhysRevResearch.3.L032057} {\bibfield  {journal} {\bibinfo  {journal} {Phys. Rev. Res.}\ }\textbf {\bibinfo {volume} {3}},\ \bibinfo {pages} {L032057} (\bibinfo {year} {2021})}\BibitemShut {NoStop}%
\bibitem [{\citenamefont {LaRose}\ and\ \citenamefont {Coyle}(2020)}]{larose2020robust}%
  \BibitemOpen
  \bibfield  {author} {\bibinfo {author} {\bibfnamefont {R.}~\bibnamefont {LaRose}}\ and\ \bibinfo {author} {\bibfnamefont {B.}~\bibnamefont {Coyle}},\ }\href {\doibase https://doi.org/10.1103/PhysRevA.102.032420} {\bibfield  {journal} {\bibinfo  {journal} {Physical Review A}\ }\textbf {\bibinfo {volume} {102}},\ \bibinfo {pages} {032420} (\bibinfo {year} {2020})}\BibitemShut {NoStop}%
\bibitem [{\citenamefont {Agarap}(2018)}]{agarap2018deep}%
  \BibitemOpen
  \bibfield  {author} {\bibinfo {author} {\bibfnamefont {A.~F.}\ \bibnamefont {Agarap}},\ }\href {https://arxiv.org/abs/1803.08375} {\bibfield  {journal} {\bibinfo  {journal} {arXiv preprint arXiv:1803.08375}\ } (\bibinfo {year} {2018})}\BibitemShut {NoStop}%
\bibitem [{\citenamefont {Caro}\ \emph {et~al.}(2022)\citenamefont {Caro}, \citenamefont {Huang}, \citenamefont {Cerezo}, \citenamefont {Sharma}, \citenamefont {Sornborger}, \citenamefont {Cincio},\ and\ \citenamefont {Coles}}]{caro2022generalization}%
  \BibitemOpen
  \bibfield  {author} {\bibinfo {author} {\bibfnamefont {M.~C.}\ \bibnamefont {Caro}}, \bibinfo {author} {\bibfnamefont {H.-Y.}\ \bibnamefont {Huang}}, \bibinfo {author} {\bibfnamefont {M.}~\bibnamefont {Cerezo}}, \bibinfo {author} {\bibfnamefont {K.}~\bibnamefont {Sharma}}, \bibinfo {author} {\bibfnamefont {A.}~\bibnamefont {Sornborger}}, \bibinfo {author} {\bibfnamefont {L.}~\bibnamefont {Cincio}}, \ and\ \bibinfo {author} {\bibfnamefont {P.~J.}\ \bibnamefont {Coles}},\ }\href {https://www.nature.com/articles/s41467-022-32550-3} {\bibfield  {journal} {\bibinfo  {journal} {Nature communications}\ }\textbf {\bibinfo {volume} {13}},\ \bibinfo {pages} {4919} (\bibinfo {year} {2022})}\BibitemShut {NoStop}%
\bibitem [{\citenamefont {Garcia}\ \emph {et~al.}(2022)\citenamefont {Garcia}, \citenamefont {Bu},\ and\ \citenamefont {Jaffe}}]{garcia_quantifying_2022}%
  \BibitemOpen
  \bibfield  {author} {\bibinfo {author} {\bibfnamefont {R.~J.}\ \bibnamefont {Garcia}}, \bibinfo {author} {\bibfnamefont {K.}~\bibnamefont {Bu}}, \ and\ \bibinfo {author} {\bibfnamefont {A.}~\bibnamefont {Jaffe}},\ }\href {\doibase 10.1007/JHEP03(2022)027} {\bibfield  {journal} {\bibinfo  {journal} {Journal of High Energy Physics}\ }\textbf {\bibinfo {volume} {2022}},\ \bibinfo {pages} {27} (\bibinfo {year} {2022})}\BibitemShut {NoStop}%
\bibitem [{\citenamefont {Xu}\ and\ \citenamefont {Swingle}(2020)}]{swingle2020}%
  \BibitemOpen
  \bibfield  {author} {\bibinfo {author} {\bibfnamefont {S.}~\bibnamefont {Xu}}\ and\ \bibinfo {author} {\bibfnamefont {B.}~\bibnamefont {Swingle}},\ }\href {\doibase 10.1038/s41567-019-0712-4} {\bibfield  {journal} {\bibinfo  {journal} {Nature Physics}\ }\textbf {\bibinfo {volume} {16}},\ \bibinfo {pages} {199} (\bibinfo {year} {2020})}\BibitemShut {NoStop}%
\bibitem [{\citenamefont {Moosavi-Dezfooli}\ \emph {et~al.}(2017)\citenamefont {Moosavi-Dezfooli}, \citenamefont {Fawzi}, \citenamefont {Fawzi},\ and\ \citenamefont {Frossard}}]{moosavi2017universal}%
  \BibitemOpen
  \bibfield  {author} {\bibinfo {author} {\bibfnamefont {S.-M.}\ \bibnamefont {Moosavi-Dezfooli}}, \bibinfo {author} {\bibfnamefont {A.}~\bibnamefont {Fawzi}}, \bibinfo {author} {\bibfnamefont {O.}~\bibnamefont {Fawzi}}, \ and\ \bibinfo {author} {\bibfnamefont {P.}~\bibnamefont {Frossard}},\ }in\ \href {https://openaccess.thecvf.com/content_cvpr_2017/html/Moosavi-Dezfooli_Universal_Adversarial_Perturbations_CVPR_2017_paper.html} {\emph {\bibinfo {booktitle} {Proceedings of the IEEE conference on computer vision and pattern recognition}}}\ (\bibinfo {year} {2017})\ pp.\ \bibinfo {pages} {1765--1773}\BibitemShut {NoStop}%
\bibitem [{\citenamefont {Lecun}\ \emph {et~al.}(1998)\citenamefont {Lecun}, \citenamefont {Bottou}, \citenamefont {Bengio},\ and\ \citenamefont {Haffner}}]{mnist}%
  \BibitemOpen
  \bibfield  {author} {\bibinfo {author} {\bibfnamefont {Y.}~\bibnamefont {Lecun}}, \bibinfo {author} {\bibfnamefont {L.}~\bibnamefont {Bottou}}, \bibinfo {author} {\bibfnamefont {Y.}~\bibnamefont {Bengio}}, \ and\ \bibinfo {author} {\bibfnamefont {P.}~\bibnamefont {Haffner}},\ }\href {\doibase 10.1109/5.726791} {\bibfield  {journal} {\bibinfo  {journal} {Proceedings of the IEEE}\ }\textbf {\bibinfo {volume} {86}},\ \bibinfo {pages} {2278} (\bibinfo {year} {1998})}\BibitemShut {NoStop}%
\bibitem [{\citenamefont {Dowling}\ \emph {et~al.}(2025)\citenamefont {Dowling}, \citenamefont {Modi},\ and\ \citenamefont {White}}]{Dowlin2024LOE-OSRE}%
  \BibitemOpen
  \bibfield  {author} {\bibinfo {author} {\bibfnamefont {N.}~\bibnamefont {Dowling}}, \bibinfo {author} {\bibfnamefont {K.}~\bibnamefont {Modi}}, \ and\ \bibinfo {author} {\bibfnamefont {G.~A.~L.}\ \bibnamefont {White}},\ }\href@noop {} {\enquote {\bibinfo {title} {Bridging entanglement and magic resources through operator space},}\ } (\bibinfo {year} {2025}),\ \Eprint {http://arxiv.org/abs/2501.18679} {arXiv:2501.18679 [quant-ph]} \BibitemShut {NoStop}%
\bibitem [{\citenamefont {Kingma}\ and\ \citenamefont {Ba}(2014)}]{kingma2014adam}%
  \BibitemOpen
  \bibfield  {author} {\bibinfo {author} {\bibfnamefont {D.~P.}\ \bibnamefont {Kingma}}\ and\ \bibinfo {author} {\bibfnamefont {J.}~\bibnamefont {Ba}},\ }\href {https://arxiv.org/abs/1412.6980} {\bibfield  {journal} {\bibinfo  {journal} {arXiv preprint arXiv:1412.6980}\ } (\bibinfo {year} {2014})}\BibitemShut {NoStop}%
\bibitem [{\citenamefont {Gray}(2018)}]{gray_quimb_2018}%
  \BibitemOpen
  \bibfield  {author} {\bibinfo {author} {\bibfnamefont {J.}~\bibnamefont {Gray}},\ }\href {\doibase 10.21105/joss.00819} {\bibfield  {journal} {\bibinfo  {journal} {Journal of Open Source Software}\ }\textbf {\bibinfo {volume} {3}},\ \bibinfo {pages} {819} (\bibinfo {year} {2018})}\BibitemShut {NoStop}%
\bibitem [{\citenamefont {Jolliffe}\ and\ \citenamefont {Cadima}(2016)}]{jolliffe2016principal}%
  \BibitemOpen
  \bibfield  {author} {\bibinfo {author} {\bibfnamefont {I.~T.}\ \bibnamefont {Jolliffe}}\ and\ \bibinfo {author} {\bibfnamefont {J.}~\bibnamefont {Cadima}},\ }\href@noop {} {\bibfield  {journal} {\bibinfo  {journal} {Philosophical transactions of the royal society A: Mathematical, Physical and Engineering Sciences}\ }\textbf {\bibinfo {volume} {374}},\ \bibinfo {pages} {20150202} (\bibinfo {year} {2016})}\BibitemShut {NoStop}%
\bibitem [{\citenamefont {Pesah}\ \emph {et~al.}(2021)\citenamefont {Pesah}, \citenamefont {Cerezo}, \citenamefont {Wang}, \citenamefont {Volkoff}, \citenamefont {Sornborger},\ and\ \citenamefont {Coles}}]{pesah2021absence}%
  \BibitemOpen
  \bibfield  {author} {\bibinfo {author} {\bibfnamefont {A.}~\bibnamefont {Pesah}}, \bibinfo {author} {\bibfnamefont {M.}~\bibnamefont {Cerezo}}, \bibinfo {author} {\bibfnamefont {S.}~\bibnamefont {Wang}}, \bibinfo {author} {\bibfnamefont {T.}~\bibnamefont {Volkoff}}, \bibinfo {author} {\bibfnamefont {A.~T.}\ \bibnamefont {Sornborger}}, \ and\ \bibinfo {author} {\bibfnamefont {P.~J.}\ \bibnamefont {Coles}},\ }\href {\doibase https://doi.org/10.1103/PhysRevX.11.041011} {\bibfield  {journal} {\bibinfo  {journal} {Physical Review X}\ }\textbf {\bibinfo {volume} {11}},\ \bibinfo {pages} {041011} (\bibinfo {year} {2021})}\BibitemShut {NoStop}%
\bibitem [{\citenamefont {Schatzki}\ \emph {et~al.}(2022)\citenamefont {Schatzki}, \citenamefont {Larocca}, \citenamefont {Sauvage},\ and\ \citenamefont {Cerezo}}]{schatzki2022theoretical}%
  \BibitemOpen
  \bibfield  {author} {\bibinfo {author} {\bibfnamefont {L.}~\bibnamefont {Schatzki}}, \bibinfo {author} {\bibfnamefont {M.}~\bibnamefont {Larocca}}, \bibinfo {author} {\bibfnamefont {F.}~\bibnamefont {Sauvage}}, \ and\ \bibinfo {author} {\bibfnamefont {M.}~\bibnamefont {Cerezo}},\ }\href {https://arxiv.org/abs/2210.09974} {\bibfield  {journal} {\bibinfo  {journal} {arXiv preprint arXiv:2210.09974}\ } (\bibinfo {year} {2022})}\BibitemShut {NoStop}%
\bibitem [{\citenamefont {Wang}\ \emph {et~al.}(2023)\citenamefont {Wang}, \citenamefont {Qi}, \citenamefont {Ferrie},\ and\ \citenamefont {Dong}}]{wang2023trainability}%
  \BibitemOpen
  \bibfield  {author} {\bibinfo {author} {\bibfnamefont {Y.}~\bibnamefont {Wang}}, \bibinfo {author} {\bibfnamefont {B.}~\bibnamefont {Qi}}, \bibinfo {author} {\bibfnamefont {C.}~\bibnamefont {Ferrie}}, \ and\ \bibinfo {author} {\bibfnamefont {D.}~\bibnamefont {Dong}},\ }\href {https://arxiv.org/abs/2302.06858} {\bibfield  {journal} {\bibinfo  {journal} {arXiv preprint arXiv:2302.06858}\ } (\bibinfo {year} {2023})}\BibitemShut {NoStop}%
\bibitem [{\citenamefont {West}\ \emph {et~al.}(2023{\natexlab{d}})\citenamefont {West}, \citenamefont {Heredge}, \citenamefont {Sevior},\ and\ \citenamefont {Usman}}]{west2023provably}%
  \BibitemOpen
  \bibfield  {author} {\bibinfo {author} {\bibfnamefont {M.~T.}\ \bibnamefont {West}}, \bibinfo {author} {\bibfnamefont {J.}~\bibnamefont {Heredge}}, \bibinfo {author} {\bibfnamefont {M.}~\bibnamefont {Sevior}}, \ and\ \bibinfo {author} {\bibfnamefont {M.}~\bibnamefont {Usman}},\ }\href {https://arxiv.org/abs/2311.05873} {\bibfield  {journal} {\bibinfo  {journal} {arXiv preprint arXiv:2311.05873}\ } (\bibinfo {year} {2023}{\natexlab{d}})}\BibitemShut {NoStop}%
\bibitem [{\citenamefont {Bennett}\ and\ \citenamefont {Brassard}(2014)}]{bennett2014quantum}%
  \BibitemOpen
  \bibfield  {author} {\bibinfo {author} {\bibfnamefont {C.~H.}\ \bibnamefont {Bennett}}\ and\ \bibinfo {author} {\bibfnamefont {G.}~\bibnamefont {Brassard}},\ }\href {\doibase https://doi.org/10.1016/j.tcs.2014.05.025} {\bibfield  {journal} {\bibinfo  {journal} {Theoretical computer science}\ }\textbf {\bibinfo {volume} {560}},\ \bibinfo {pages} {7} (\bibinfo {year} {2014})}\BibitemShut {NoStop}%
\bibitem [{\citenamefont {Gong}\ \emph {et~al.}(2024)\citenamefont {Gong}, \citenamefont {Yuan}, \citenamefont {Li},\ and\ \citenamefont {Deng}}]{Gong2024}%
  \BibitemOpen
  \bibfield  {author} {\bibinfo {author} {\bibfnamefont {W.}~\bibnamefont {Gong}}, \bibinfo {author} {\bibfnamefont {D.}~\bibnamefont {Yuan}}, \bibinfo {author} {\bibfnamefont {W.}~\bibnamefont {Li}}, \ and\ \bibinfo {author} {\bibfnamefont {D.-L.}\ \bibnamefont {Deng}},\ }\href {\doibase 10.1103/PhysRevResearch.6.023020} {\bibfield  {journal} {\bibinfo  {journal} {Phys. Rev. Res.}\ }\textbf {\bibinfo {volume} {6}},\ \bibinfo {pages} {023020} (\bibinfo {year} {2024})}\BibitemShut {NoStop}%
\bibitem [{\citenamefont {Wilde}(2013)}]{Wilde_2013}%
  \BibitemOpen
  \bibfield  {author} {\bibinfo {author} {\bibfnamefont {M.~M.}\ \bibnamefont {Wilde}},\ }\href {\doibase 10.1017/CBO9781139525343} {\emph {\bibinfo {title} {Quantum Information Theory}}}\ (\bibinfo  {publisher} {Cambridge University Press},\ \bibinfo {year} {2013})\BibitemShut {NoStop}%
\bibitem [{\citenamefont {Nielsen}\ and\ \citenamefont {Chuang}(2010)}]{nielsen2010quantum}%
  \BibitemOpen
  \bibfield  {author} {\bibinfo {author} {\bibfnamefont {M.~A.}\ \bibnamefont {Nielsen}}\ and\ \bibinfo {author} {\bibfnamefont {I.~L.}\ \bibnamefont {Chuang}},\ }\href {https://www.cambridge.org/highereducation/books/quantum-computation-and-quantum-information/01E10196D0A682A6AEFFEA52D53BE9AE#overview} {\emph {\bibinfo {title} {Quantum computation and quantum information}}}\ (\bibinfo  {publisher} {Cambridge university press},\ \bibinfo {year} {2010})\BibitemShut {NoStop}%
\bibitem [{\citenamefont {Zhou}\ and\ \citenamefont {Ying}(2017)}]{zhou2017differential}%
  \BibitemOpen
  \bibfield  {author} {\bibinfo {author} {\bibfnamefont {L.}~\bibnamefont {Zhou}}\ and\ \bibinfo {author} {\bibfnamefont {M.}~\bibnamefont {Ying}},\ }in\ \href {https://ieeexplore.ieee.org/document/8049724} {\emph {\bibinfo {booktitle} {2017 IEEE 30th Computer Security Foundations Symposium (CSF)}}}\ (\bibinfo {organization} {IEEE},\ \bibinfo {year} {2017})\ pp.\ \bibinfo {pages} {249--262}\BibitemShut {NoStop}%
\bibitem [{\citenamefont {Zhang}\ \emph {et~al.}(2019)\citenamefont {Zhang}, \citenamefont {Huang},\ and\ \citenamefont {Chen}}]{Zhang2019}%
  \BibitemOpen
  \bibfield  {author} {\bibinfo {author} {\bibfnamefont {Y.-L.}\ \bibnamefont {Zhang}}, \bibinfo {author} {\bibfnamefont {Y.}~\bibnamefont {Huang}}, \ and\ \bibinfo {author} {\bibfnamefont {X.}~\bibnamefont {Chen}},\ }\href {\doibase 10.1103/PhysRevB.99.014303} {\bibfield  {journal} {\bibinfo  {journal} {Phys. Rev. B}\ }\textbf {\bibinfo {volume} {99}},\ \bibinfo {pages} {014303} (\bibinfo {year} {2019})}\BibitemShut {NoStop}%
\bibitem [{\citenamefont {Schuster}\ and\ \citenamefont {Yao}(2023)}]{PhysRevLett.131.160402}%
  \BibitemOpen
  \bibfield  {author} {\bibinfo {author} {\bibfnamefont {T.}~\bibnamefont {Schuster}}\ and\ \bibinfo {author} {\bibfnamefont {N.~Y.}\ \bibnamefont {Yao}},\ }\href {\doibase 10.1103/PhysRevLett.131.160402} {\bibfield  {journal} {\bibinfo  {journal} {Phys. Rev. Lett.}\ }\textbf {\bibinfo {volume} {131}},\ \bibinfo {pages} {160402} (\bibinfo {year} {2023})}\BibitemShut {NoStop}%
\bibitem [{\citenamefont {Mele}\ \emph {et~al.}(2024)\citenamefont {Mele}, \citenamefont {Angrisani}, \citenamefont {Ghosh}, \citenamefont {Khatri}, \citenamefont {Eisert}, \citenamefont {França},\ and\ \citenamefont {Quek}}]{mele2024noiseinduced}%
  \BibitemOpen
  \bibfield  {author} {\bibinfo {author} {\bibfnamefont {A.~A.}\ \bibnamefont {Mele}}, \bibinfo {author} {\bibfnamefont {A.}~\bibnamefont {Angrisani}}, \bibinfo {author} {\bibfnamefont {S.}~\bibnamefont {Ghosh}}, \bibinfo {author} {\bibfnamefont {S.}~\bibnamefont {Khatri}}, \bibinfo {author} {\bibfnamefont {J.}~\bibnamefont {Eisert}}, \bibinfo {author} {\bibfnamefont {D.~S.}\ \bibnamefont {França}}, \ and\ \bibinfo {author} {\bibfnamefont {Y.}~\bibnamefont {Quek}},\ }\href {https://arxiv.org/abs/2403.13927} {\bibfield  {journal} {\bibinfo  {journal} {arXiv preprint arXiv:2403.13927}\ } (\bibinfo {year} {2024})}\BibitemShut {NoStop}%
\bibitem [{\citenamefont {Marrero}\ \emph {et~al.}(2021)\citenamefont {Marrero}, \citenamefont {Kieferov{\'a}},\ and\ \citenamefont {Wiebe}}]{marrero2021entanglement}%
  \BibitemOpen
  \bibfield  {author} {\bibinfo {author} {\bibfnamefont {C.~O.}\ \bibnamefont {Marrero}}, \bibinfo {author} {\bibfnamefont {M.}~\bibnamefont {Kieferov{\'a}}}, \ and\ \bibinfo {author} {\bibfnamefont {N.}~\bibnamefont {Wiebe}},\ }\href {\doibase 10.1103/PRXQuantum.2.040316} {\bibfield  {journal} {\bibinfo  {journal} {PRX Quantum}\ }\textbf {\bibinfo {volume} {2}},\ \bibinfo {pages} {040316} (\bibinfo {year} {2021})}\BibitemShut {NoStop}%
\bibitem [{\citenamefont {Holmes}\ \emph {et~al.}(2021)\citenamefont {Holmes}, \citenamefont {Arrasmith}, \citenamefont {Yan}, \citenamefont {Coles}, \citenamefont {Albrecht},\ and\ \citenamefont {Sornborger}}]{Holmes2021scrambling}%
  \BibitemOpen
  \bibfield  {author} {\bibinfo {author} {\bibfnamefont {Z.}~\bibnamefont {Holmes}}, \bibinfo {author} {\bibfnamefont {A.}~\bibnamefont {Arrasmith}}, \bibinfo {author} {\bibfnamefont {B.}~\bibnamefont {Yan}}, \bibinfo {author} {\bibfnamefont {P.~J.}\ \bibnamefont {Coles}}, \bibinfo {author} {\bibfnamefont {A.}~\bibnamefont {Albrecht}}, \ and\ \bibinfo {author} {\bibfnamefont {A.~T.}\ \bibnamefont {Sornborger}},\ }\href {\doibase 10.1103/PhysRevLett.126.190501} {\bibfield  {journal} {\bibinfo  {journal} {Phys. Rev. Lett.}\ }\textbf {\bibinfo {volume} {126}},\ \bibinfo {pages} {190501} (\bibinfo {year} {2021})}\BibitemShut {NoStop}%
\bibitem [{\citenamefont {Mele}(2023)}]{mele2023introduction}%
  \BibitemOpen
  \bibfield  {author} {\bibinfo {author} {\bibfnamefont {A.~A.}\ \bibnamefont {Mele}},\ }\href {https://arxiv.org/abs/2307.08956} {\bibfield  {journal} {\bibinfo  {journal} {arXiv preprint arXiv:2307.08956}\ } (\bibinfo {year} {2023})}\BibitemShut {NoStop}%
\bibitem [{\citenamefont {Roberts}\ and\ \citenamefont {Yoshida}(2017)}]{Roberts2017-en}%
  \BibitemOpen
  \bibfield  {author} {\bibinfo {author} {\bibfnamefont {D.~A.}\ \bibnamefont {Roberts}}\ and\ \bibinfo {author} {\bibfnamefont {B.}~\bibnamefont {Yoshida}},\ }\href {\doibase 10.1007/JHEP04(2017)121} {\bibfield  {journal} {\bibinfo  {journal} {J. High Energy Phys.}\ }\textbf {\bibinfo {volume} {2017}},\ \bibinfo {pages} {121} (\bibinfo {year} {2017})}\BibitemShut {NoStop}%
\bibitem [{\citenamefont {Garc{\'\i}a-Mart{\'\i}n}\ \emph {et~al.}(2023)\citenamefont {Garc{\'\i}a-Mart{\'\i}n}, \citenamefont {Larocca},\ and\ \citenamefont {Cerezo}}]{garcia2023deep}%
  \BibitemOpen
  \bibfield  {author} {\bibinfo {author} {\bibfnamefont {D.}~\bibnamefont {Garc{\'\i}a-Mart{\'\i}n}}, \bibinfo {author} {\bibfnamefont {M.}~\bibnamefont {Larocca}}, \ and\ \bibinfo {author} {\bibfnamefont {M.}~\bibnamefont {Cerezo}},\ }\href {https://arxiv.org/abs/2305.09957} {\bibfield  {journal} {\bibinfo  {journal} {arXiv preprint arXiv:2305.09957}\ } (\bibinfo {year} {2023})}\BibitemShut {NoStop}%
\bibitem [{\citenamefont {Ng}\ and\ \citenamefont {Geller}(1969)}]{ng1969table}%
  \BibitemOpen
  \bibfield  {author} {\bibinfo {author} {\bibfnamefont {E.~W.}\ \bibnamefont {Ng}}\ and\ \bibinfo {author} {\bibfnamefont {M.}~\bibnamefont {Geller}},\ }\href {https://nvlpubs.nist.gov/nistpubs/jres/73b/jresv73bn1p1_a1b.pdf} {\bibfield  {journal} {\bibinfo  {journal} {Journal of Research of the National Bureau of Standards - B. Mathematical Sciences}\ }\textbf {\bibinfo {volume} {73B}} (\bibinfo {year} {1969})}\BibitemShut {NoStop}%
\bibitem [{\citenamefont {Orús}(2014)}]{orus_practical_2014}%
  \BibitemOpen
  \bibfield  {author} {\bibinfo {author} {\bibfnamefont {R.}~\bibnamefont {Orús}},\ }\href {\doibase 10.1016/j.aop.2014.06.013} {\bibfield  {journal} {\bibinfo  {journal} {Annals of Physics}\ }\textbf {\bibinfo {volume} {349}},\ \bibinfo {pages} {117} (\bibinfo {year} {2014})}\BibitemShut {NoStop}%
\bibitem [{\citenamefont {Perez-Garcia}\ \emph {et~al.}(2006)\citenamefont {Perez-Garcia}, \citenamefont {Verstraete}, \citenamefont {Wolf},\ and\ \citenamefont {Cirac}}]{perez2006matrix}%
  \BibitemOpen
  \bibfield  {author} {\bibinfo {author} {\bibfnamefont {D.}~\bibnamefont {Perez-Garcia}}, \bibinfo {author} {\bibfnamefont {F.}~\bibnamefont {Verstraete}}, \bibinfo {author} {\bibfnamefont {M.~M.}\ \bibnamefont {Wolf}}, \ and\ \bibinfo {author} {\bibfnamefont {J.~I.}\ \bibnamefont {Cirac}},\ }\href {https://arxiv.org/abs/quant-ph/0608197} {\bibfield  {journal} {\bibinfo  {journal} {arXiv preprint quant-ph/0608197}\ } (\bibinfo {year} {2006})}\BibitemShut {NoStop}%
\end{thebibliography}

%



\end{document}